\NeedsTeXFormat{LaTeX2e}[1994/12/01] \raggedbottom

\documentclass[journal]{IEEEtran}

\usepackage{tipa}
\usepackage{amssymb}
\usepackage{bbm}
\usepackage{amsfonts}
\usepackage{mathrsfs}
\usepackage{amsmath,amsthm}
\usepackage{amsfonts,amssymb}
\usepackage{graphicx}
\usepackage{epsf}
\usepackage{multicol}
\usepackage{delarray}
\usepackage{amsopn,epsfig,psfrag}
\usepackage{float}
\usepackage{subfigure}

\allowdisplaybreaks


\newtheorem{theorem}{Theorem}

\newtheorem{definition}{Definition}
\newtheorem{lemma}{Lemma}

\newtheorem{remark}{Remark}

\title{Perfectly Controllable Multi-Agent Networks
\thanks{This work was supported by the National Natural Science Foundation of China under Grants 61374062, 61603288, 61573203 and 61673013 and by the Natural Science Foundation of Shandong Province for Distinguished Young Scholars under Grant JQ201419.}}

\author{Shaobin Cao
\thanks{Corresponding author: Zhijian Ji (e-mail: jizhijian@qdu.edu.cn)}
\thanks{Shaobin Cao, Zhijian Ji and Haisheng Yu are with the College of Automation and Electrical Engineering, Qingdao University, Qingdao, 266071,
China. E-mails: {\it csbin2016@163.com} (S. Cao); {\it yhsh@qdu.edu.cn} (H. Yu).}
\quad Zhijian Ji \quad Hai Lin  \quad Haisheng Yu
\thanks{Hai Lin is with the Department of Electrical Engineering, University
of Notre Dame, IN 46556 USA. E-mail: {\it
hlin1@nd.edu} (H. Lin).}
}


\begin{document}

\maketitle

\begin{abstract}
This note investigates how to design topology structures to ensure the controllability of multi-agent networks (MASs) under any selection of leaders. We put forward a concept of perfect controllability, which means that a multi-agent system is controllable with no matter how the leaders are chosen. In this situation, both the number and the locations of leader agents are arbitrary. A necessary and sufficient condition is derived for the perfect controllability. Moreover,
a step-by-step design procedure is proposed by which topologies are constructed and are proved to be perfectly controllable. The principle of the proposed design method is interpreted by schematic diagrams along with the corresponding topology structures from simple to complex. We show that the results are valid for any number and any location of leaders. Both the construction process and the corresponding topology structures are clearly outlined.
\end{abstract}

\begin{keywords}
Controllability, local interactions, leader-follower structure, multi-agent systems
\end{keywords}


\section{Introduction}

Recently, distributed coordination and control of networks of dynamic agents has attracted constant and extensive attention (see e.g. \cite{XiaoFTAC2018, MaJTIE2017,LTianAccess2018,ZhenMW2018,CaiN2017Complexity,XiAccess2018,LiuJiCyb2017,QiqingITC,NCaiJSSC,YuanGTAC2017,NiuLiuCyber2018}). Among them, the study of graphical characterizations of multi-agent controllability has received special attention (see e.g. \cite{JiWCon,TanOn,JiLL,JiYA,LiuJCon,LiuCRNC2012}). It is worth mentioning that the results along this line are usually established only with respect to the pre-selected leaders; and the restriction on leader agents is a prerequisite for reaching a conclusion. Beyond that, the results on a straightforward identification of controllable topologies are very few, let alone how to design controllable graphs directly. What is commonly seen is the construction of uncontrollable topologies, which actually forms a necessary condition on the determination of controllability for a multi-agent system. Our efforts in the note, however, are free from both of these two constraints, i.e., the requirement of pre-selected leaders and the direct design of uncontrollable topologies rather than controllable ones.

In subsequent sections, we show step by step that it is really feasible to design controllable topology graphs in a direct way regardless of the selection of leaders' locations and the number of leader agents. Here graphs with this controllability property of any leaders are called perfectly controllable to distinguish it from the usual notion of controllability with respect to a particular group of leaders. To this end, we came up with a concept of schematic graphs, and the design process is illustrated by combing the schematic graphs with topology graphs. The schematic diagrams help to explain in a particular way the mechanism of the proposed principle upon which the topology structures are desiged. The use of schematics is a feature of this design method.

Now, an increasing number of scholars from different angles are making elaborate effort on the investigation of
graphical characterizations for multi-agent controllability. Recent results in this regard include, e.g., \cite{JiYA,LiuboIET2017,HsuLetter2016,GuanWangletter2018,ChaoIMA,ZhaoGuan2017,HsuIJRNC2017,LuINJRNC2018}. One of the starting points of graphical characterizations is to uncover graphical interpretations for algebraic conditions on controllability \cite{JiWCon,TanOn}. The related results were mainly obtained by graph partitioning method \cite{JLYTAC2015,AgGhAuto2017,JiLL,YWETac2016,GuanRNC2017}; the concept of controllability destructive nodes \cite{JiYA};  the balancing set approach \cite{MHMTAC2018}; the eigenvector based technique \cite{GGCo,LiuJCon,JiWCon}, and the notion of graph controllability classes \cite{GGCoTAC2015}, etc. Reading controllability from the level of graph theory is quite involved even for simple topology structures such as path and tree graphs \cite{ParlanNTAC2012,LiuJCon,JiLL}.

The contribution of the note is twofold. Firstly, we find a kind of interesting topology which is controllable, while independent of the leaders' selection. An algebraic necessary and sufficient condition is developed for determining
this kind of controllable topology graphs. Secondly, a systematic design procedure is proposed by which a large number of perfectly controllable graphs can be constructed. To illustrate this design process more clearly, a kind of schematic diagrams is employed creatively in the description. It can be imagined that it is not possible to exhaust all  perfectly controllable graphs by one construction method alone. Even so, some of the rules followed in the construction are verified to be effective, at least for the situations mentioned in the note. It can be expected that the rules are instructive in dealing with more situations.

The note is organized as follows. Section II contains preliminary notions and problem setup. In Section III, main results are reported. Section IV concludes the note.

\section{Preliminaries}

\subsection{Graph theory}
Let $G = (V,E,A)$ represent a graph. A MAS consists of $n$ single integrator dynamics agents, which are labeled by the vertex set $V = \{ {v_1},{v_2}, \cdots ,{v_n}\} $. The edge set $E = \{ ({v_i},{v_j}):{v_i},{v_j} \in V\}$ denotes the connection between two nodes, if agent $j$ can directly get information from agent $i$, i.e., $({v_i},{v_j})\in E$, the edge is an ordered pair with the nodes in set $V$. When $({v_i},{v_j}) \in E \Leftrightarrow ({v_j},{v_i}) \in E$, graph $G$ is called an undirected graph, otherwise directed graph. Assuming there is no self-loop in the graph, i.e., all edges of $E$ are in accord with $i \ne j$. The $n\times n$ adjacency matrix is expressed as $A = \{ {a_{ij}}\} $. ${a_{ij}} \in \{ 0, - 1\}$, if ${v_i},{v_j} \in V({v_i},{v_j})\in E$, agent $v_{i}$ is called a neighbor of agent $v_{j}$. If ${a_{ij}} =  1$, agent $j$ is called a neighbor of agent $i$, $e_{ij}\in E_{-}$, else ${v_i},{v_j} \in V({v_i},{v_j})\notin E$, then ${a_{ij}} = 0$, agent $j$ is not a neighbor of agent $i$. In particular, any ${a_{ii}} = 0$ means that each diagonal elements of $A$ is $0$. ${N_i} = \{ {v_j} \in V:({v_i},{v_j}) \in E\} $ represents the neighborhood set of vertex $i$. The number of neighbors is the degree of the node. The degree of vertex $v_{i}$ is defined as ${d_i} = \sum\nolimits_{j = 1}^n {\left| {{a_{ij}}} \right|} $. The valency matrix of a graph $G$ is a diagonal matrix which is defined as $\triangle(G) = diag({d_i}) \in {\mathbb{R}^{n \times n}}$. A graph is called a complete graph if any two nodes are neighbors. The Laplacian matrix of graph $G$ is given by
\begin{equation}
L(G) = \triangle(G) - A(G)
\end{equation}
 If graph $G$ is given precisely, it can be abbreviated as $L$. If graph $G $ is an undirected graph, the Laplacian matrix $L$ and the adjacency matrix $A$ are symmetrical.
Hence the entries of $L$ can be written as
\begin{equation}
{[L]_{ij}} = \left\{ {\begin{array}{*{20}{c}}
{\sum\limits_{j \in {N_{i}}} {\left| {{a_{ij}}} \right|} }&{}&{i = j}\\
{ - {a_{ij}}}&{}&{i \ne j,j \in {N_{i}}}\\
0&{}&{{\rm{otherwise}}}
\end{array}} \right.
\end{equation}

\subsection{Problem formulation and a preliminary result}
Consider a single integrator multi-agent system with dynamics of each agent described by
\begin{equation}
{{\dot x}_i} = {u_i},\quad i \in \mathcal{I}_{n+l},\label{sys1}
\end{equation}
where $\mathcal{I}_{n+l}\mathop  = \limits^\Delta\{1,\ldots, n+l\},$ $n$ and $l$ are the numbers of followers and leaders, respectively. The information flow between agents is modeled by an undirected graph $\mathcal{G}.$ Its Laplacian is defined by $L=D-A,$ where $D$ is the diagonal degree matrix and $A$ is the adjacency matrix.
For each agent, the neighbor-based rule is
\begin{equation}
{u_i} =  \sum\limits_{{j} \in {N_i}} {({x_j} - {x_i})}. \label{pro1}
\end{equation}
Then the closed-loop system of (\ref{sys1}) and (\ref{pro1}) reads
\begin{equation}
\dot x =  - Lx, \label{orgisys1}
\end{equation}
where $x$ is the stack vector of all $x_i.$

With the last $l$ agents being selected as leaders, $L$ can be partitioned as
$
L = \left[ {\begin{array}{*{20}{c}}
   {{L_f}} & {{L_{fl}}}  \\
   {L_{lf}} & {L_l}  \\
 \end{array} } \right],
$
where ${L_f}$ and ${L_l}$ correspond to the indices of followers and leaders, respectively. Denote by $x_f$ and $x_l$ the stack vectors of all followers' and leaders' states. Then ${x^T} = [x_f^T,x_l^T].$ It follows from (\ref{orgisys1}) that the followers' dynamics is
\begin{equation}
\dot x_f =  - {L_f}x_f - {L_{fl}}{x_l}. \label{dynfollower}
\end{equation}
Since the followers' dynamics is captured by (\ref{dynfollower}), a multi-agent system with dynamics (\ref{sys1}) is controllable if and only if so is (\ref{dynfollower}). The following lemma is for the determination of controllability in association with the selection of leaders.

\begin{lemma}\cite{JiWCon}
The multi-agent system with dynamics (\ref{sys1}) is controllable if and only if there is no eigenvector of $\mathcal{G}$ taking $0$ on the elements corresponding to the leaders.\label{prelemm}
\end{lemma}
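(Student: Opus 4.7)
The plan is to invoke the Popov--Belevitch--Hautus (PBH) eigenvector test on the follower subsystem $(L_f,L_{fl})$ and then translate the failure of PBH into an eigenvector condition on the full Laplacian $L$. The crucial structural fact I will use is that the graph is undirected, so $L$ is symmetric and in particular $L_{fl}=L_{lf}^{T}$.

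First, I would record the observation, already noted in the excerpt, that controllability of the full multi-agent system (\ref{sys1}) reduces to controllability of the pair $(L_f,L_{fl})$ arising from (\ref{dynfollower}). By PBH, this pair fails to be controllable if and only if there exist a scalar $\lambda$ and a nonzero vector $v$ such that $L_f v=\lambda v$ and $v^{T}L_{fl}=0$. Because $L_{fl}=L_{lf}^{T}$, the condition $v^{T}L_{fl}=0$ is equivalent to $L_{lf}v=0$.

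Next, I would produce the bridge between an eigenvector $v$ of $L_f$ satisfying $L_{lf}v=0$ and an eigenvector of the full Laplacian $L$ that vanishes on the leader coordinates. For the forward direction, given such a $v$, pad it with zeros to form $w=\bigl[\,v^{T},\,0^{T}\,\bigr]^{T}\in\mathbb{R}^{n+l}$, and compute with the block form of $L$:
\begin{equation*}
Lw=\begin{bmatrix} L_f v+L_{fl}\cdot 0 \\ L_{lf}v+L_l\cdot 0 \end{bmatrix}=\begin{bmatrix}\lambda v\\ 0\end{bmatrix}=\lambda w,
\end{equation*}
so $w$ is an eigenvector of $L$ that is zero on exactly the leader entries. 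Conversely, if $w=[\,w_f^{T},\,w_l^{T}\,]^{T}$ is an eigenvector of $L$ with $w_l=0$, then reading the block equation $Lw=\lambda w$ row-wise yields $L_f w_f=\lambda w_f$ and $L_{lf}w_f=0$, hence $w_f$ is an eigenvector of $L_f$ with $w_f^{T}L_{fl}=(L_{lf}w_f)^{T}=0$, so PBH fails for $(L_f,L_{fl})$.

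Combining the two directions establishes the stated equivalence. I do not expect any substantive obstacle: the only step requiring a little care is the symmetry-based conversion between $v^{T}L_{fl}=0$ and $L_{lf}v=0$, which is where the undirectedness assumption enters in an essential way; otherwise the argument is a direct block-matrix computation coupled with the standard PBH criterion.
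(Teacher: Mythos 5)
Your argument is correct: the reduction to the pair $(L_f,L_{fl})$, the PBH eigenvector test, the use of symmetry to pass from $v^{T}L_{fl}=0$ to $L_{lf}v=0$, and the zero-padding correspondence between eigenvectors of $L_f$ annihilated by $L_{lf}$ and eigenvectors of $L$ vanishing on the leader coordinates are all sound (the only unstated triviality is that $w_f\neq 0$ in the converse direction). Note that the paper itself gives no proof of this lemma --- it is quoted from the cited reference \cite{JiWCon} --- and your PBH-based block computation is precisely the standard argument used there, so there is nothing to reconcile.
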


\begin{definition}
A multi-agent system is said to be perfectly controllable if it is controllable under any selection of leaders. Here both the number and the locations of leaders are arbitrary.
\end{definition}

In what follows, we don't discriminate the controllability of a multi-agent system with that of its corresponding interconnection graph.

\section{Main Results}

The goal of this section is to construct communication graphs under which the multi-agent system is perfectly controllable. We first give a necessary and sufficient condition for the perfect controllability.

\subsection{Algebraic condition}

\begin{theorem}\label{The1Alge}
The multi-agent system (\ref{sys1}) along with protocol (\ref{pro1}) is perfectly controllable if and only if all the eigenvalues of the Laplacian matrix $L$ of the associated interconnection graph are distinct and the eigenvector of each eigenvalue of $L$ has no zero elements.
\end{theorem}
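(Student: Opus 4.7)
The plan is to invoke Lemma \ref{prelemm} as the bridge between controllability and eigenvector structure, then handle the two directions separately by short linear-algebraic arguments. Throughout I would exploit the fact that, because $\mathcal{G}$ is undirected, $L$ is real symmetric and hence diagonalizable with an orthonormal eigenbasis; this makes the notion of ``the eigenvector of an eigenvalue'' behave well.

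For sufficiency I would assume all eigenvalues of $L$ are distinct and every eigenvector of $L$ has no zero coordinate. Distinctness forces each eigenspace to be one-dimensional, so up to scalar multiples $L$ has exactly $n$ eigenvectors, and by hypothesis none of them has a vanishing entry. Consequently, for every nonempty leader set $S$ no eigenvector vanishes on all coordinates indexed by $S$. Lemma \ref{prelemm} then gives controllability for this arbitrary $S$, which is precisely perfect controllability.

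For necessity I would argue by contrapositive: if either spectral condition fails, I will exhibit a single-agent leader choice that breaks controllability via Lemma \ref{prelemm}. If some eigenvector $v$ of $L$ has a zero entry at position $i$, declaring agent $i$ the sole leader already yields an eigenvector vanishing on the leader set. If instead some eigenvalue $\lambda$ has geometric multiplicity at least two, I would fix an arbitrary index $i$ and observe that the evaluation functional $w \mapsto w_i$ restricted to the $\lambda$-eigenspace has nontrivial kernel (domain dimension at least $2$, codomain dimension $1$); any nonzero element of that kernel is an eigenvector with $i$-th coordinate zero, and picking $\{i\}$ as the leader set again violates Lemma \ref{prelemm}.

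No serious technical obstacle is anticipated; the theorem is essentially a translation of Lemma \ref{prelemm} from the single-leader-set setting into the uniform-over-all-leader-sets setting. The only subtlety I would be careful to highlight in the write-up is that single-leader configurations already detect every possible obstruction, so perfect controllability is no stronger than controllability with each individual agent taken as the sole leader; this observation also explains why distinctness of the spectrum is not an extraneous hypothesis but a genuine consequence, because within any eigenspace of dimension at least two one can always combine two independent eigenvectors to annihilate any prescribed coordinate.
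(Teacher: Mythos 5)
Your proposal is correct and follows essentially the same route as the paper: both directions hinge on Lemma \ref{prelemm}, and the necessity argument rests on the same observation that a repeated eigenvalue yields a two-dimensional (or larger) eigenspace in which some nonzero eigenvector can be made to vanish at a prescribed coordinate. Your phrasing via the kernel of the evaluation functional and the reduction to single-leader sets is a slightly cleaner packaging of the paper's linear-combination argument, but it is not a different proof.
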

\begin{proof}
Necessity: Suppose by contradiction that there is an eigenvalue $\lambda$ of $L$ with its algebraic multiplicity $n_\lambda$ greater than $1,$ i.e., $n_\lambda>1.$ Since the interconnection graph is undirected, $L$ is symmetric and accordingly there are $n_\lambda$ linearly independent eigenvectors of $\lambda.$ Let $\eta_1,\eta_2$ be any two eigenvectors of $\lambda.$ There is a linear combination $\eta=\alpha_1\eta_1+\alpha_2\eta_2$ of $\eta_1$ and $\eta_2$ so that $\eta$ has at least one zero element, where $\alpha_1, \alpha_2$ are real numbers. Note that $\eta$ is also an eigenvector of $\lambda$ since so is for both $\eta_1$ and $\eta_2.$ Then, by Lemma \ref{prelemm}, the system is uncontrollable if leaders are selected from the agents corresponding to the zero elements of $\eta,$ which is a contradiction to the perfect controllability. The above discussion also yields that any eigenvector of $L$ has no zero elements because the system is perfectly controllable.

Sufficiency: Since each eigenvalue of $L$ is different from all the others, and the eigenvector of each eigenvalue has no zero elements, the system is controllable with any selection of leaders. Otherwise, by Lemma \ref{prelemm}, the uncontrollability of the system under a group of leaders implies that $L$ has an eigenvector with the elements corresponding to the leader agents taking the value of zero. This is a contradiction to the assumption.
\end{proof}

\begin{remark}\label{Rem1}
An algebraic necessary and sufficient condition is presented in Theorem \ref{The1Alge}. The next objective is to cast the algebraic condition on the meaning of topology structures, especially finding out topology graphs satisfying the algebraic condition. Intuitively, it is not possible to uncover all these topology graphs because the number of  topology structures is countless; and the more number of nodes, the harder in finding out all the topologies. Hence, designing topology structures just in a certain circumstance is a realizable approach in the investigation.
\end{remark}

\subsection{Design procedure}

Bearing Remark \ref{Rem1} in mind and for the convenience of presentation, the construction procedure starts with a simple topology structure depicted by (b) of Fig. \ref{Step1con} which consists of eight nodes.

\textbf{Step 1}
The eight nodes $1$ to $8$ are divided into two groups. One is $\Omega_1=\{1, 2, 3, 4\},$ and the other is $\Omega_2=\{5, 6, 7, 8\}.$ The corresponding schematic diagram is (a) of Fig. \ref{Step1con}.

Select arbitrarily one node, respectively, from $\Omega_1$ and $\Omega_2,$ say $1\in\Omega_1$ and $5\in\Omega_2$ to constitute a double nodes set, which is depicted by the first oval shape in (a) of Fig.\ref{Step1con}. In this way, four double nodes sets are written out, as shown in (a) of Fig.\ref{Step1con}, and each node belongs to one and only one double node set.

The original graph consisting of the eight nodes are depicted by (b) of Fig.\ref{Step1con}, which has a simple topology structure and is a candidate for subsequent construction of perfectly controllable graphs.

\begin{remark}
The graph (a) of Fig.\ref{Step1con} and its subsequent evolutionary graphs are schematic diagrams which are used to explain the rules of design. The graph (b) of Fig.\ref{Step1con} and its subsequent evolutionary graphs indicate the resulting topology structures in each step. These two types of graphs together form a description of the design process.
\end{remark}

\begin{figure}[H]
\begin{center}
\subfigure[]{\includegraphics[width=2.63cm]{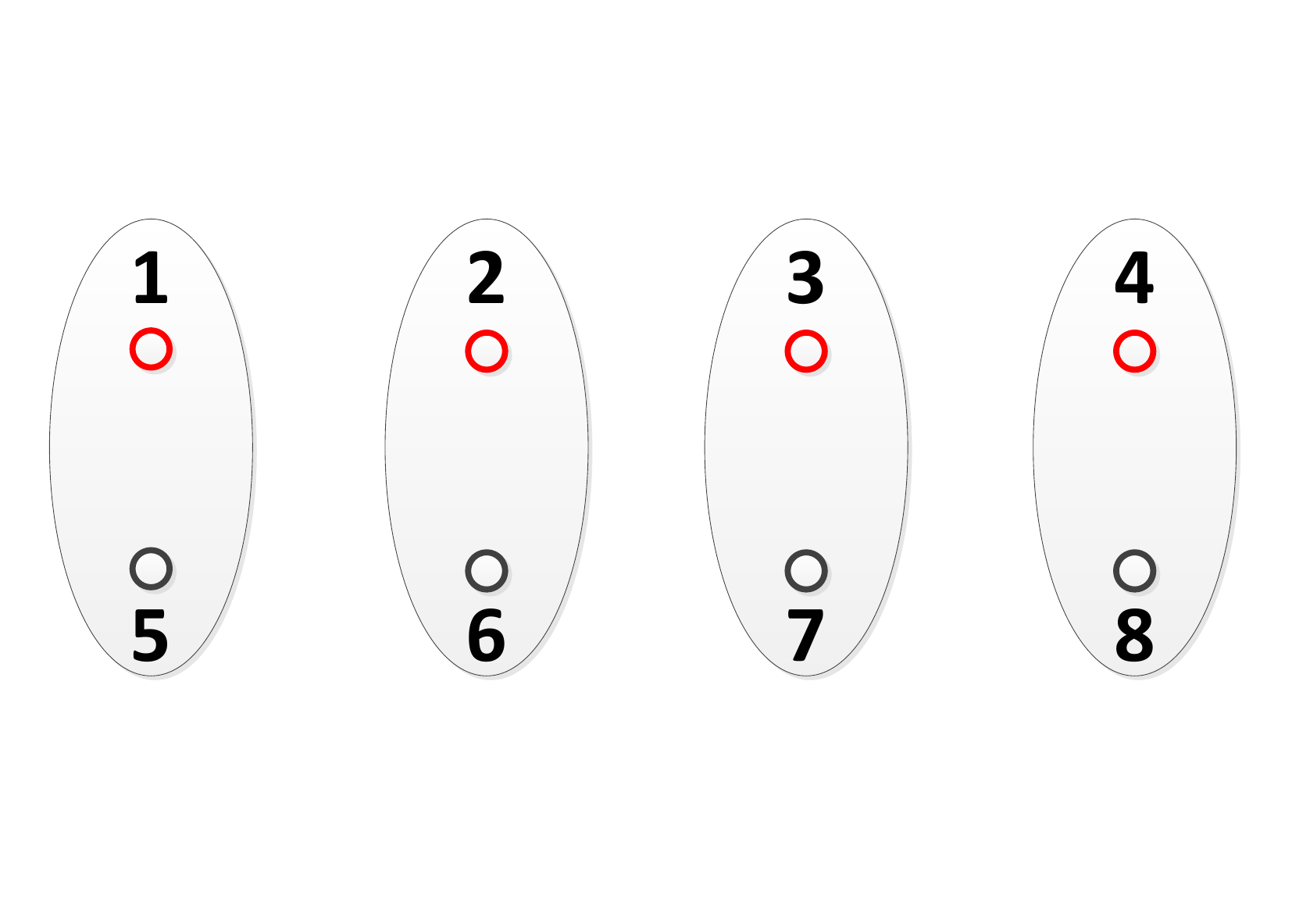}}\qquad
\subfigure[]{\includegraphics[width=2.26cm]{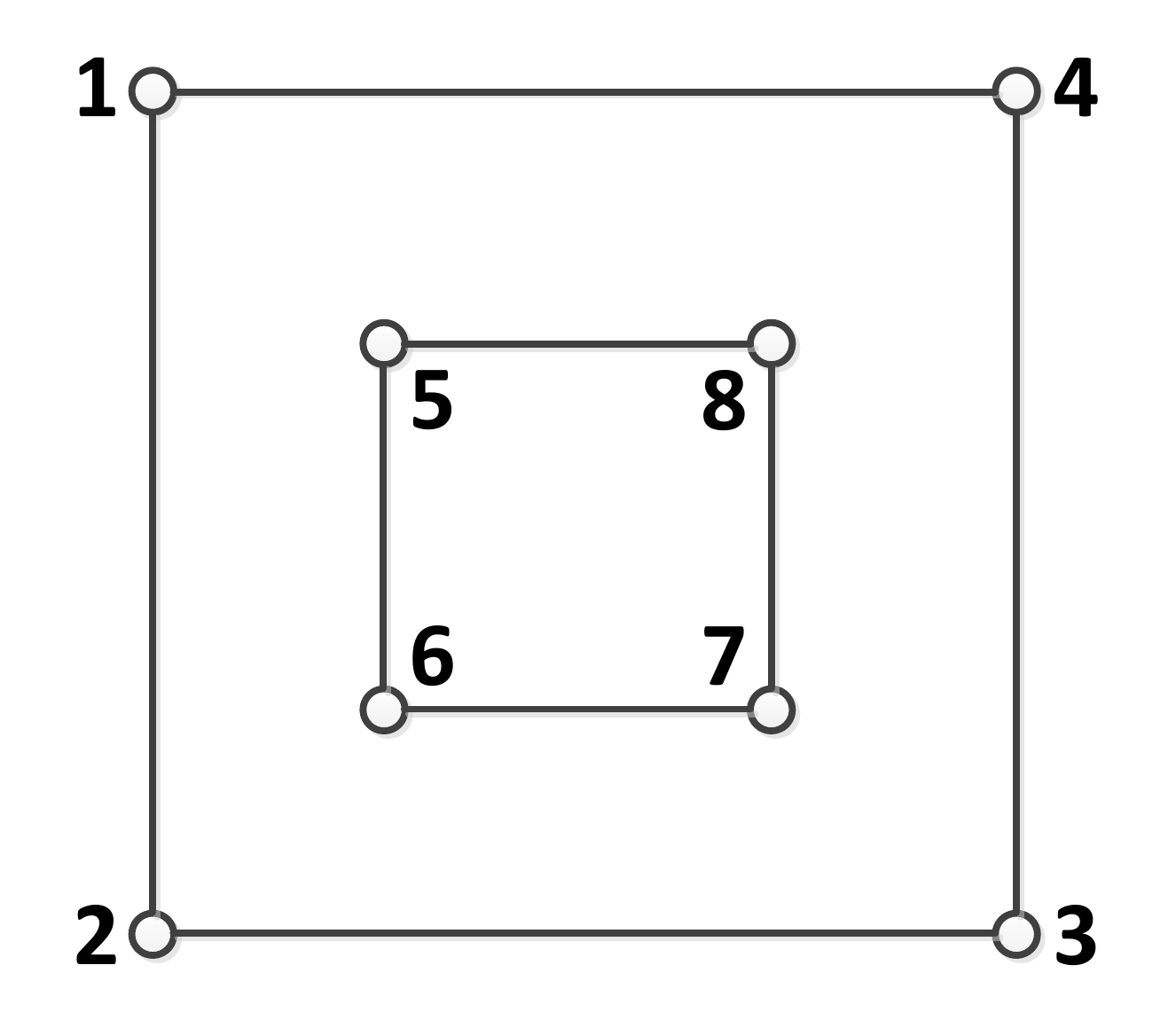}}
\caption{Four double node sets (a); and the original graph (b).}\label{Step1con}
\end{center}
\end{figure}



\textbf{Step 2}
Take any three of the four double nodes sets, say the first, the second and the fourth one; and establish a connection between the two nodes in each of these selected three node sets. This addition of the communication edges is illustrated by (a) of Fig.\ref{Step2con}. The corresponding topology structure is described by (b) of Fig.\ref{Step2con}.

\begin{figure}[H]
\begin{center}
\subfigure[]{\includegraphics[width=2.63cm]{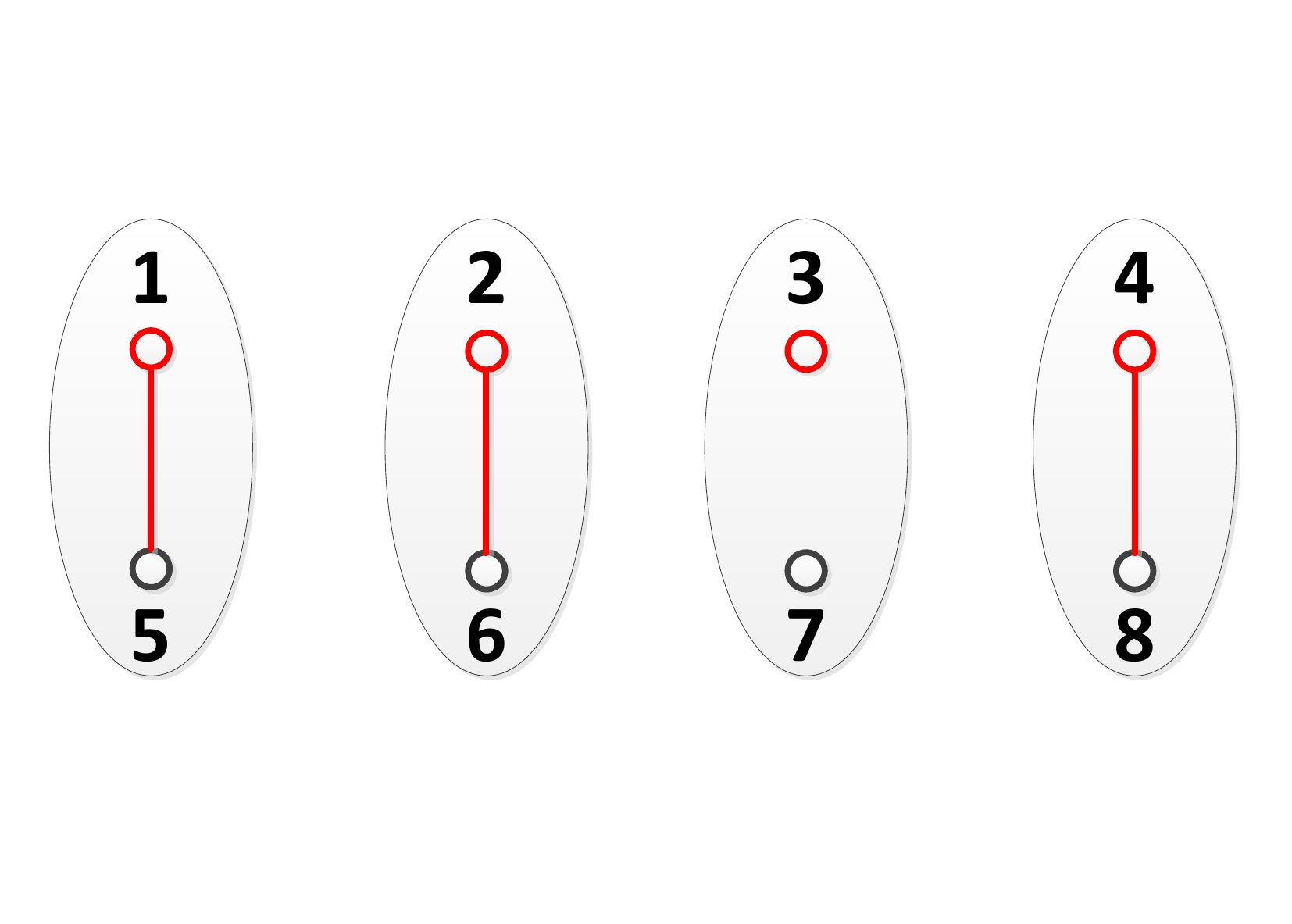}}\qquad
\subfigure[]{\includegraphics[width=2.26cm]{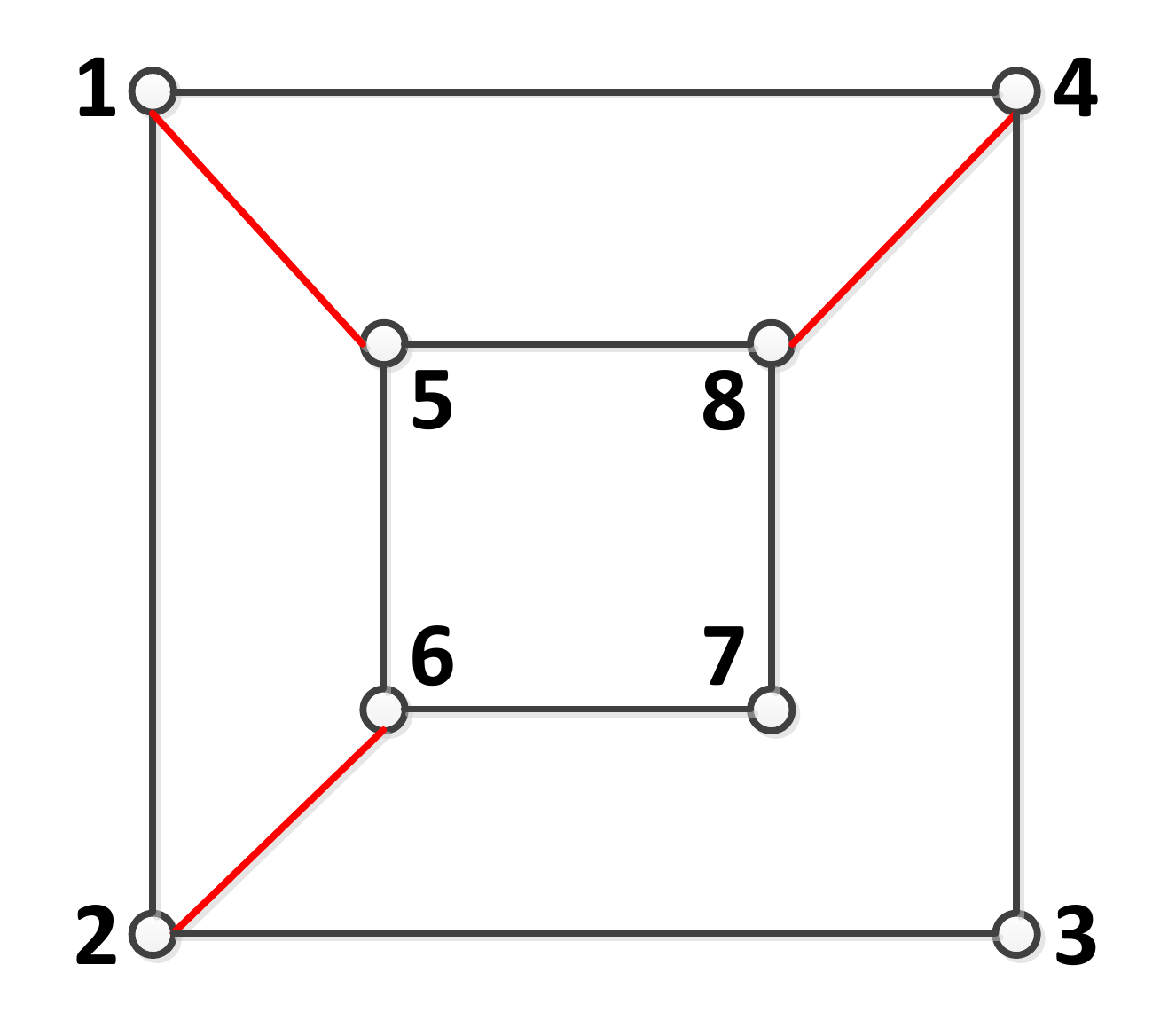}}
\caption{(a) is the addition of the communication edges; (b) is the corresponding topology structure.}\label{Step2con}
\end{center}
\end{figure}

\textbf{Step 3}
Focus on the double nodes set in which the two nodes are not interlinked. For these two nodes, let us consider the one belonging to $\Omega_2.$ Here this node is $7.$

Now, connect $7$ to $4.$ The rules that this connection follows are listed below:
\begin{itemize}
\item[i)] The two nodes $4$ and $7$ to be connected belong to different sets $\Omega_i, i=1,2;$
\item[ii)] The two nodes $4$ and $7$ to be connected are not in the same double nodes set;
\item[iii)] The newly added edge $e_{47}$ between $4$ and $7$ does not intersect with any other edge.
\end{itemize}
The newly added edge and the corresponding topology structure are depicted, respectively,  by (a) and (b) of Fig.\ref{Step3con}.

In this way, each of edges $e_{27}, e_{36}, e_{38}$ is also the right alternative.
\begin{figure}[H]
\begin{center}
\subfigure[]{\includegraphics[width=3cm]{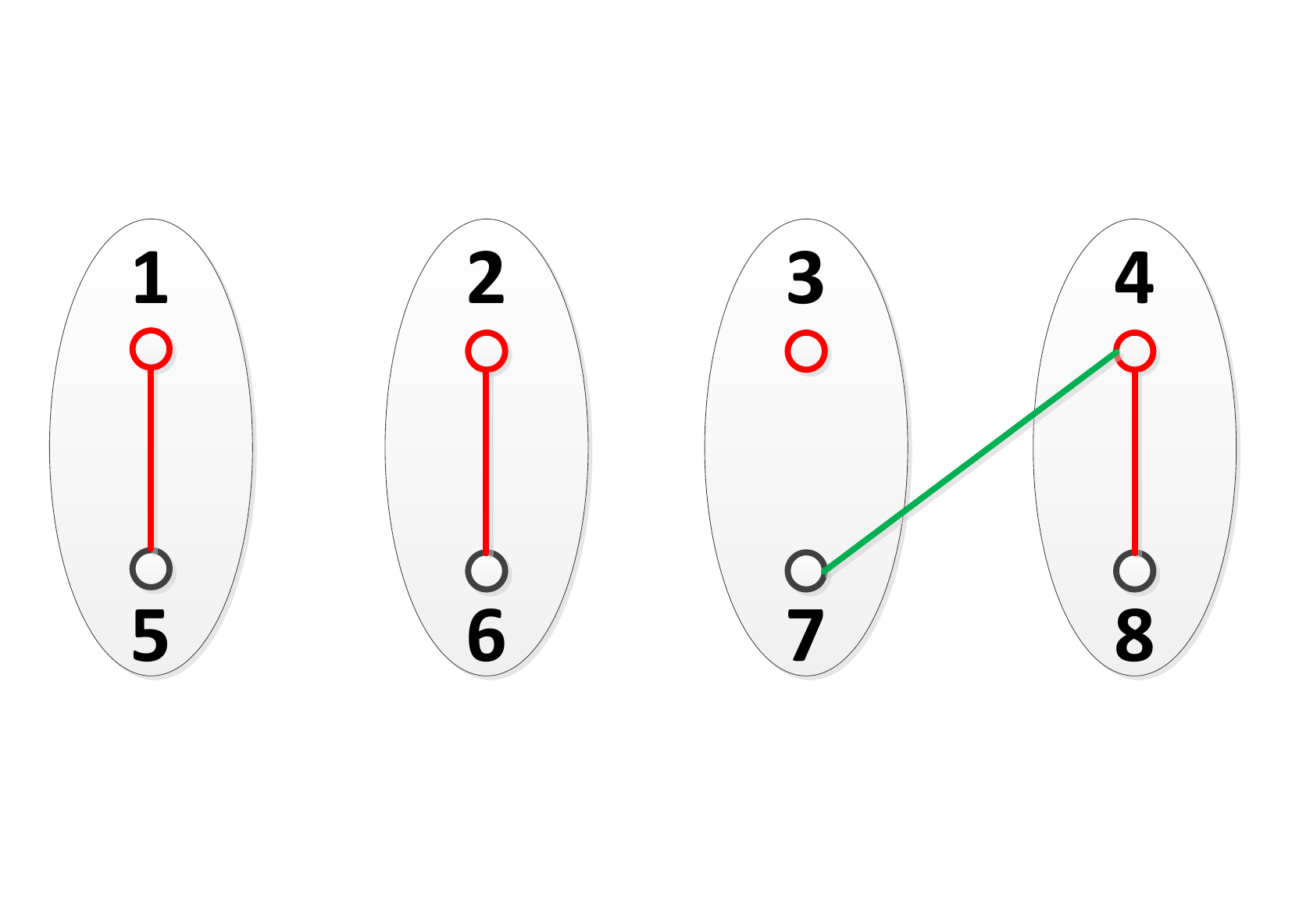}}\qquad
\subfigure[]{\includegraphics[width=2.3cm]{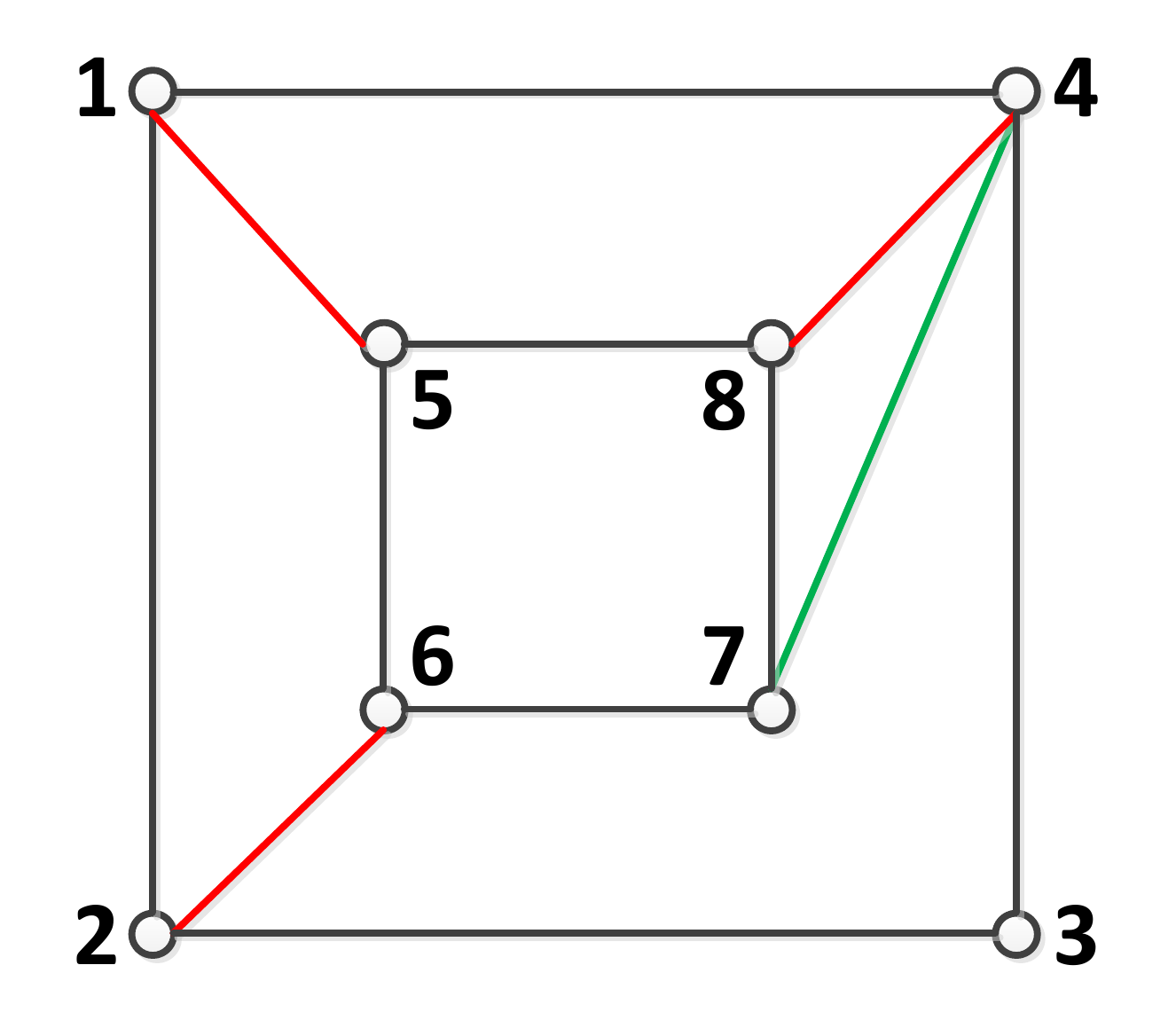}}
\caption{(a) indicates the edge to be added; (b) is the corresponding topology structure.}\label{Step3con}
\end{center}
\end{figure}

\textbf{Step 4}
In this step, the two double nodes sets linked by the newly added edge $e_{47}$ are fixed. In what follows, three  scenarios are stated.

\textbf{Case a)} By following the same rules i) to iii) in Step 3, a new edge, say $e_{16}$ or $e_{25}$ can be created to link the other two unfixed double nodes sets. This manipulation is illustrated by Fig. \ref{Step4-1con}.
\begin{figure}[H]
\begin{center}
\subfigure[]{\includegraphics[width=2.43cm]{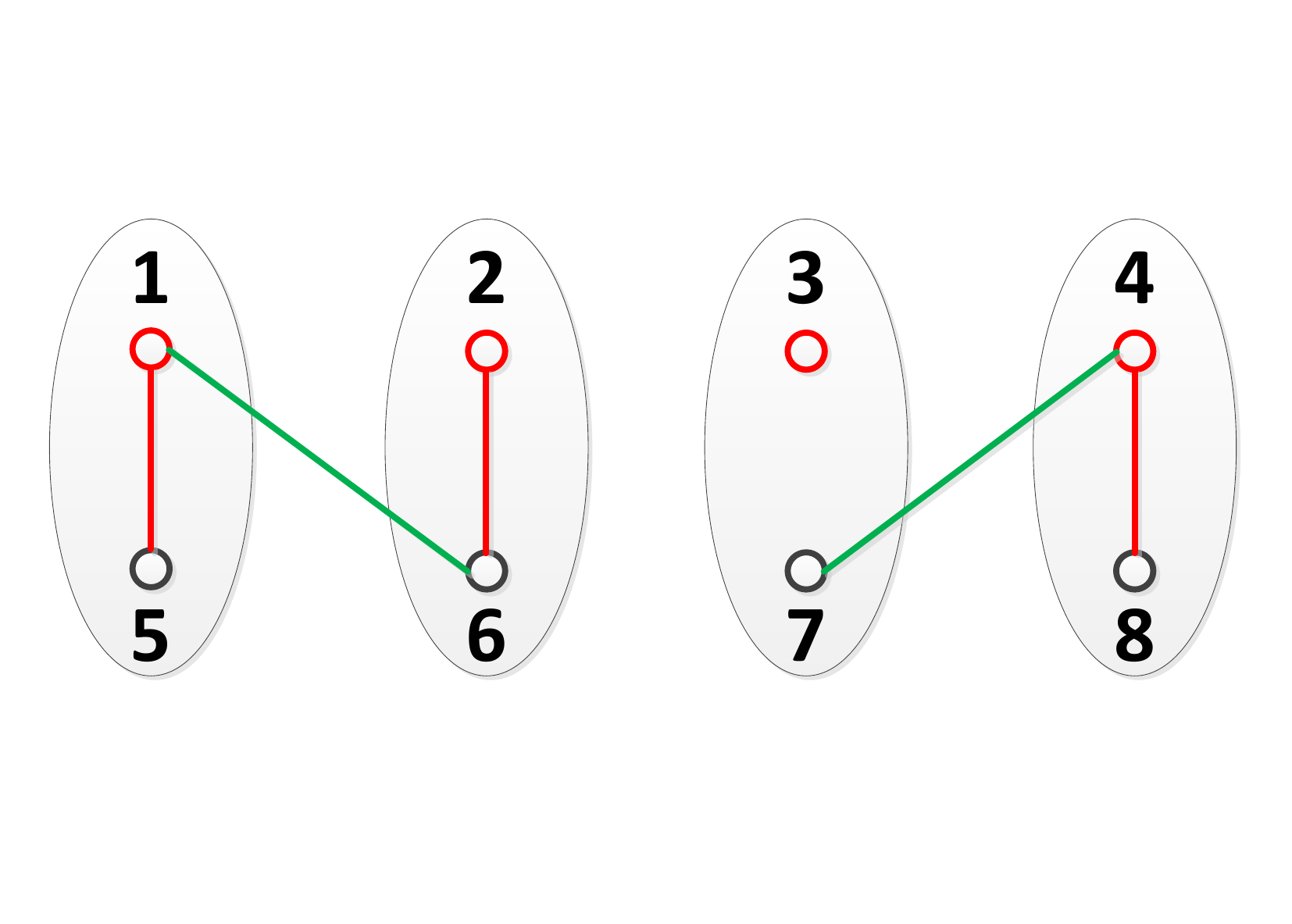}}\qquad
\subfigure[]{\includegraphics[width=2.43cm]{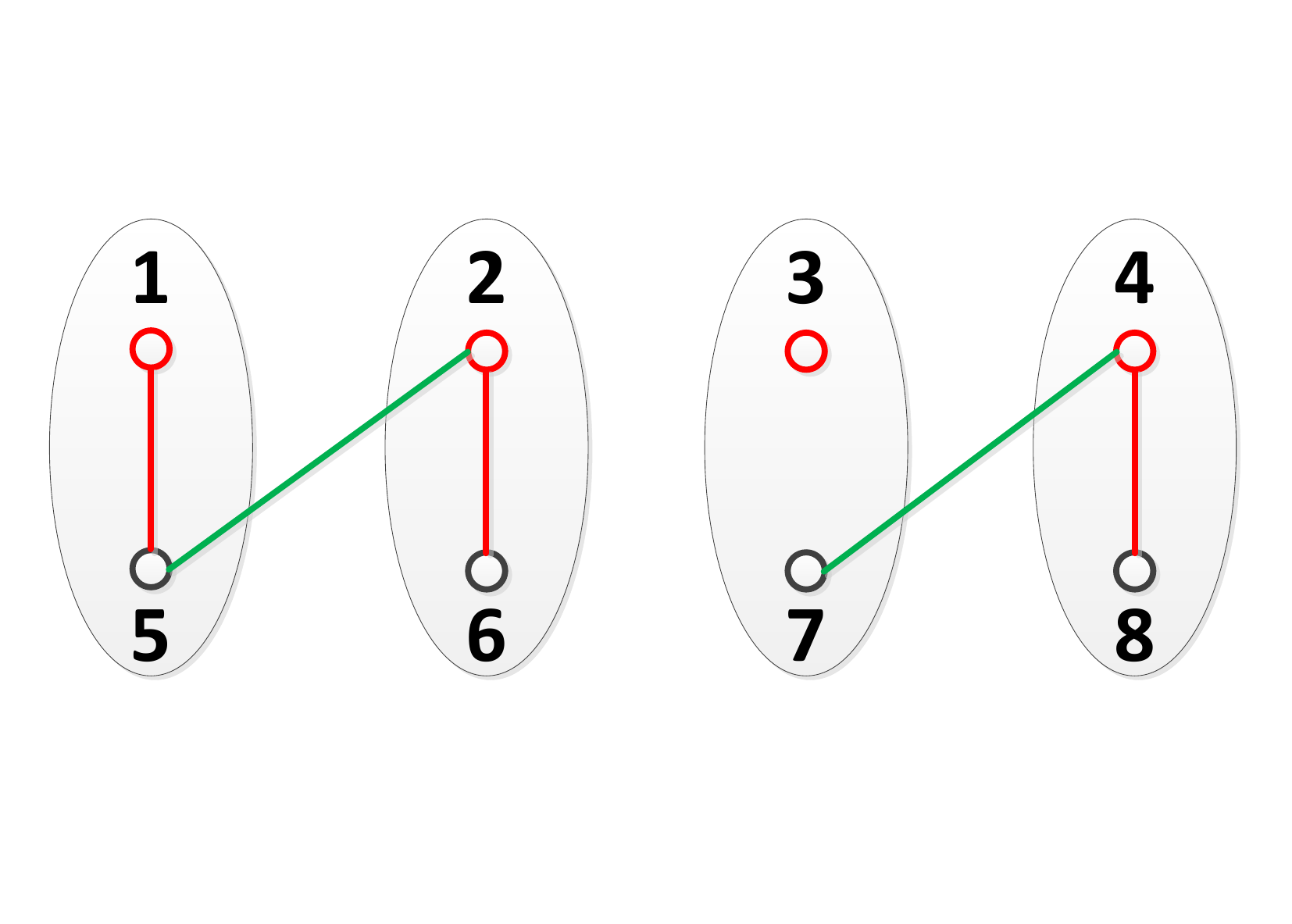}}
\caption{(a) and (b) are the figures associated with the newly added edges between the two unfixed double nodes sets.}\label{Step4-1con}
\end{center}
\end{figure}

\textbf{Case b)}  Let us still consider the two unfixed double nodes sets. In this case, one new node labeled as $9$ and two associated new edges are added to the graphs designed by following Step 3. There are three circumstances.
\begin{itemize}
\item For the nodes $1$ and $2,$ which belong to the same $\Omega_1,$ and meanwhile, respectively, belong to one of the two unfixed double nodes sets,  the two new edges are constructed by connecting the new node $9,$ respectively, to one of these two nodes $1$ and $2.$ This is illustrated by (a) of Fig. \ref{Step4-2con}. The same operation can be applied to the nodes $5$ and $6,$ as shown in (b) of Fig. \ref{Step4-2con}.
\item For the nodes $1$ and $6,$ which neither belong to the same $\Omega_i, i=1,2,$ nor belong to the same one of unfixed double nodes sets, two new edges can also be designed by connecting node $9,$ respectively, to one of the nodes $1$ and $6,$ as depicted by  (c) and (d) of Fig. \ref{Step4-2con}.
\item For one of the two unfixed double nodes sets, say the one consisting of nodes $1$ and $5,$ two new edges can be constructed by connecting the newly added node $9,$ respectively, to the nodes  $1$ and $5.$ This is illustrated by (e) of Fig. \ref{Step4-2con}. For the other unfixed double nodes set consisting of nodes  $2$ and $6,$ the same operation can be applied to it to get another group of two new edges, as shown in (f) of Fig. \ref{Step4-2con}.
\end{itemize}

\begin{figure}[H]
\begin{center}
\subfigure[]{\includegraphics[width=2.43cm]{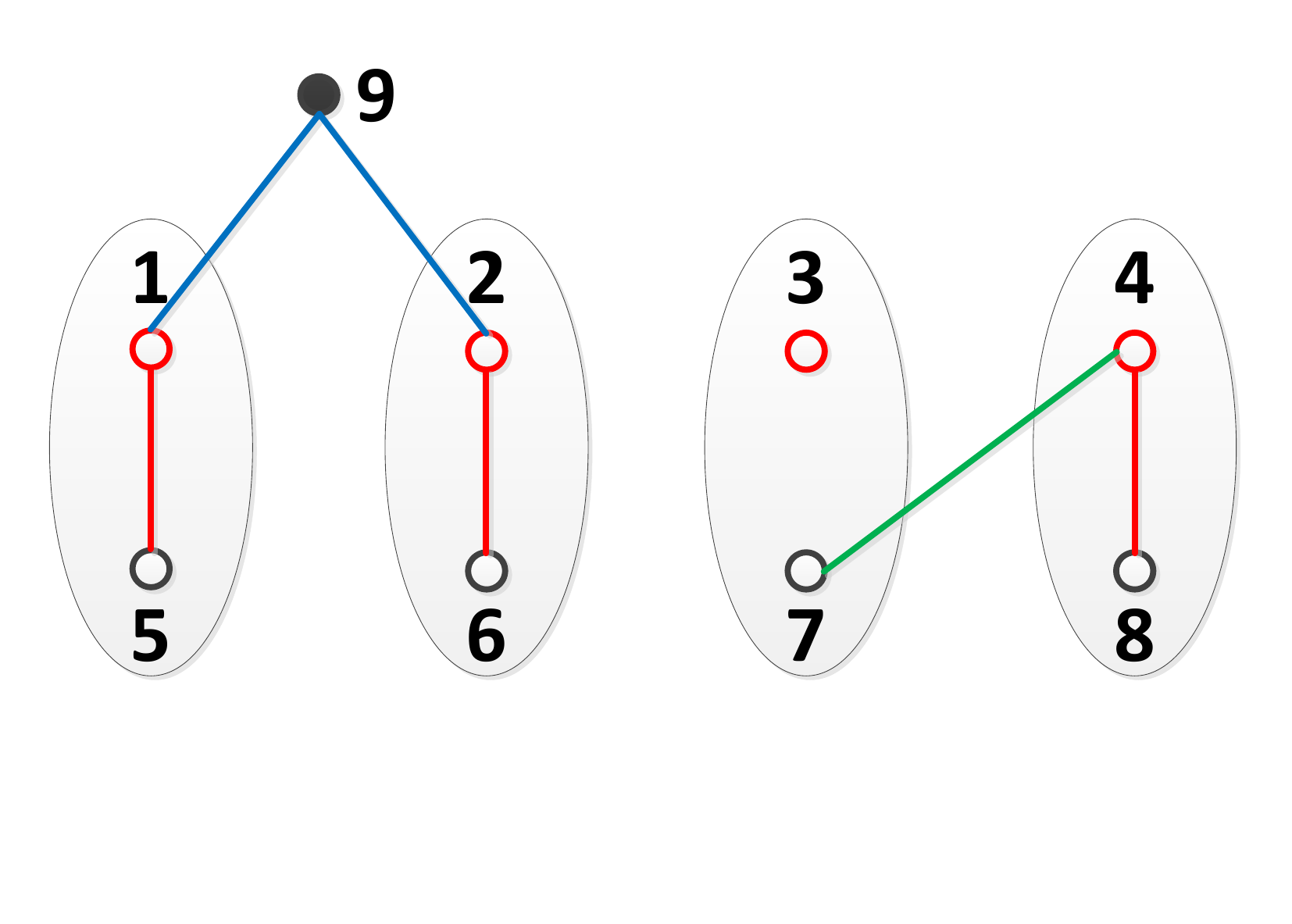}}\quad\;
\subfigure[]{\includegraphics[width=2.43cm]{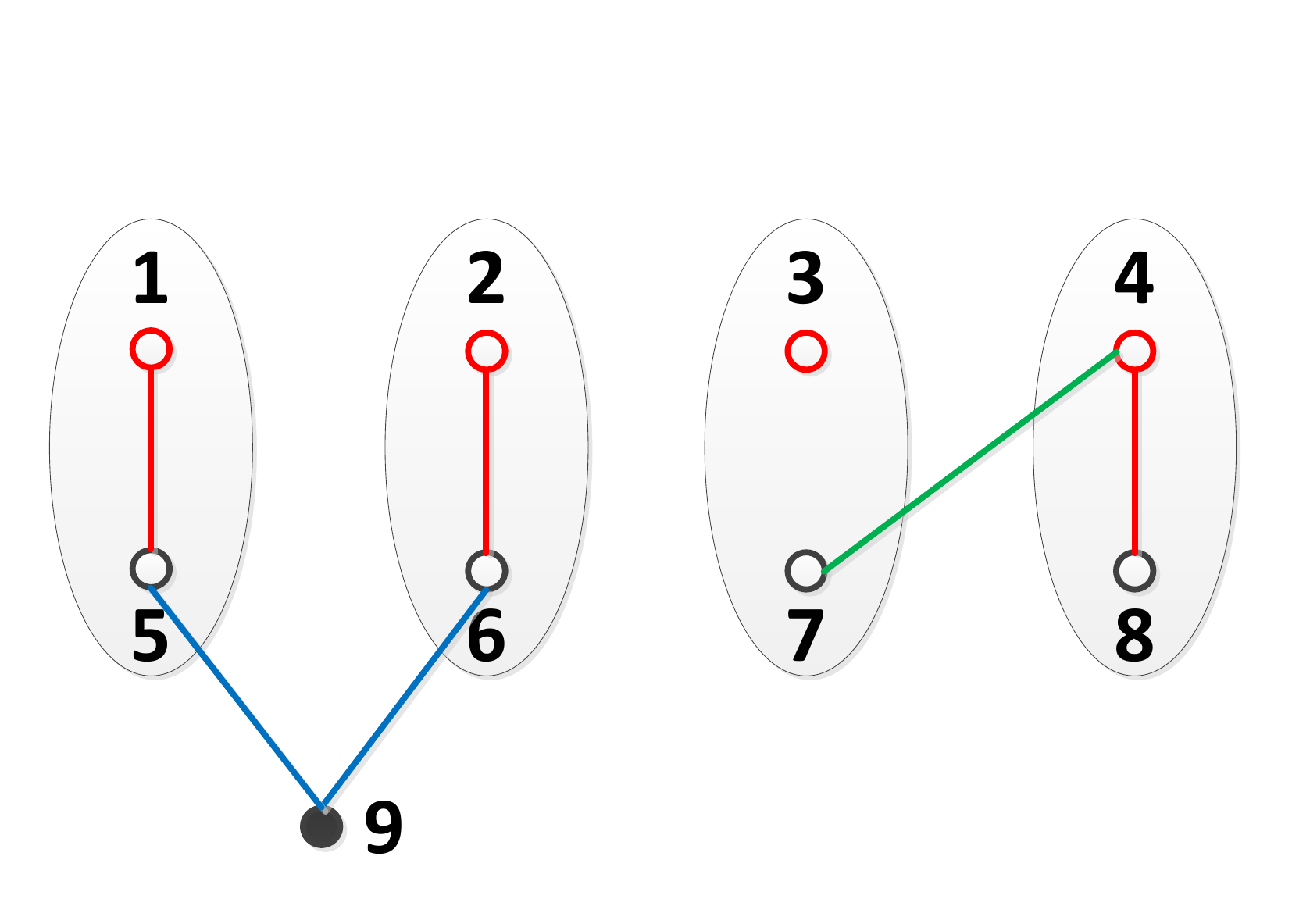}}\quad\;
\subfigure[]{\includegraphics[width=2.43cm]{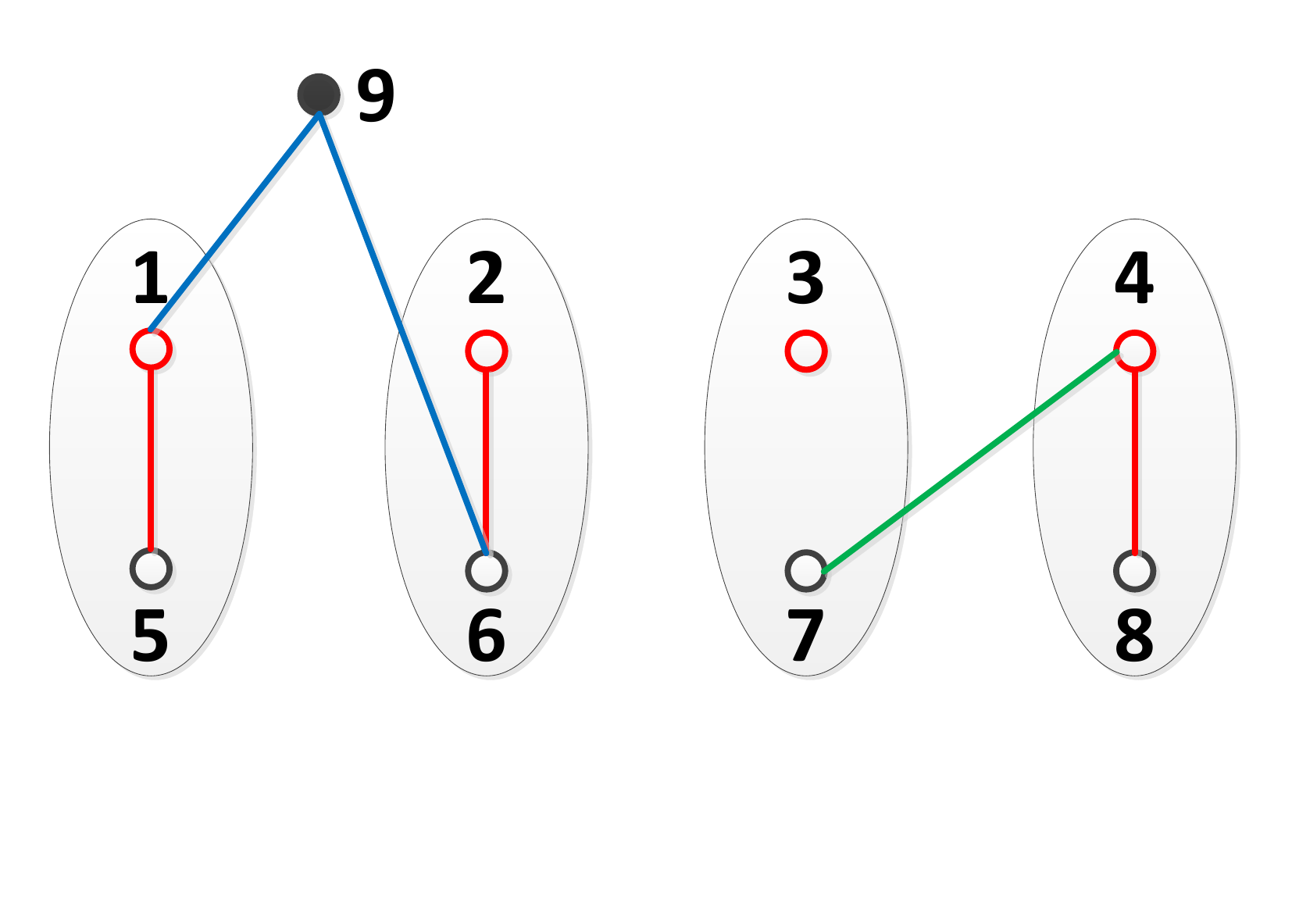}}
\subfigure[]{\includegraphics[width=2.43cm]{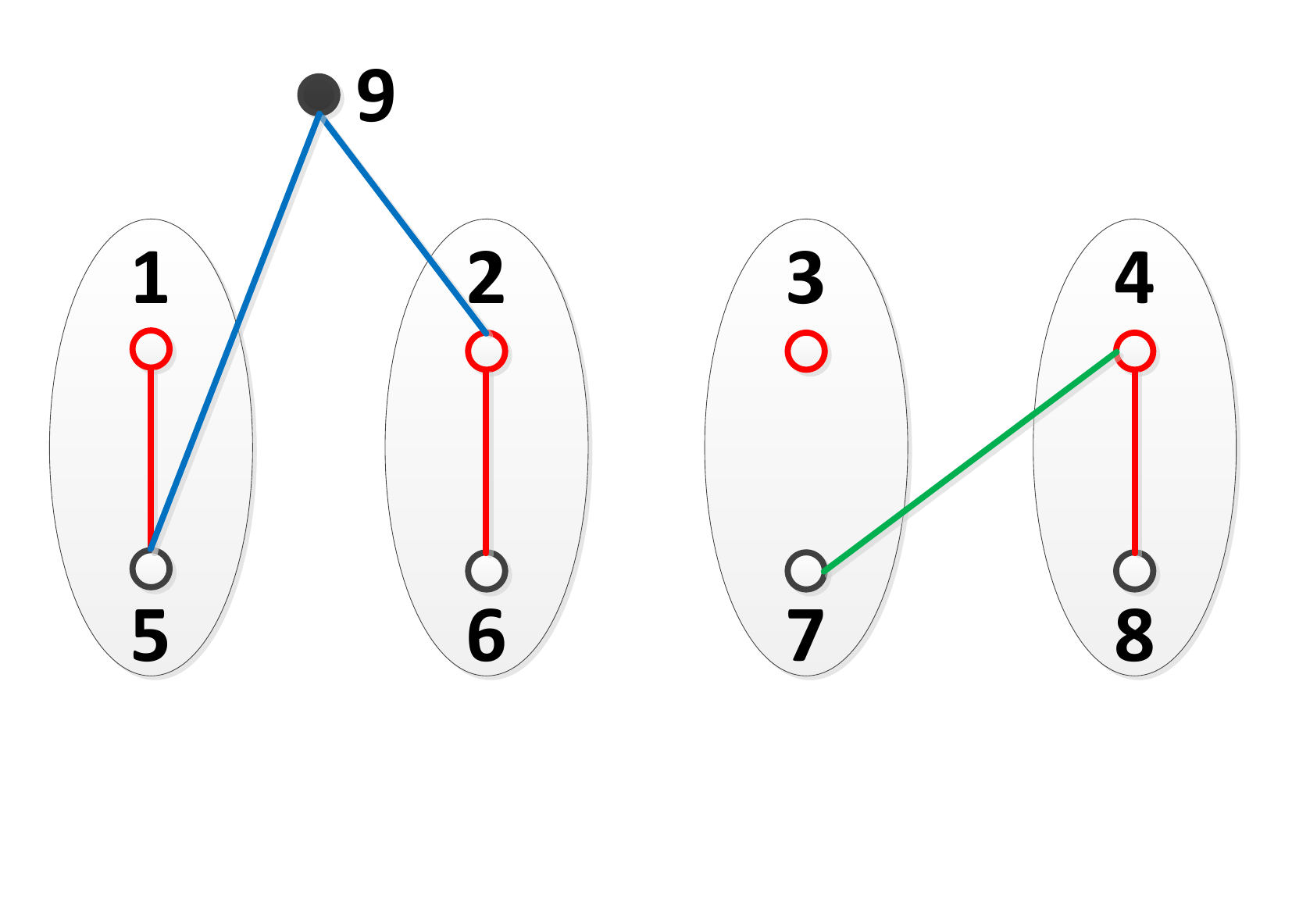}}\quad\;
\subfigure[]{\includegraphics[width=2.43cm]{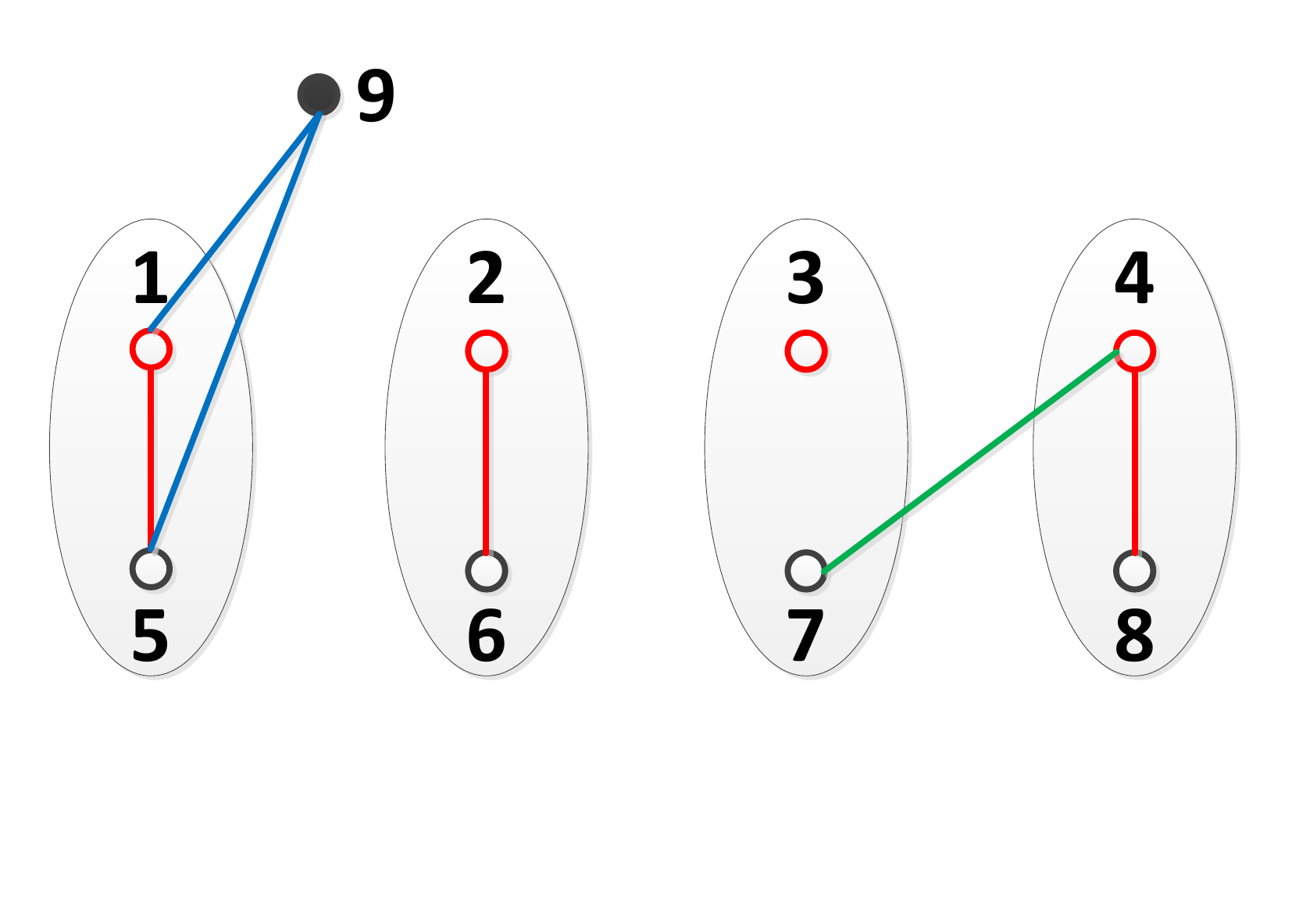}}\quad\;
\subfigure[]{\includegraphics[width=2.43cm]{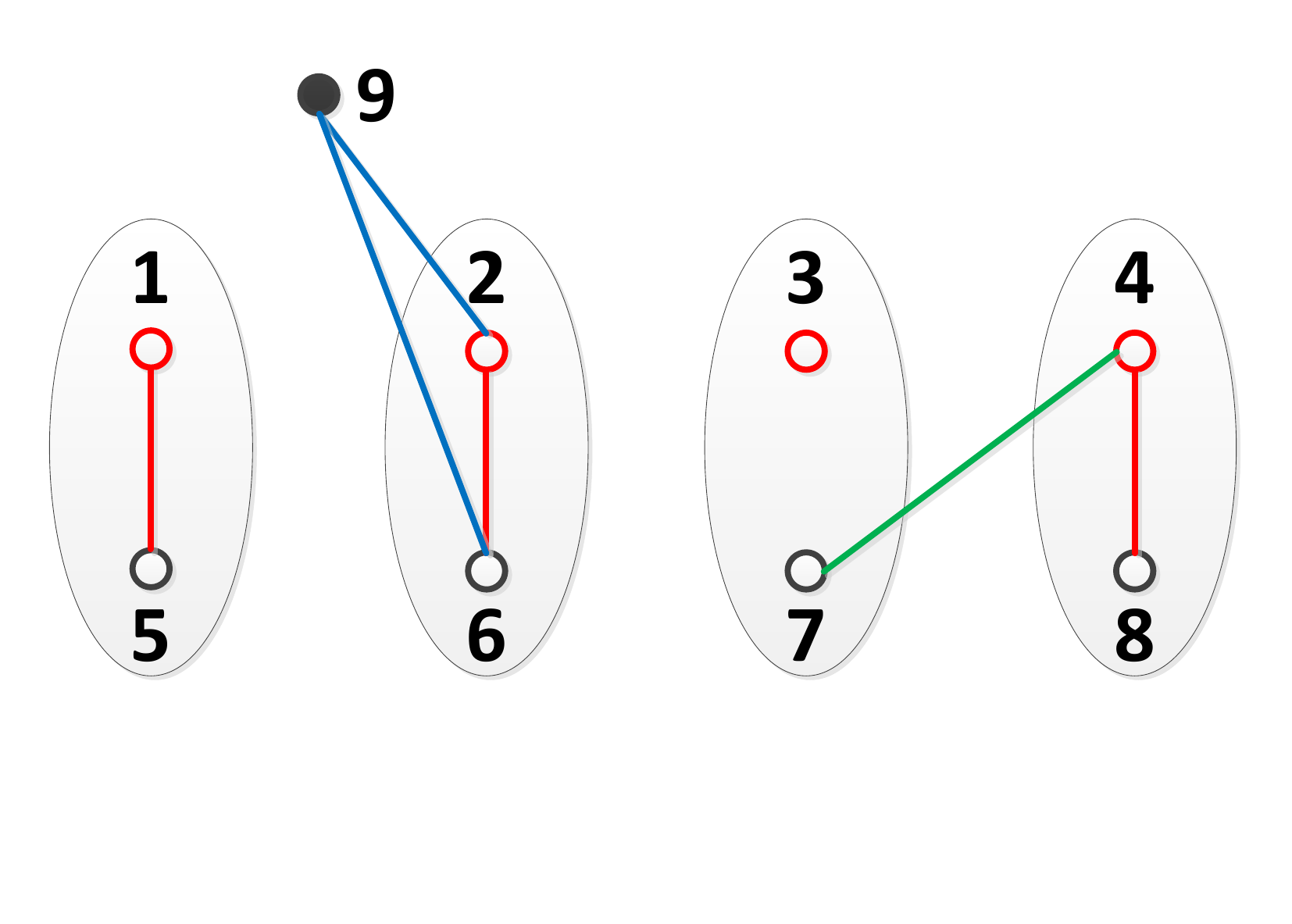}}
\caption{(a) to (f) are the figures associated with both the newly added nodes and the corresponding two newly constructed edges.}\label{Step4-2con}
\end{center}
\end{figure}
\begin{figure}[H]
\begin{center}
\subfigure[]{\includegraphics[width=2.3cm]{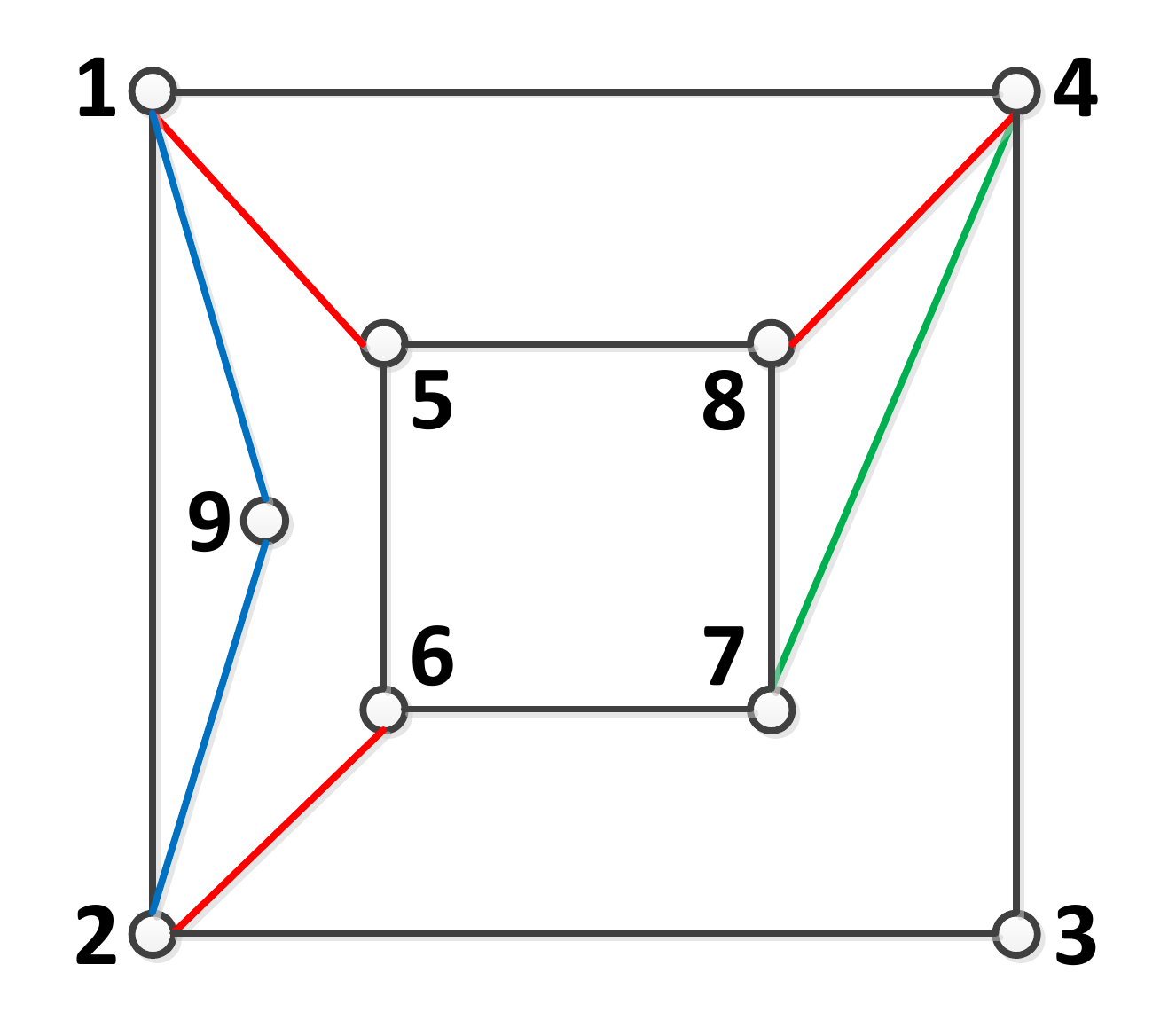}}\quad
\subfigure[]{\includegraphics[width=2.3cm]{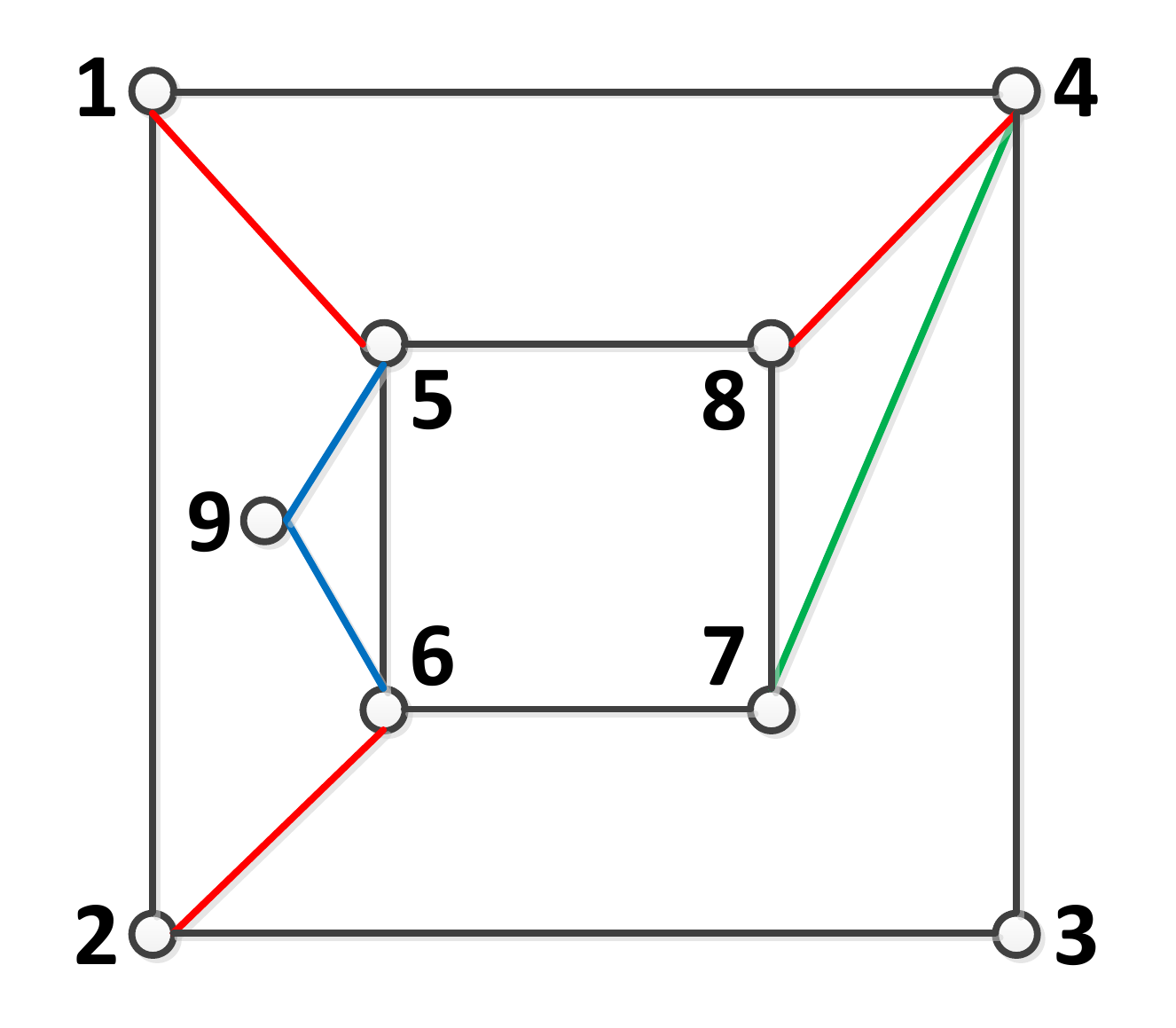}}\quad
\subfigure[]{\includegraphics[width=2.3cm]{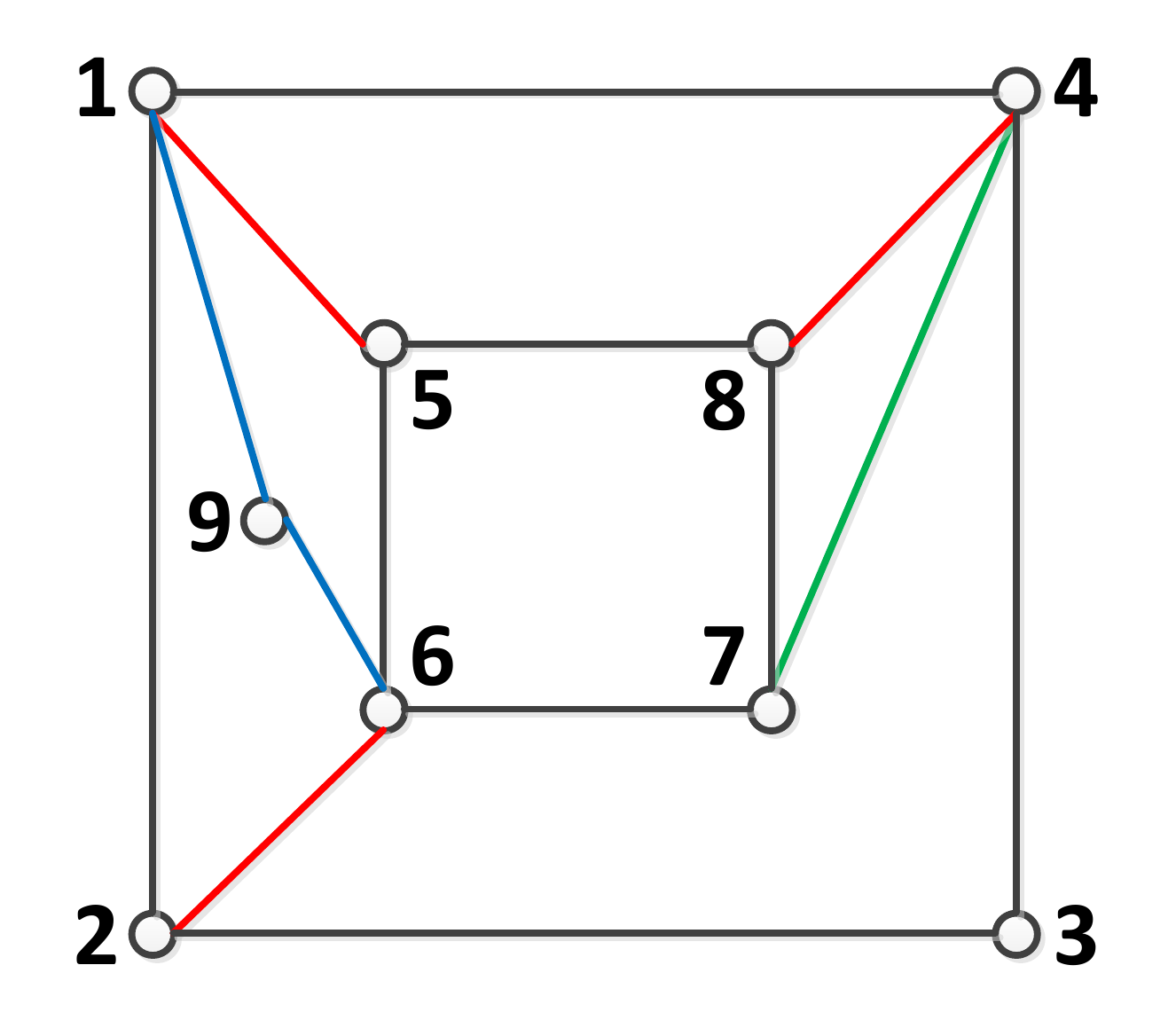}}\quad
\subfigure[]{\includegraphics[width=2.3cm]{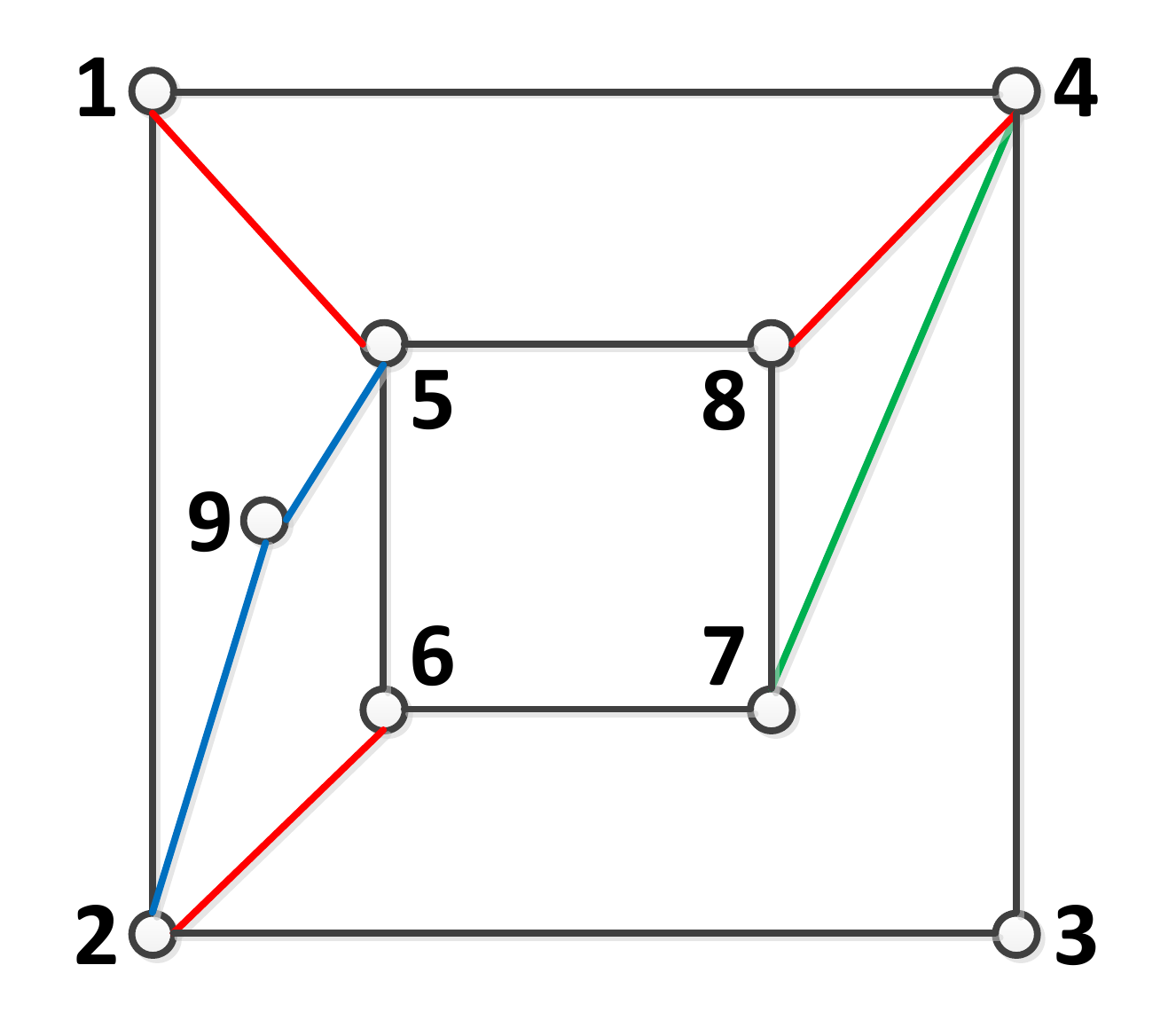}}\quad
\subfigure[]{\includegraphics[width=2.3cm]{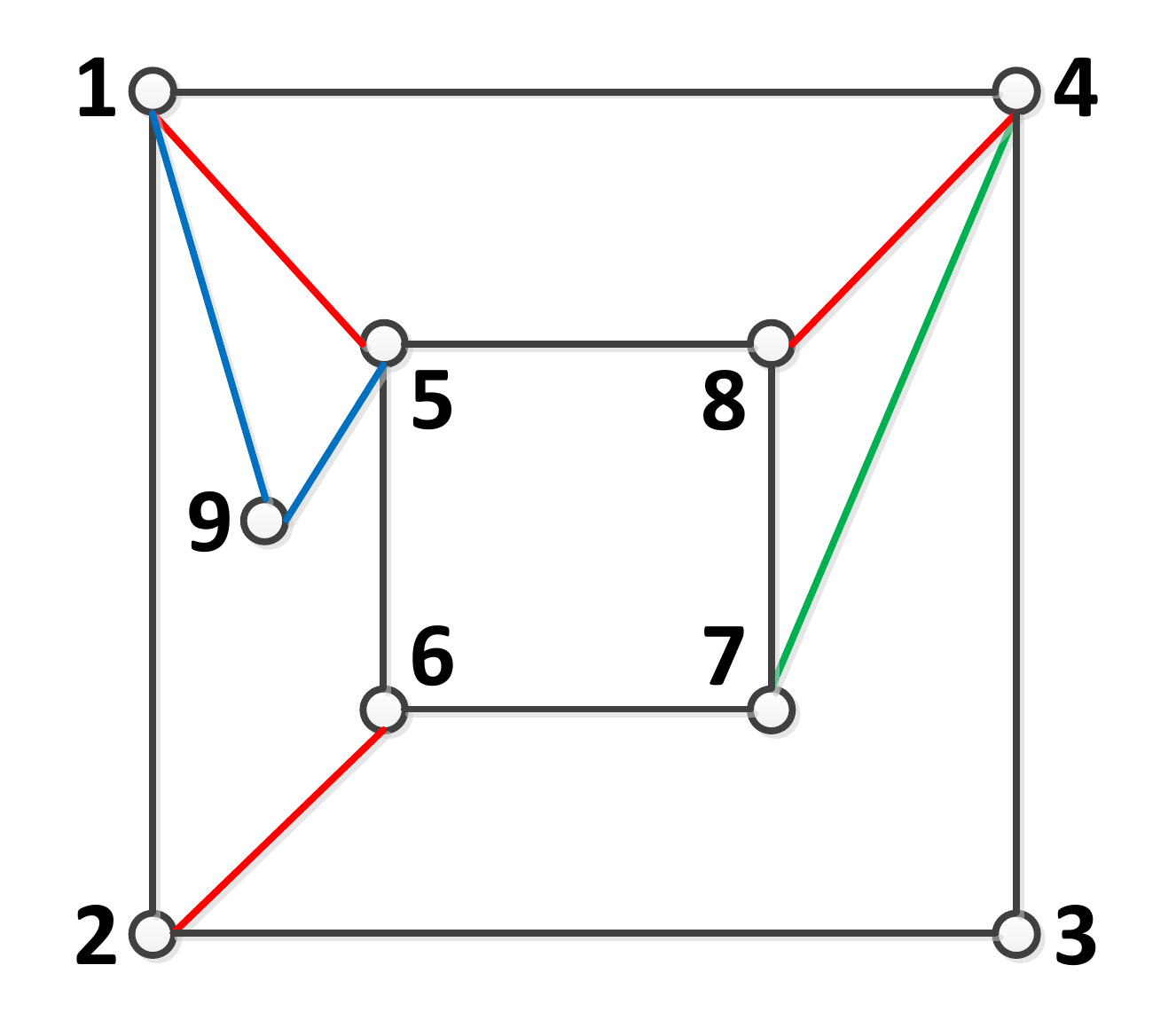}}\quad
\subfigure[]{\includegraphics[width=2.3cm]{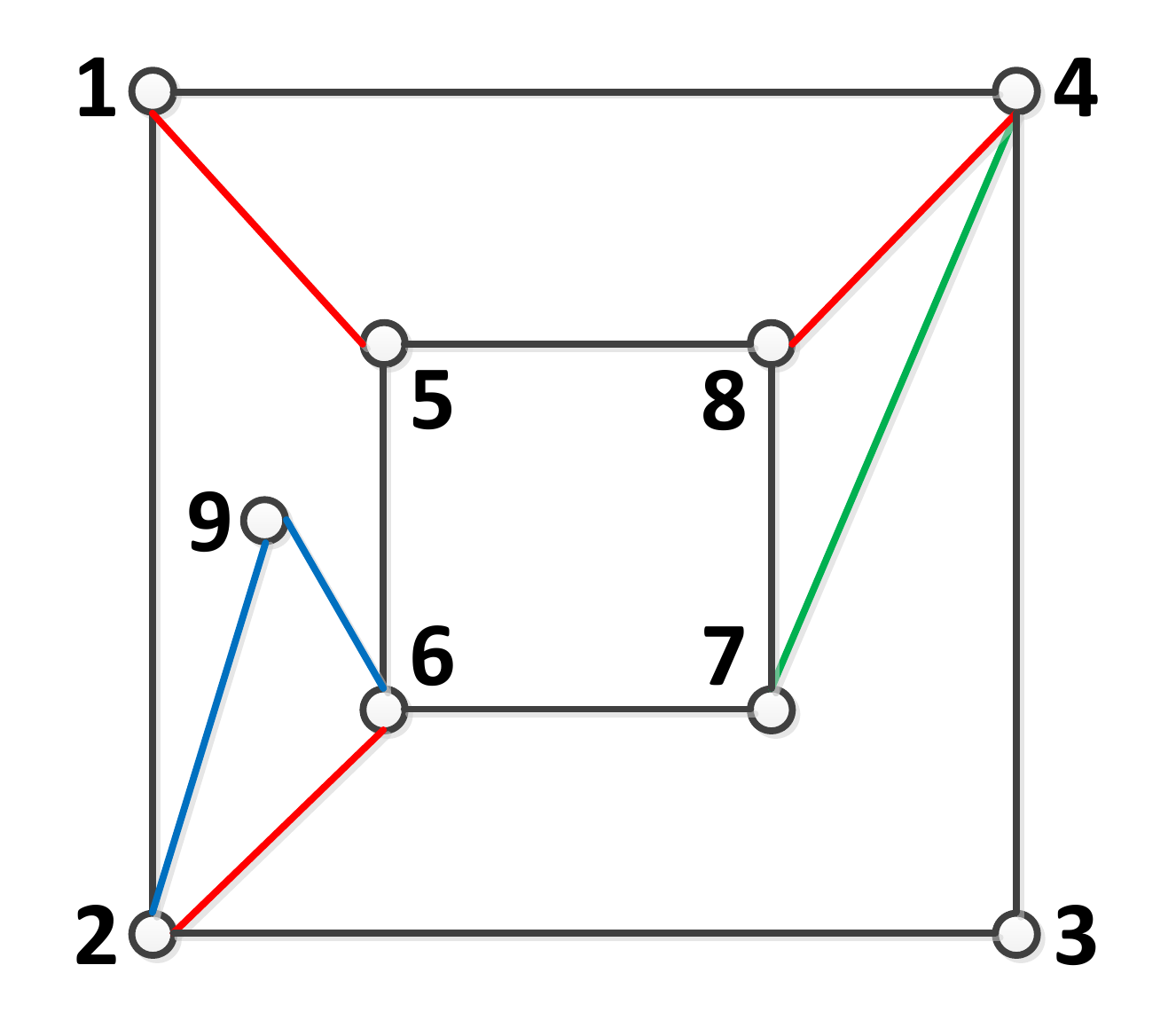}}\quad
\caption{(a) to (f) are the topology structures corresponding, respectively, to each of the figures in Fig. \ref{Step4-2con}.}\label{Step4-2conN}
\end{center}
\end{figure}

\textbf{Case c)}
In this case, we are to design the third new edge. It can be seen from Fig. \ref{Step4-3con}  that the third edge is designed by connecting the new node $9$ to one of the nodes in the unfixed double nodes set, where any node in this unfixed double nodes set previously has no linking edge with the new node $9.$

The topology structures corresponding, respectively, to (a) to (d) of Fig. \ref{Step4-3con} are depicted by (a) to (d) of Fig. \ref{Step4-3conN}.

Although the following Fig. \ref{Step4-3con} indicates that the difference between case b) and this case c) is only the third new edge, the two cases actually are parallel in the sense that both cases can be employed below to further produce different completely controllable graphs. That is why we make the distinction between the two cases.

\begin{figure}[H]
\begin{center}
\subfigure[]{\includegraphics[width=2.34cm]{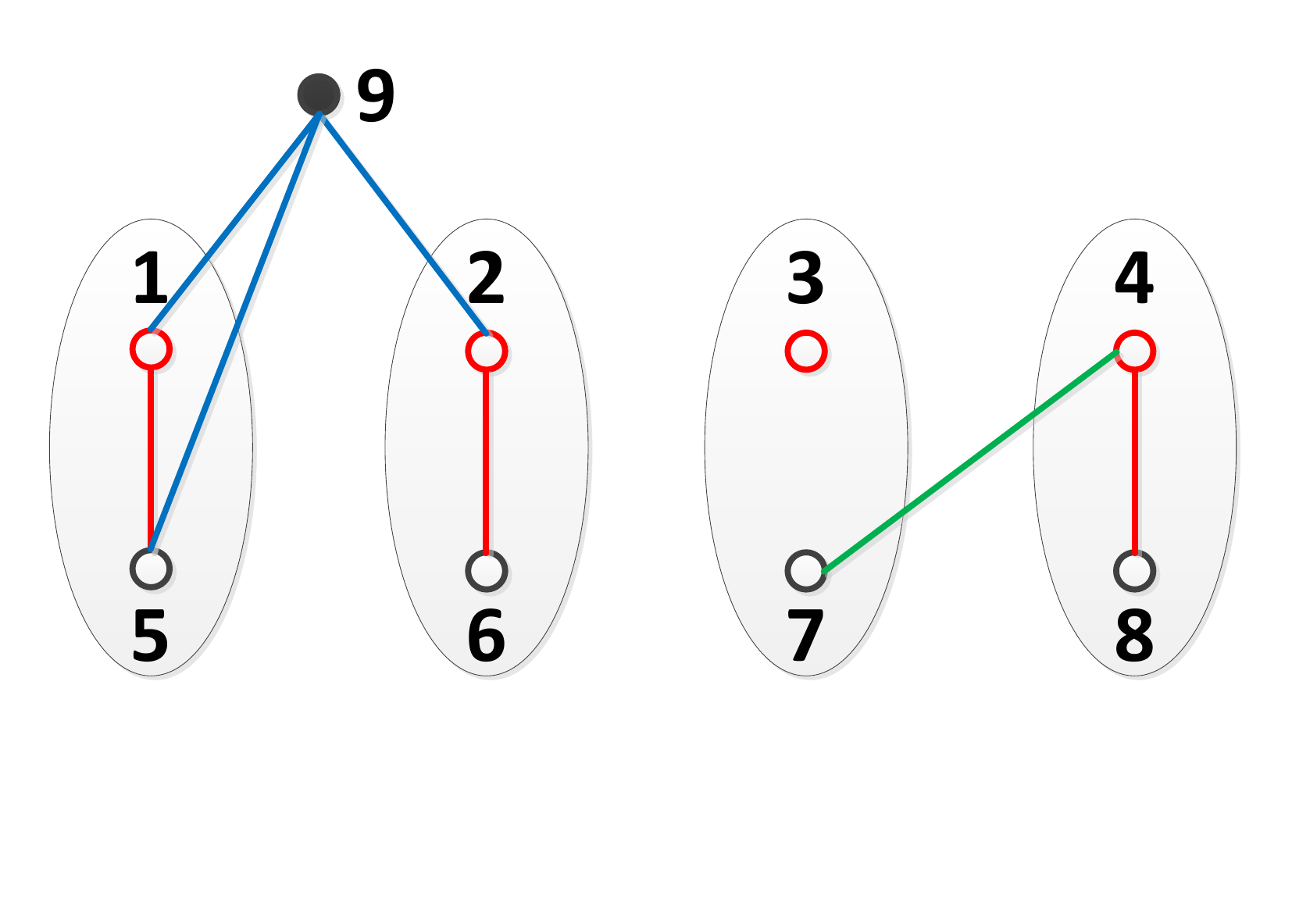}}\qquad
\subfigure[]{\includegraphics[width=2.34cm]{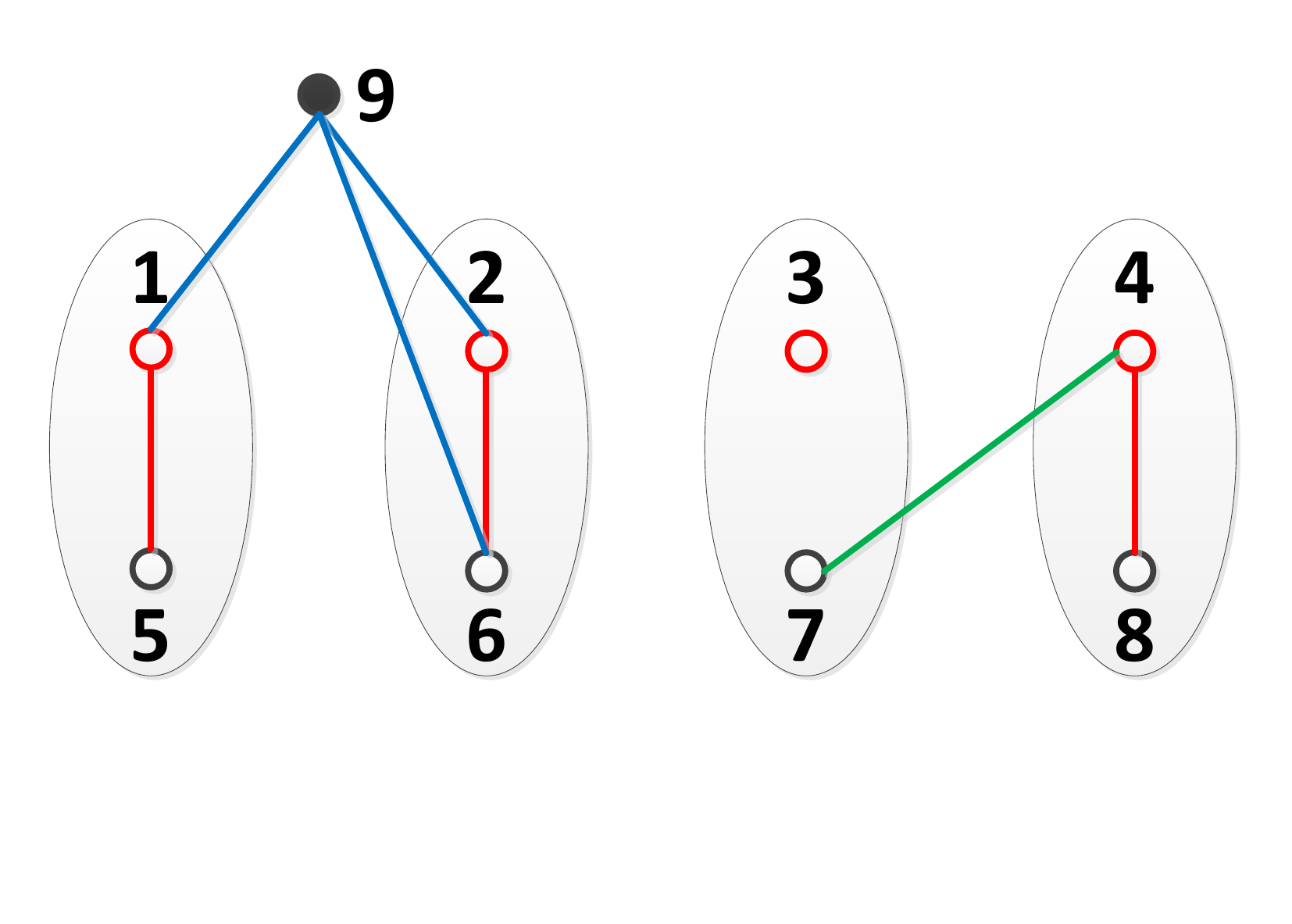}}\qquad\\
\subfigure[]{\includegraphics[width=2.34cm]{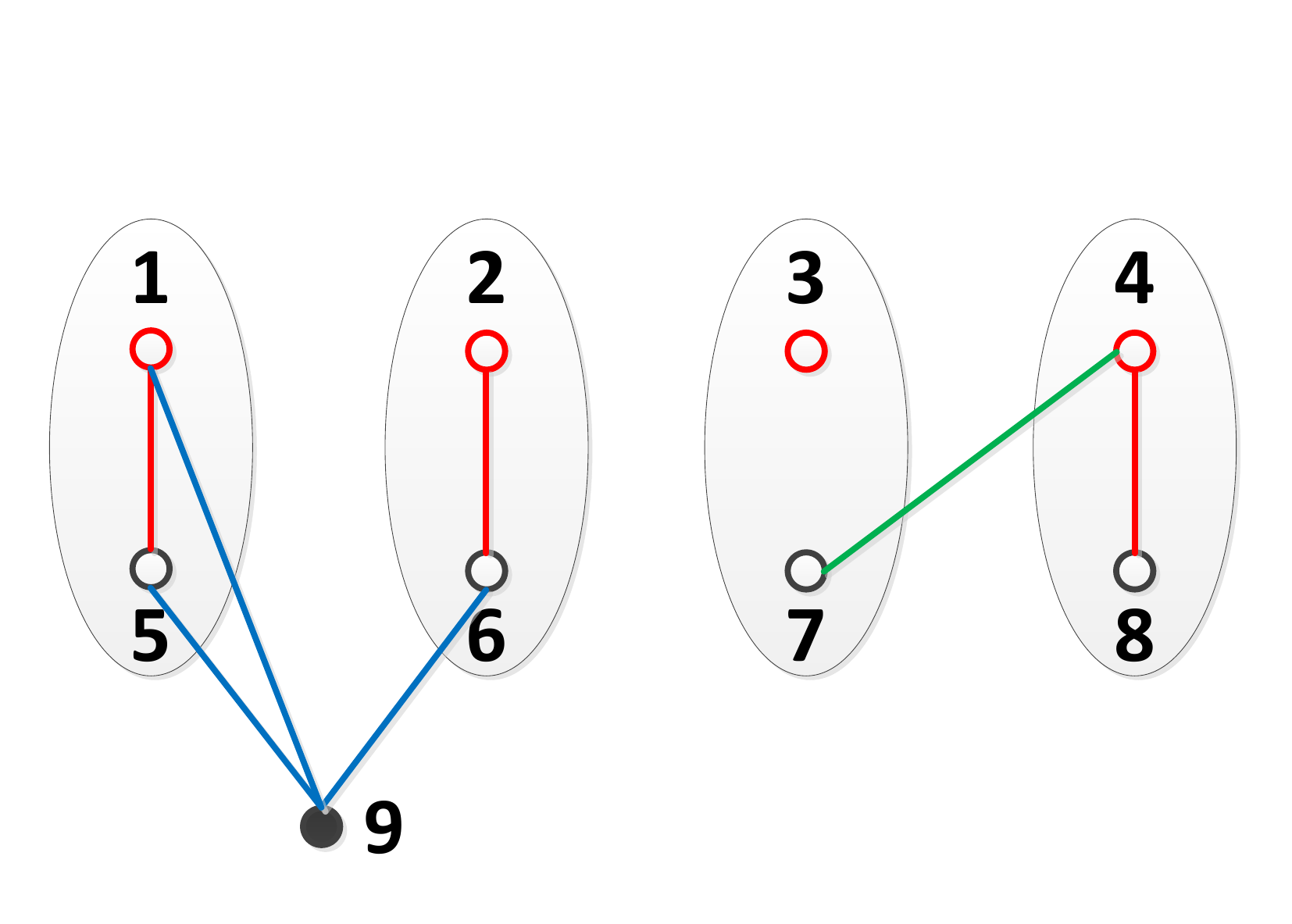}}\qquad
\subfigure[]{\includegraphics[width=2.34cm]{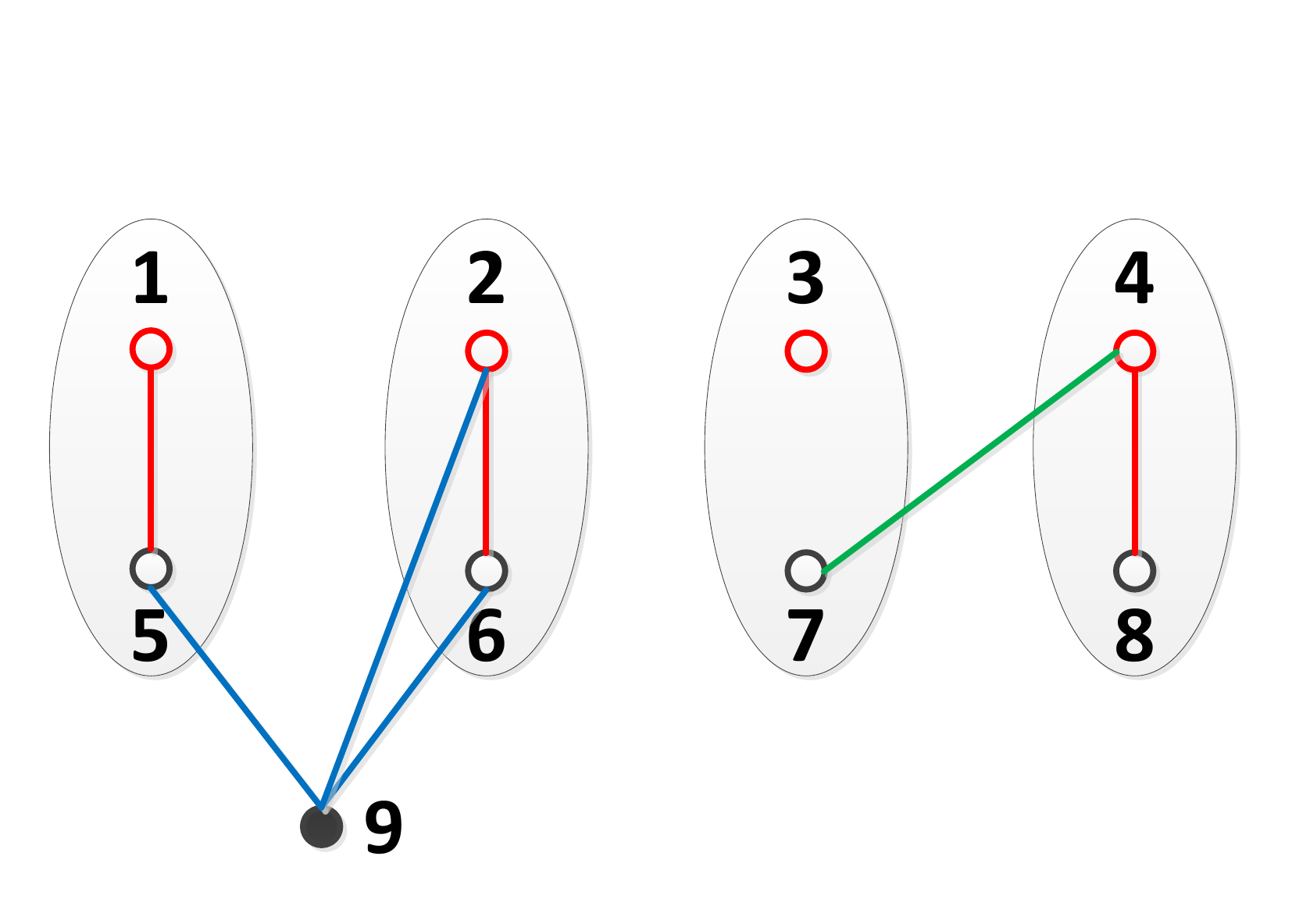}}\qquad
\caption{(a) to (d) are the figures with the third newly added edge.}\label{Step4-3con}
\end{center}
\end{figure}
\begin{figure}[H]
\begin{center}
\subfigure[]{\includegraphics[width=2.24cm]{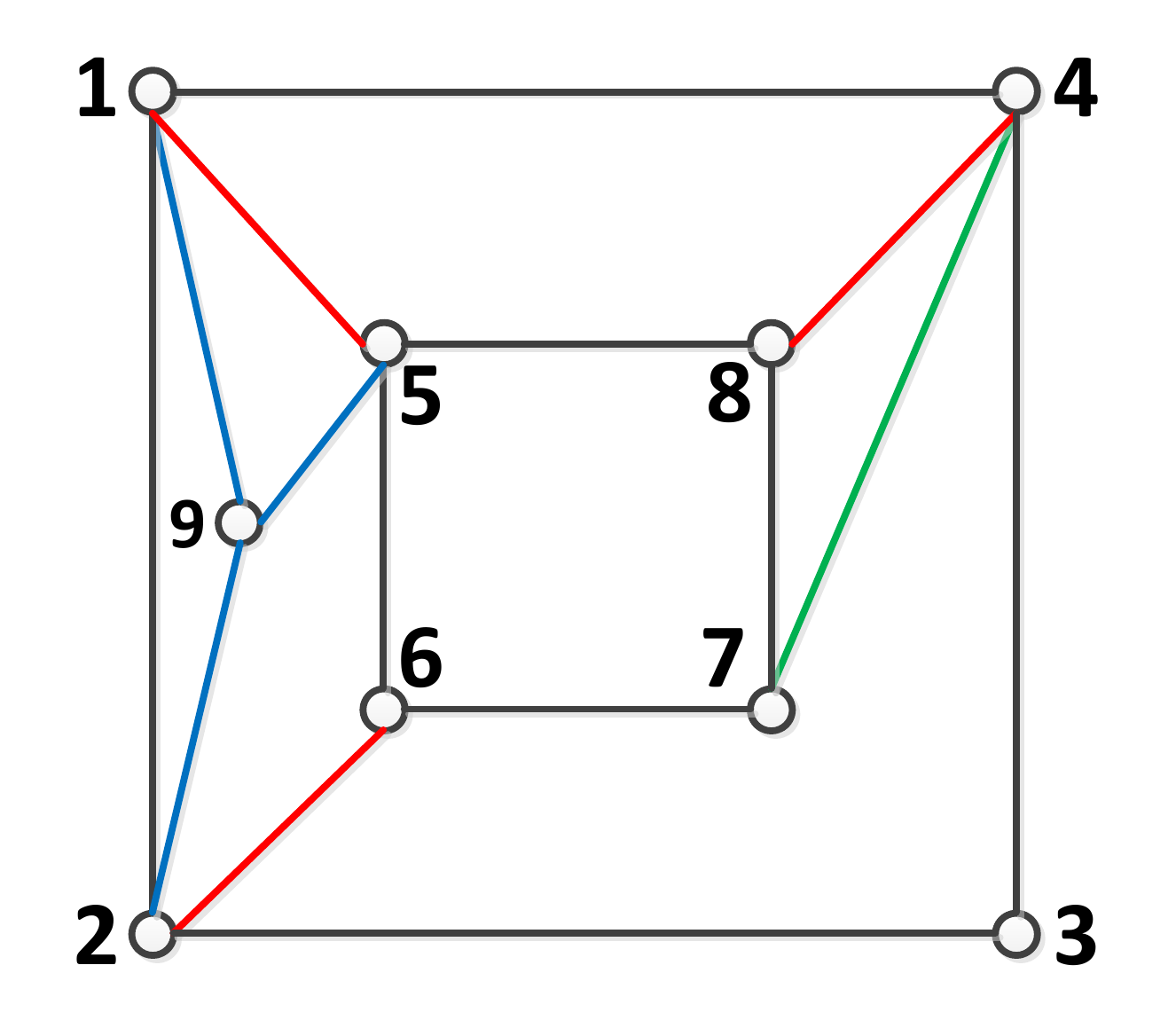}}
\subfigure[]{\includegraphics[width=2.24cm]{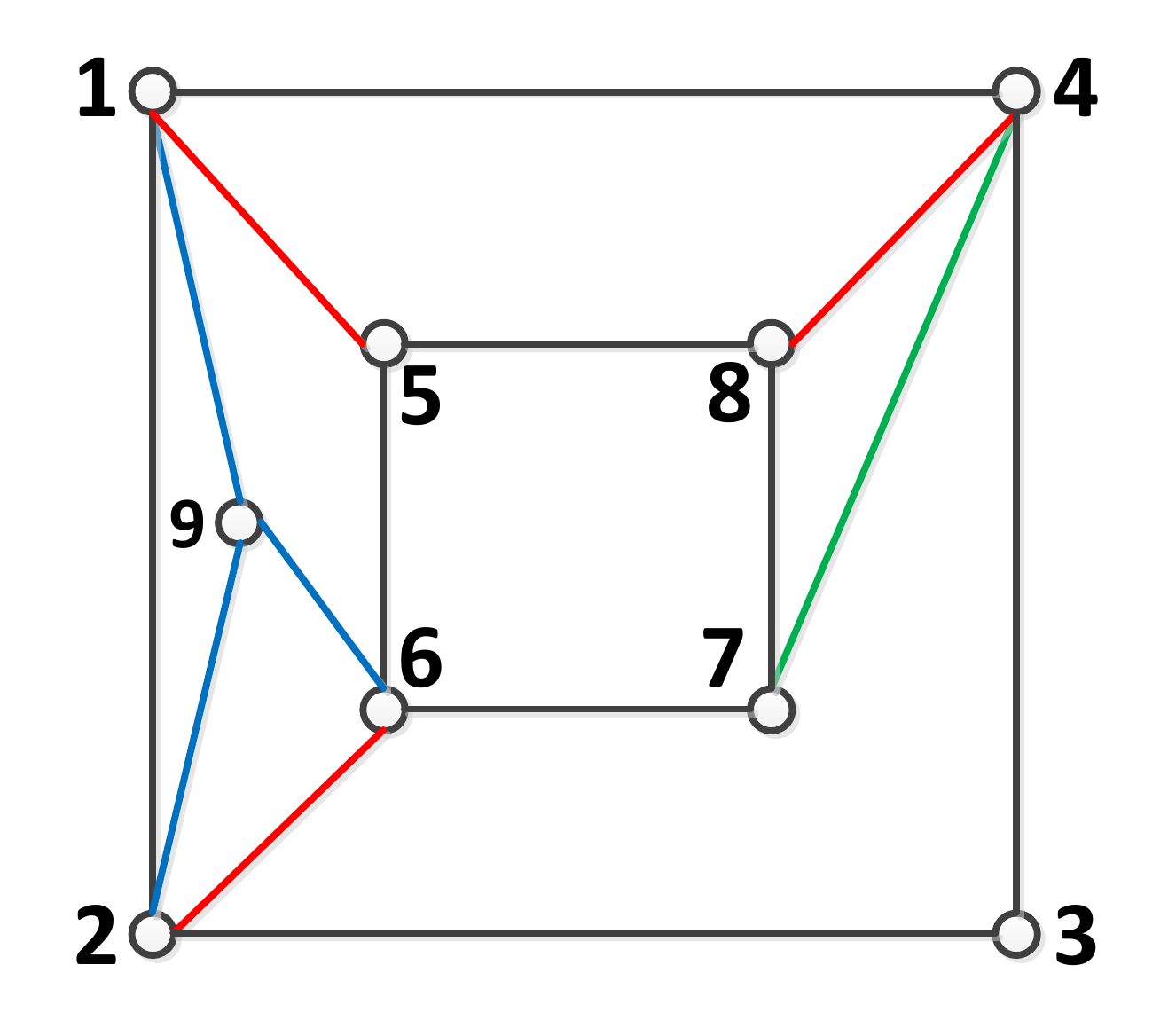}}
\subfigure[]{\includegraphics[width=2.24cm]{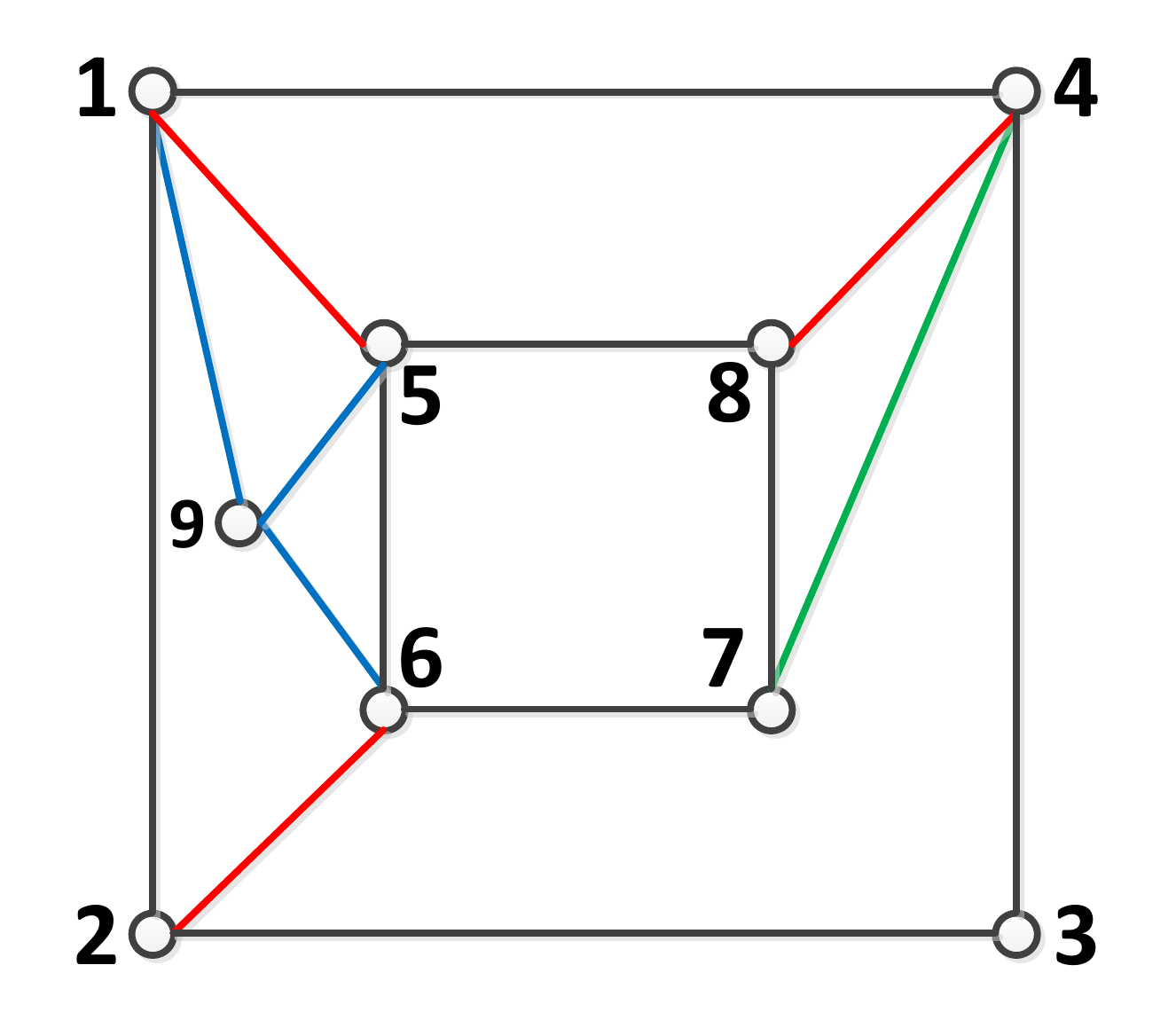}}
\subfigure[]{\includegraphics[width=2.24cm]{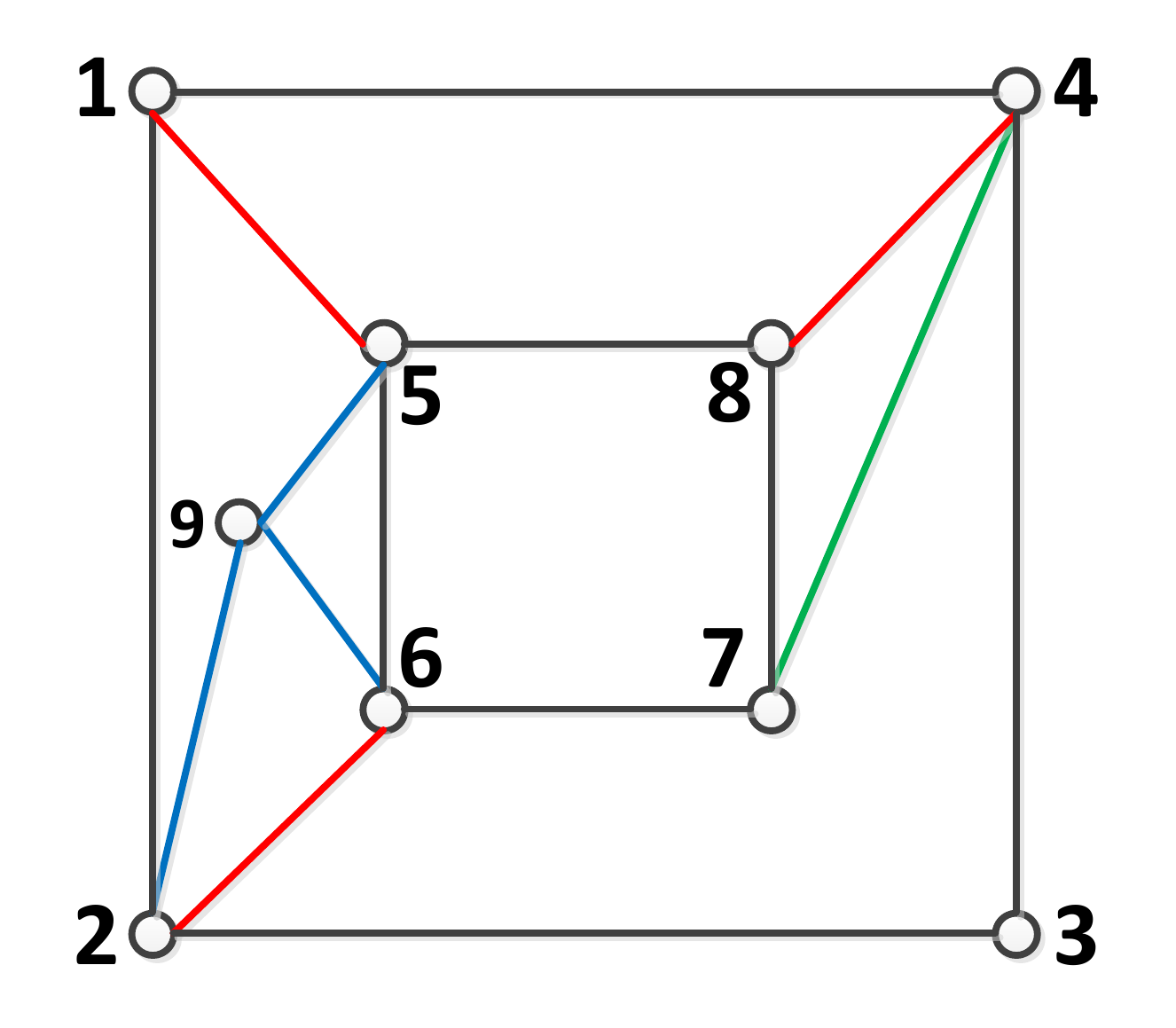}}
\caption{(a) to (d) are the topology structures corresponding, respectively, to each of the figures in Fig. \ref{Step4-3con}.}\label{Step4-3conN}
\end{center}
\end{figure}

\begin{theorem}\label{Theo2topo}
The communication graphs (a), (b), (d), (e), (f) of Fig. \ref{Step4-2conN} and (a), (b), (c) of Fig. \ref{Step4-3conN}   are perfectly controllable.
\end{theorem}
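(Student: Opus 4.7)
The plan is to invoke Theorem \ref{The1Alge} and reduce the statement to verifying, for each of the eight listed graphs, the two spectral conditions: (i) the Laplacian $L$ has pairwise distinct eigenvalues, and (ii) every eigenvector of $L$ has all nonzero entries. Since Steps 1--4 produce completely explicit nine-vertex graphs, this turns Theorem \ref{Theo2topo} into a finite check.

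First I would write down each $9\times 9$ Laplacian by reading off the base edges from Fig.~\ref{Step1con}(b) together with the edges added in Steps 2, 3 and the chosen sub-case of Step 4. To establish (i) I would compute the characteristic polynomial $p(\lambda)=\det(\lambda I-L)$ and verify that its discriminant is nonzero; because each graph is connected, $\lambda=0$ is a simple root, so one may work with $p(\lambda)/\lambda$ and show it has no repeated factor. Several of the graphs in Fig.~\ref{Step4-2conN} and Fig.~\ref{Step4-3conN} differ only by a relabelling that interchanges $\Omega_1$ with $\Omega_2$ or swaps paired double-node sets, so their Laplacian spectra coincide and only a handful of genuinely distinct characteristic polynomials must be handled explicitly.

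To establish (ii) I would argue by contradiction. Fix an eigenpair $(\lambda,\eta)$, assume $\eta_k=0$ for some index $k$, and exploit the $k$-th row of $L\eta=\lambda\eta$, which reduces to $\sum_{j\in N_k}\eta_j=0$. Combining such relations with the rows corresponding to further hypothesised zero entries, and in particular with the row at the ``connector'' vertex $9$, one propagates linear constraints across the graph; the short diameter of the constructed topologies then forces all components of $\eta$ to vanish, a contradiction. Equivalently, by Lemma \ref{prelemm} a zero entry of $\eta$ at position $k$ would correspond to an obstruction to controllability when vertex $k$ is taken as a leader, and the edge choices of Steps 2--4 are precisely meant to preclude such obstructions.

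The main obstacle I anticipate is bookkeeping rather than technique: with eight graphs, nine vertices each, and many candidate zero-patterns to rule out, the verification of condition (ii) is repetitive and error-prone. I would mitigate this by exploiting the symmetries of the construction (the bipartition $\Omega_1\cup\Omega_2$ and the automorphisms induced by interchanging matched double-node sets) to collapse many cases to a single canonical one, and by handling Cases b) and c) of Step 4 uniformly through the row equation at vertex $9$, which quickly forces a propagating chain of equalities among eigenvector components whenever some entry is assumed to vanish.
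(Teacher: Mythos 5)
Your proposal takes essentially the same route as the paper: the paper likewise invokes Theorem~\ref{The1Alge} and reduces the claim to a finite spectral check, explicitly computing for graph (a) of Fig.~\ref{Step4-2conN} the nine distinct Laplacian eigenvalues and verifying that each associated eigenvector has no zero entry, with the remaining graphs asserted to follow identically. One caution on your symmetry-reduction shortcut: the edges added in Steps~2--3 break the apparent $\Omega_1\leftrightarrow\Omega_2$ and double-node-set symmetries --- note that (c) of Fig.~\ref{Step4-2conN} and (d) of Fig.~\ref{Step4-3conN} are excluded from the theorem even though they look like relabellings of included graphs --- so each of the eight Laplacians really must be checked individually rather than collapsed to a canonical representative.
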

\begin{proof}
The proof is conducted only for graph (a) of Fig. \ref{Step4-2conN}. The other graphs can be proved in the same way.

For graph (a) of Fig. \ref{Step4-2conN}, calculations show that the following two aspects hold simultaneously.
\begin{itemize}
\item Eigenvalues of the Laplacian $L$ of this graph  are distinct from each other, which are, respectively, $0, 1.1834, 1.6463, 2.4581, 2.7853, 3.9468, 4.5771, 5.1780$ and $6.2250;$
\item The eigenvector associated with each of these eigenvalues has no zero entries. Note that since the graph is undirected, there is one and only one independent eigenvector corresponding to each of the aforementioned eigenvalues.
\end{itemize}
By Theorem \ref{The1Alge}, the system is controllable under any selection of leaders, where both the number and the locations of leaders are arbitrary.
\end{proof}

To create more perfectly controllable graphs, we perform the following construction process.

\textbf{Step 5}
In this step, topology structures are further designed by combining Fig. \ref{Step4-1con} and Fig. \ref{Step4-2con}. More specifically, the graphs (a) and (b) of Fig. \ref{Step5-1con} is obtained by combing (a) of Fig. \ref{Step4-2con}, respectively, with (a) and (b) of Fig. \ref{Step4-1con}. Similarly, (c) and (d) of Fig. \ref{Step5-1con} is obtained by combing (b) of Fig. \ref{Step4-2con}, respectively, with (a) and (b) of Fig. \ref{Step4-1con}. For each graph of Fig. \ref{Step4-2con}, the same procedure is repeated to get two new graphs in Fig. \ref{Step5-1con}. In this way, we have a total of 12 graphs in Fig. \ref{Step5-1con}.

\begin{figure}[H]
\begin{center}
\subfigure[]{\includegraphics[width=2.43cm]{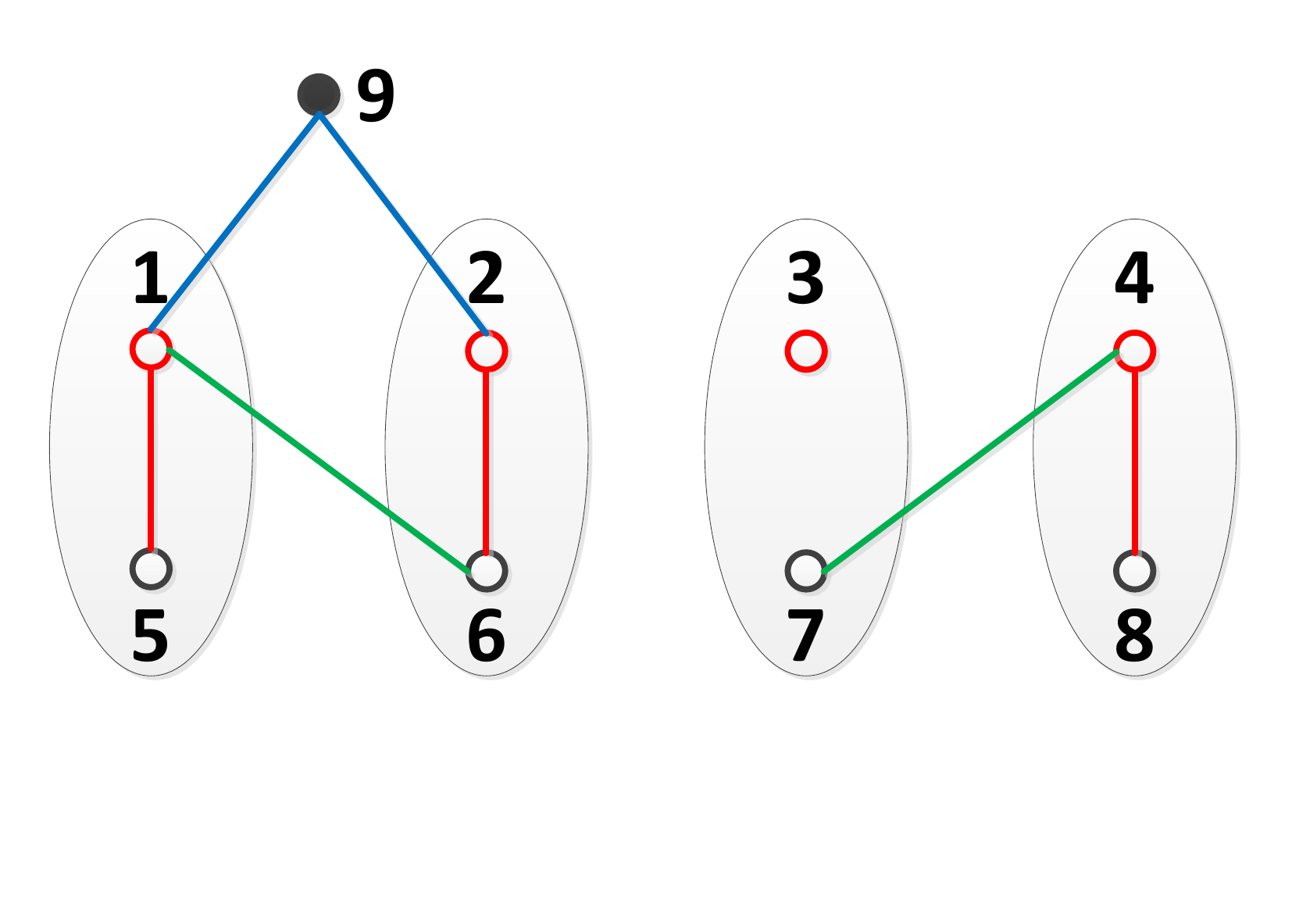}}\qquad
\subfigure[]{\includegraphics[width=2.43cm]{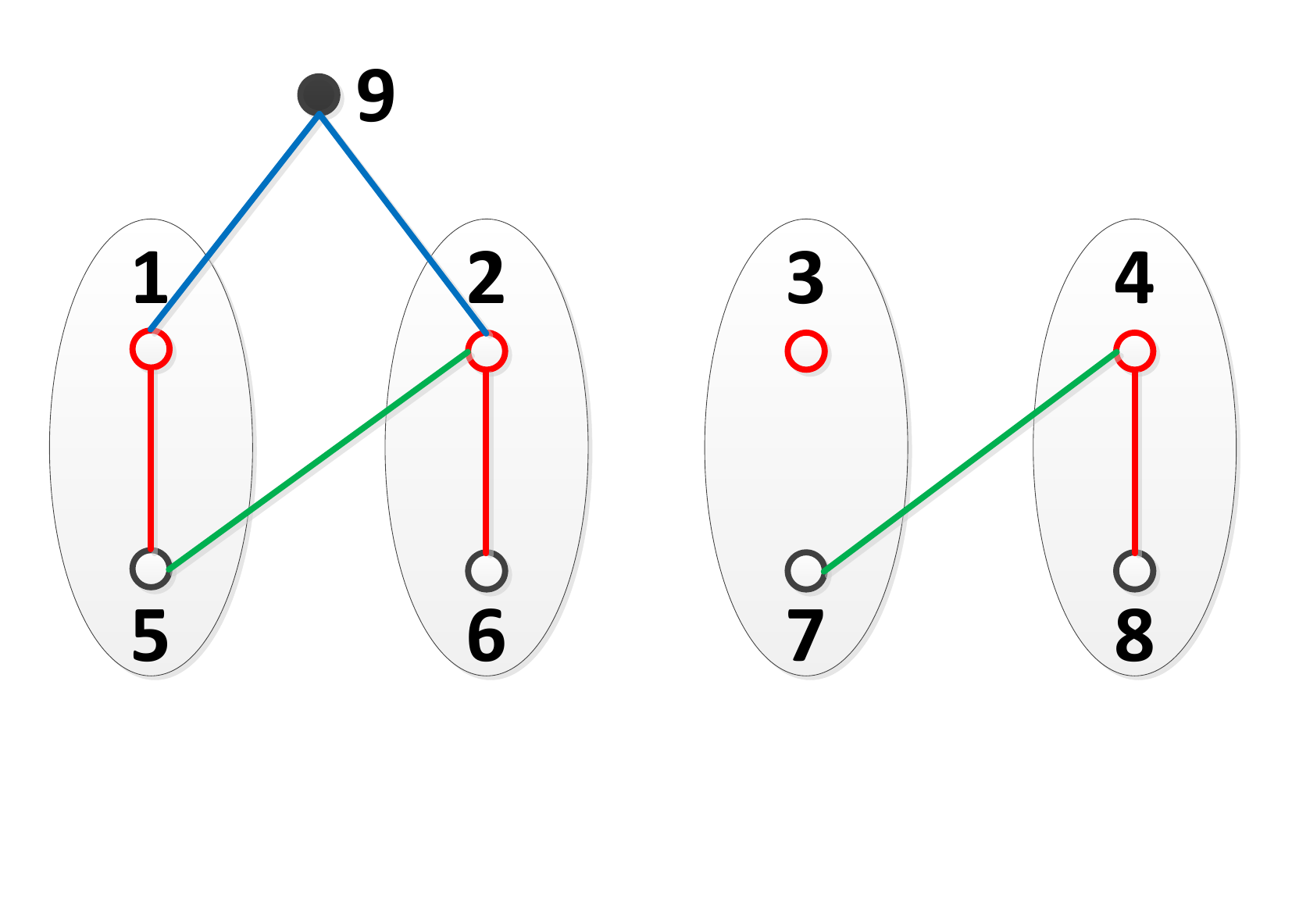}}\qquad
\subfigure[]{\includegraphics[width=2.43cm]{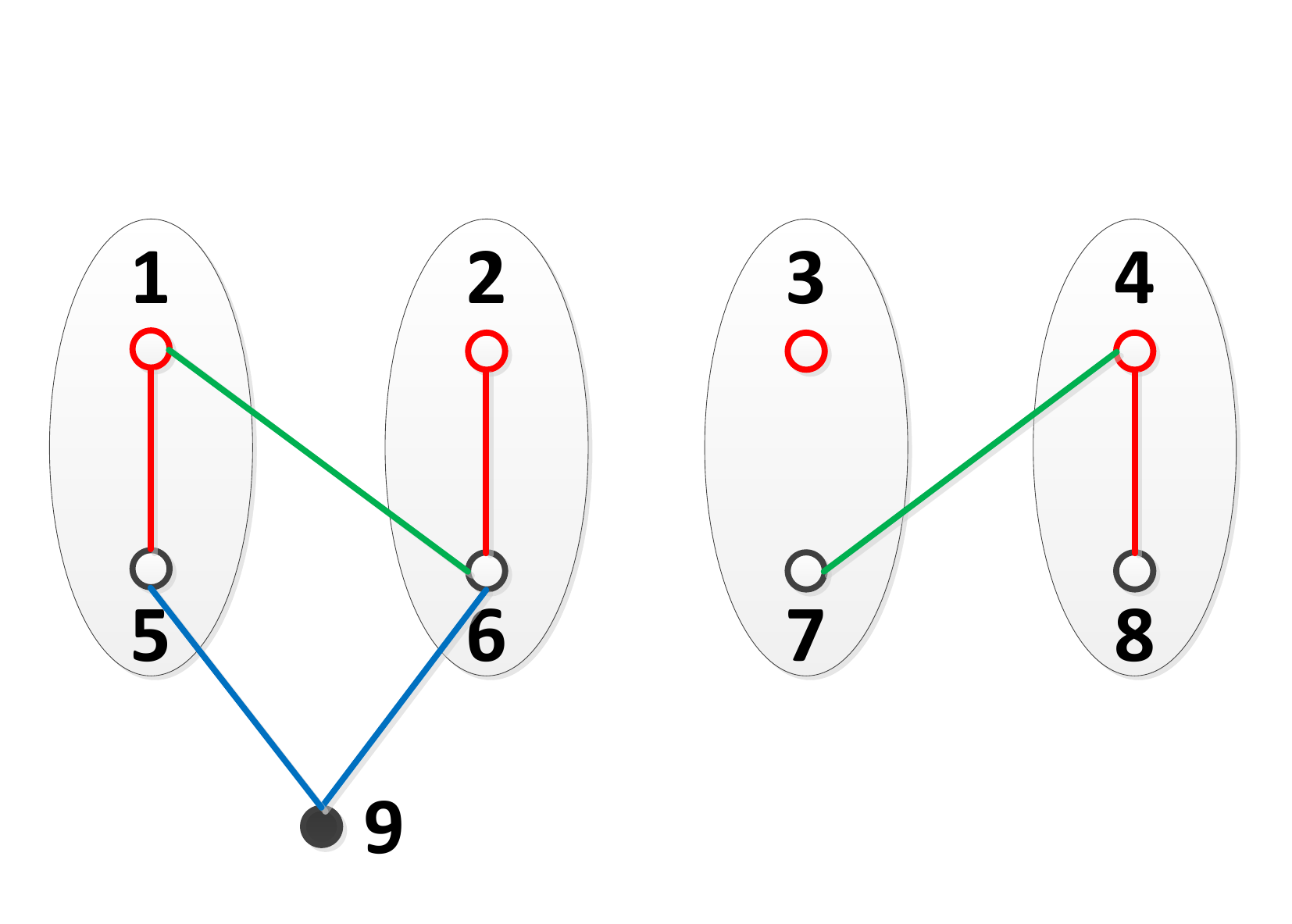}}
\subfigure[]{\includegraphics[width=2.43cm]{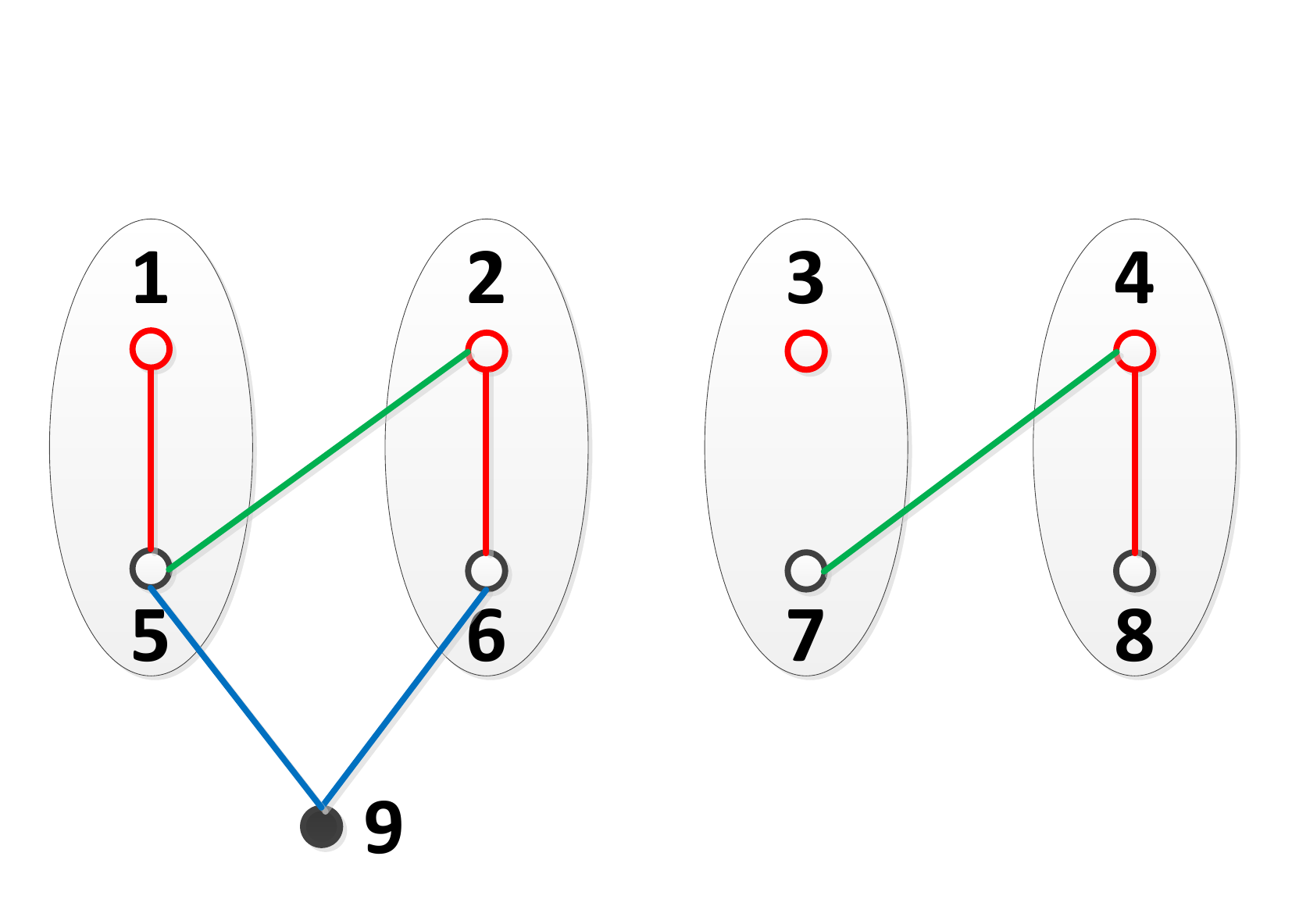}}\qquad
\subfigure[]{\includegraphics[width=2.43cm]{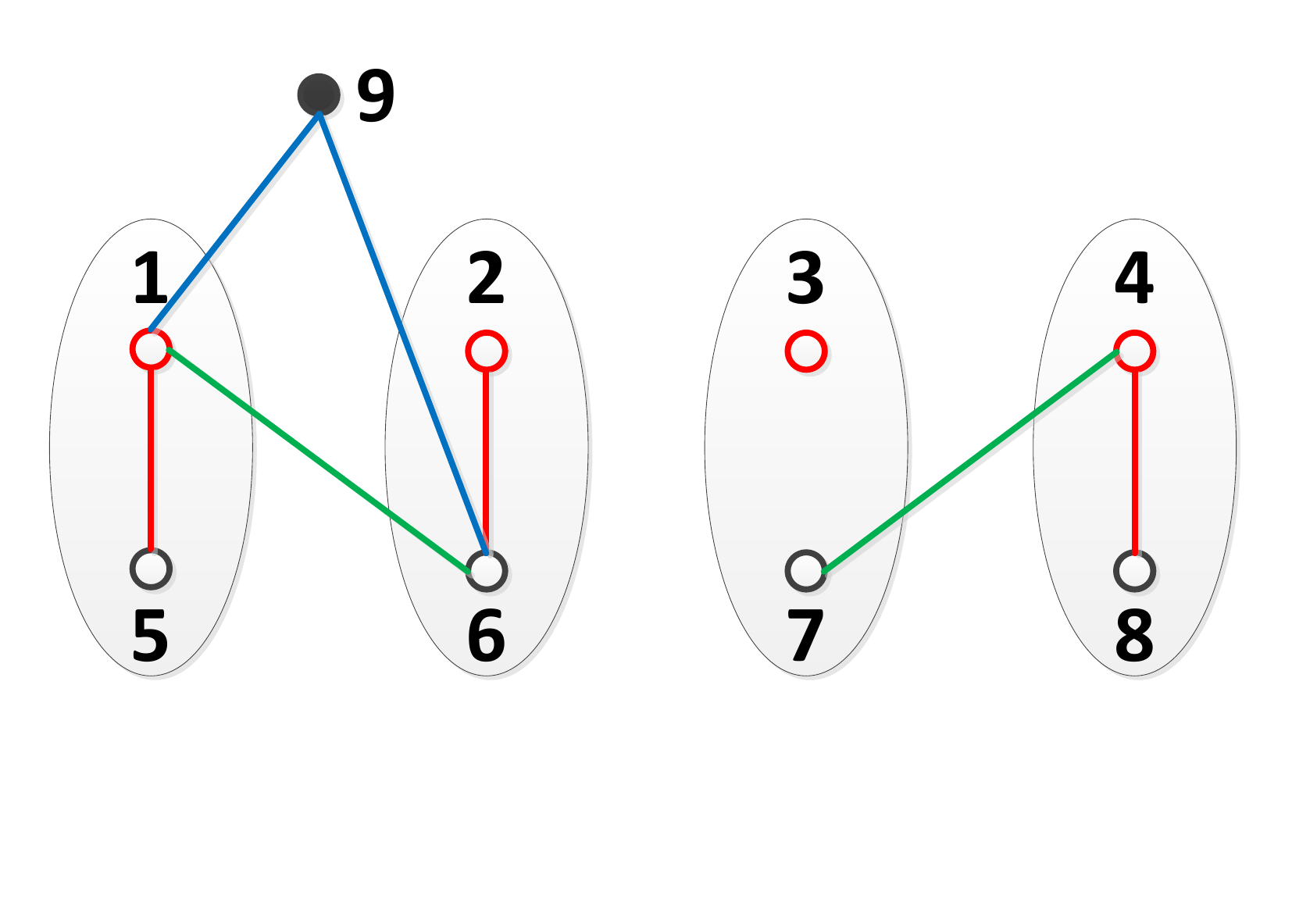}}\qquad
\subfigure[]{\includegraphics[width=2.43cm]{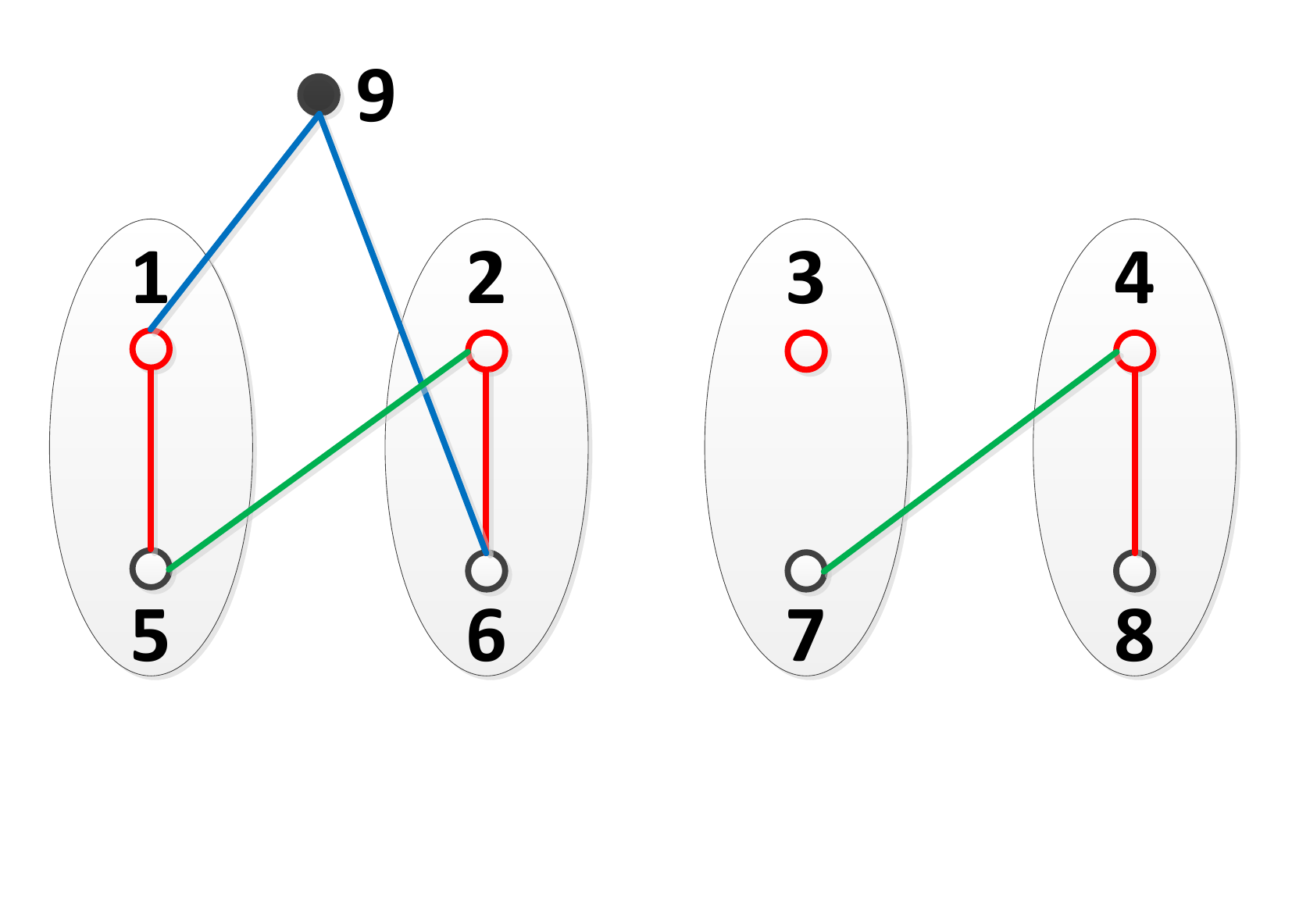}}
\subfigure[]{\includegraphics[width=2.43cm]{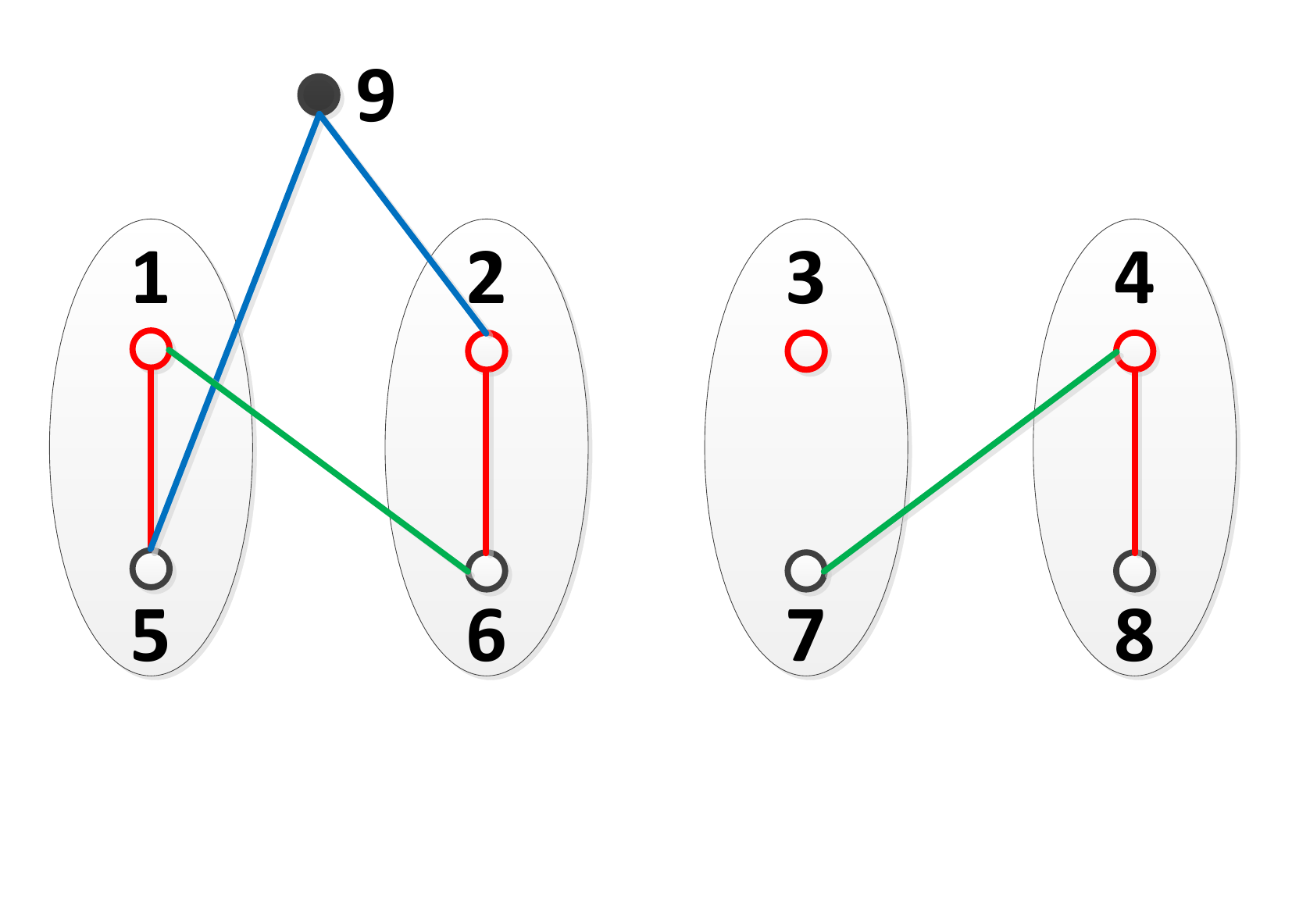}}\qquad
\subfigure[]{\includegraphics[width=2.43cm]{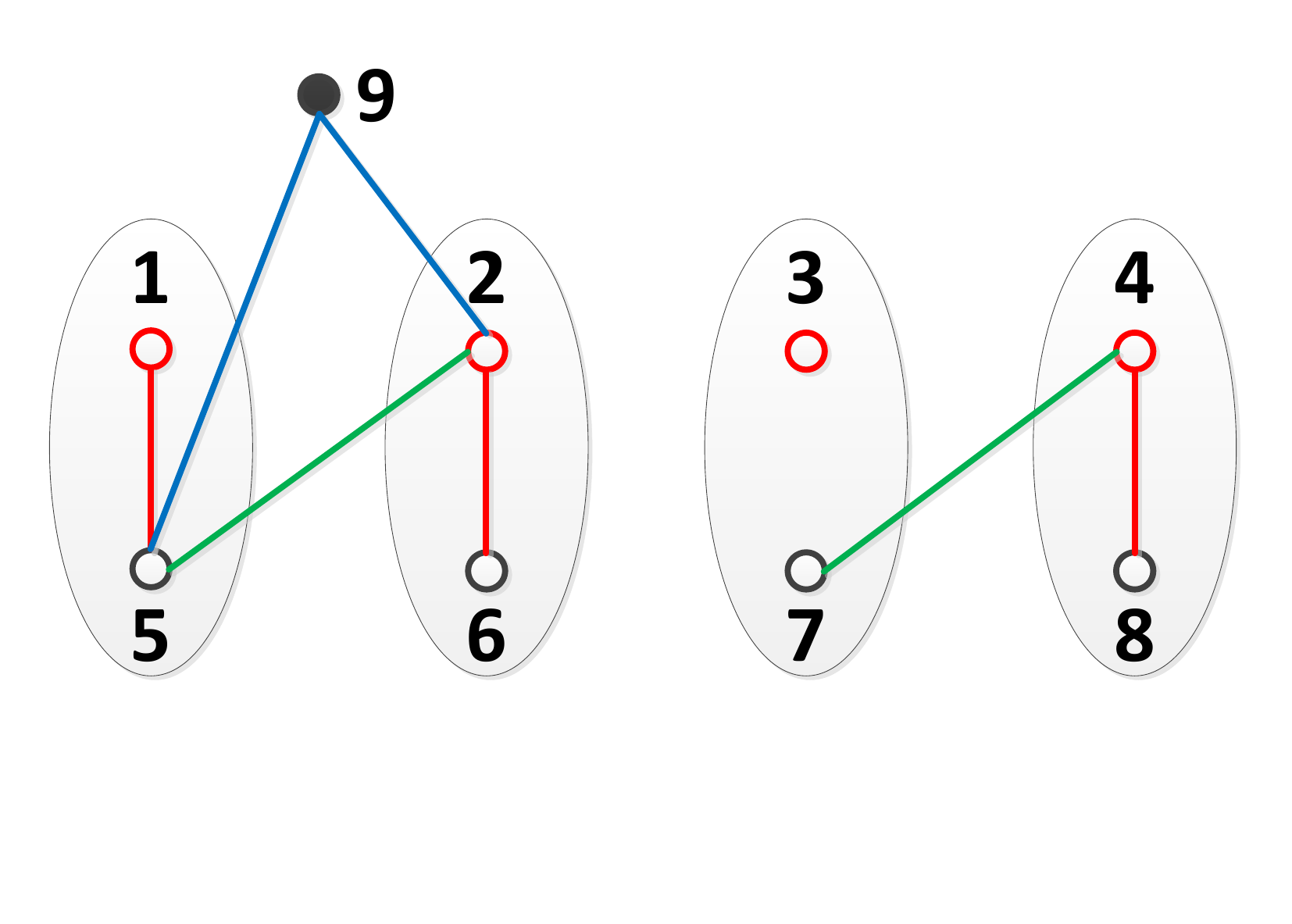}}\qquad
\subfigure[]{\includegraphics[width=2.43cm]{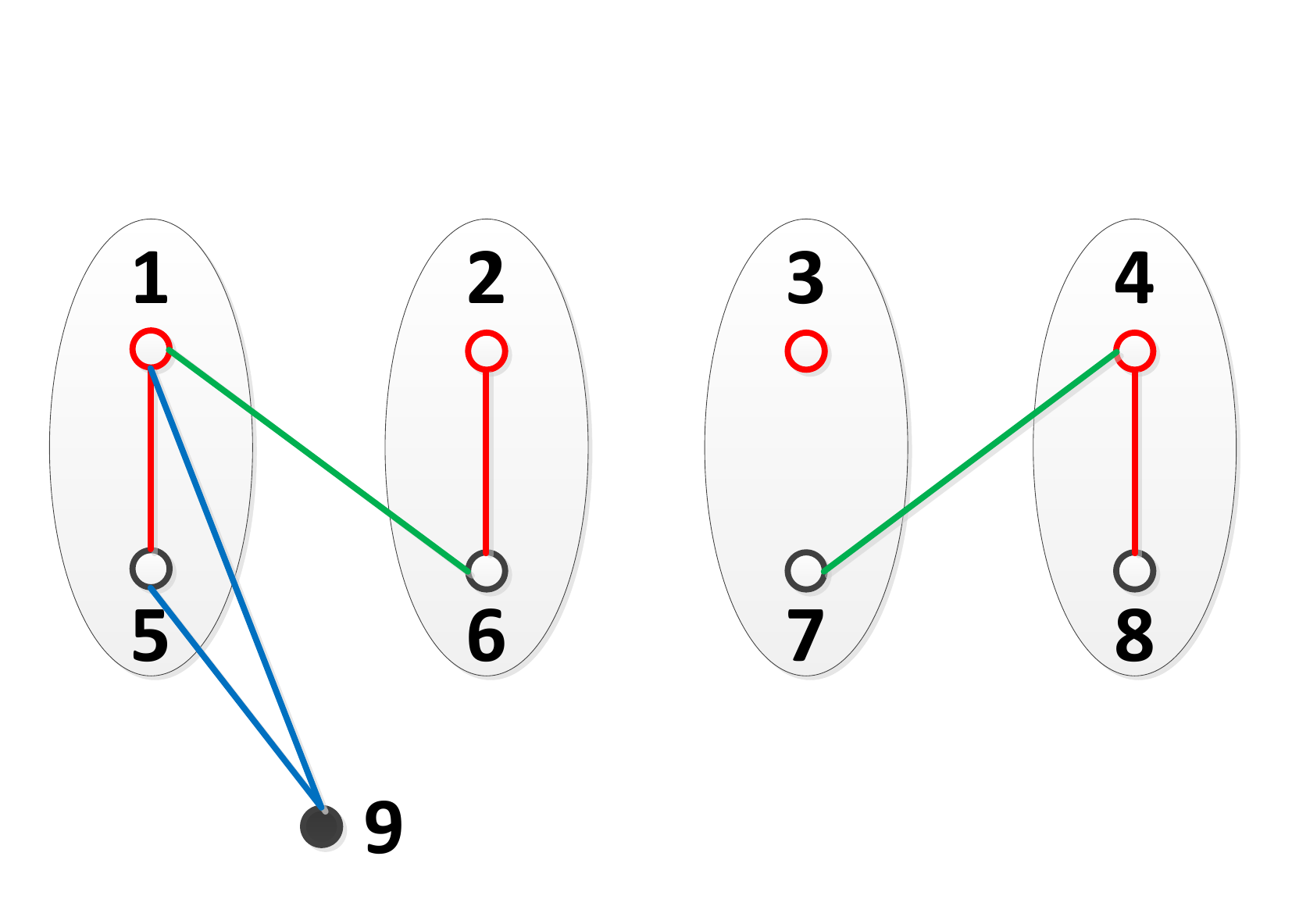}}
\subfigure[]{\includegraphics[width=2.43cm]{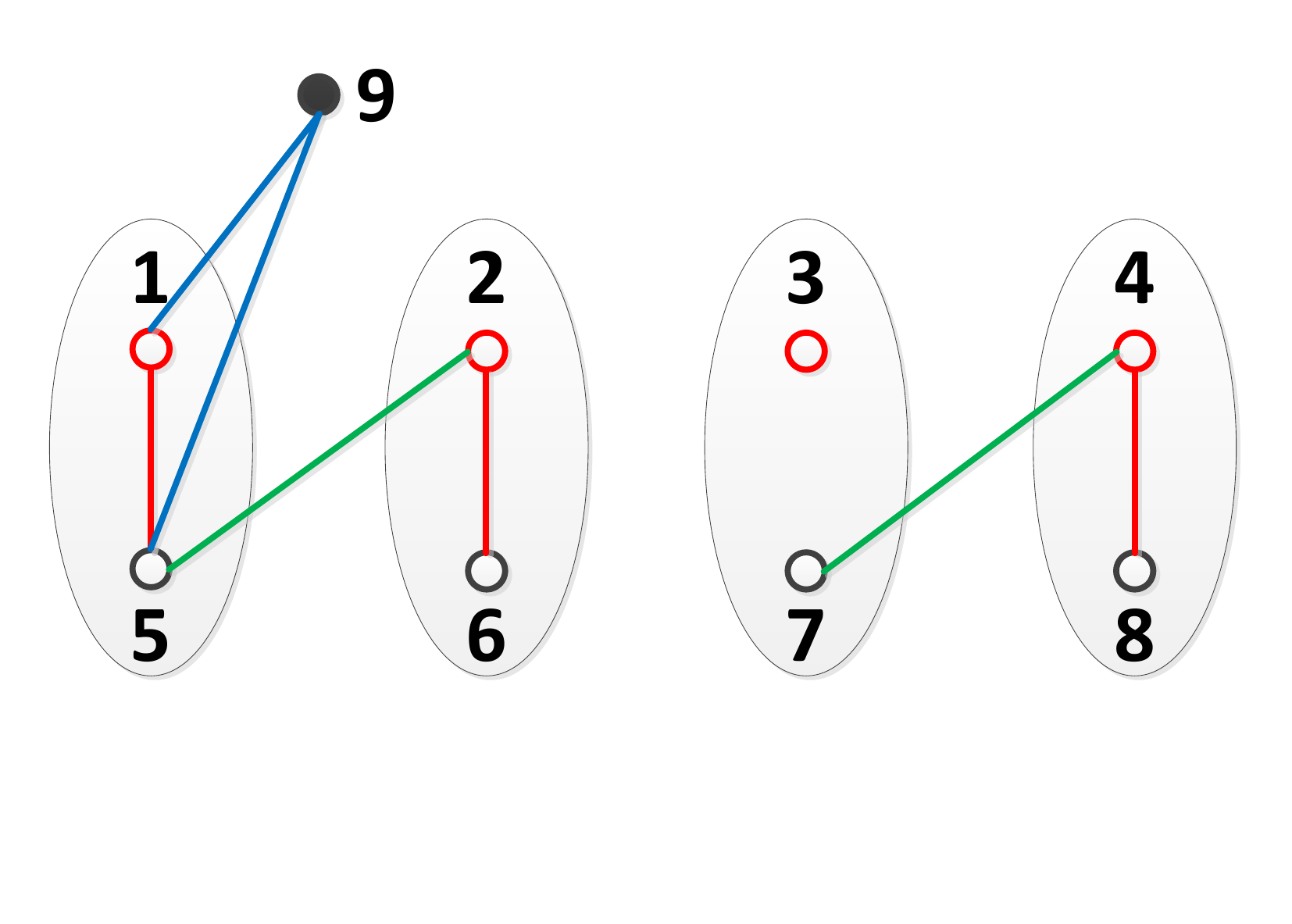}}\qquad
\subfigure[]{\includegraphics[width=2.43cm]{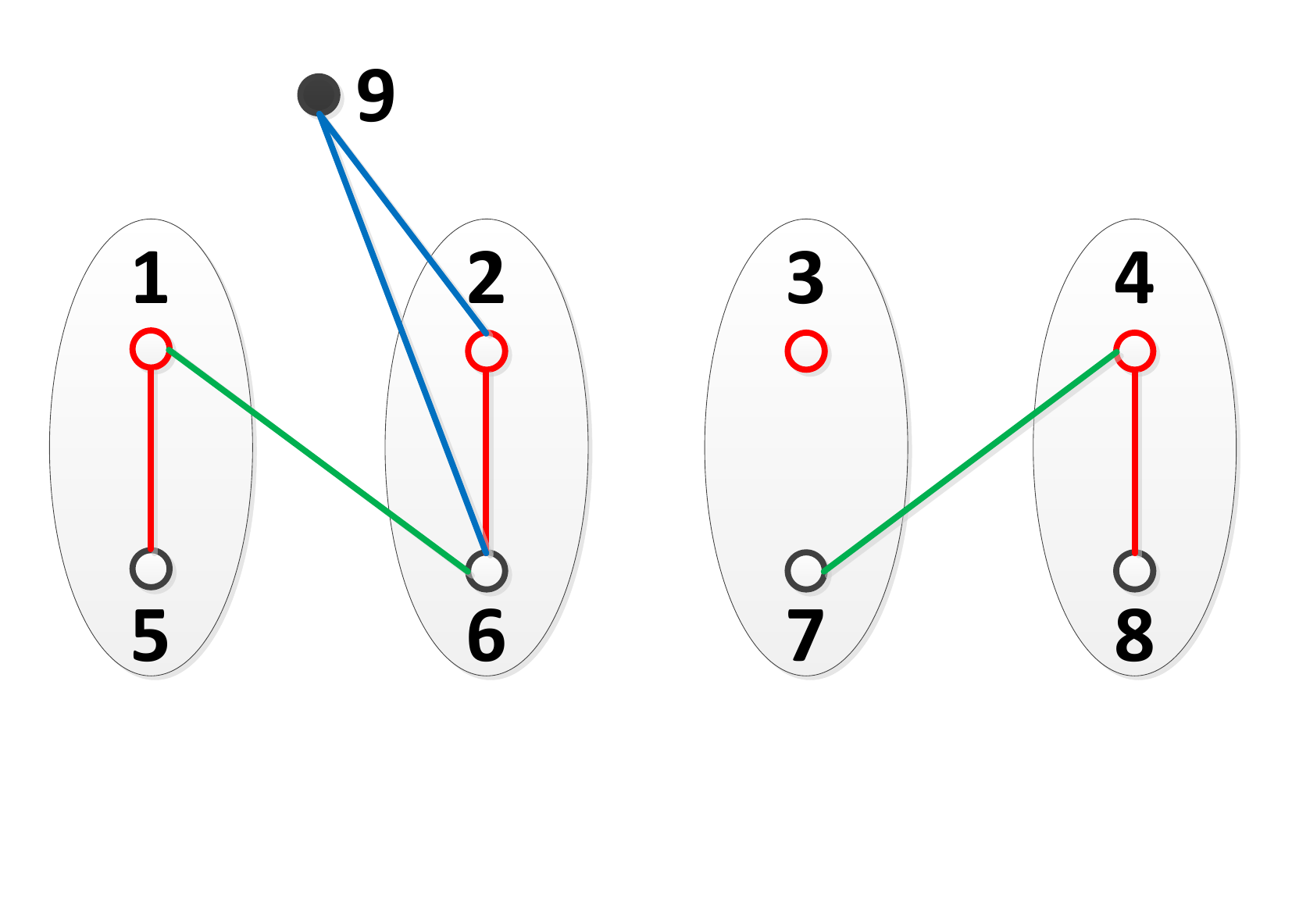}}\qquad
\subfigure[]{\includegraphics[width=2.43cm]{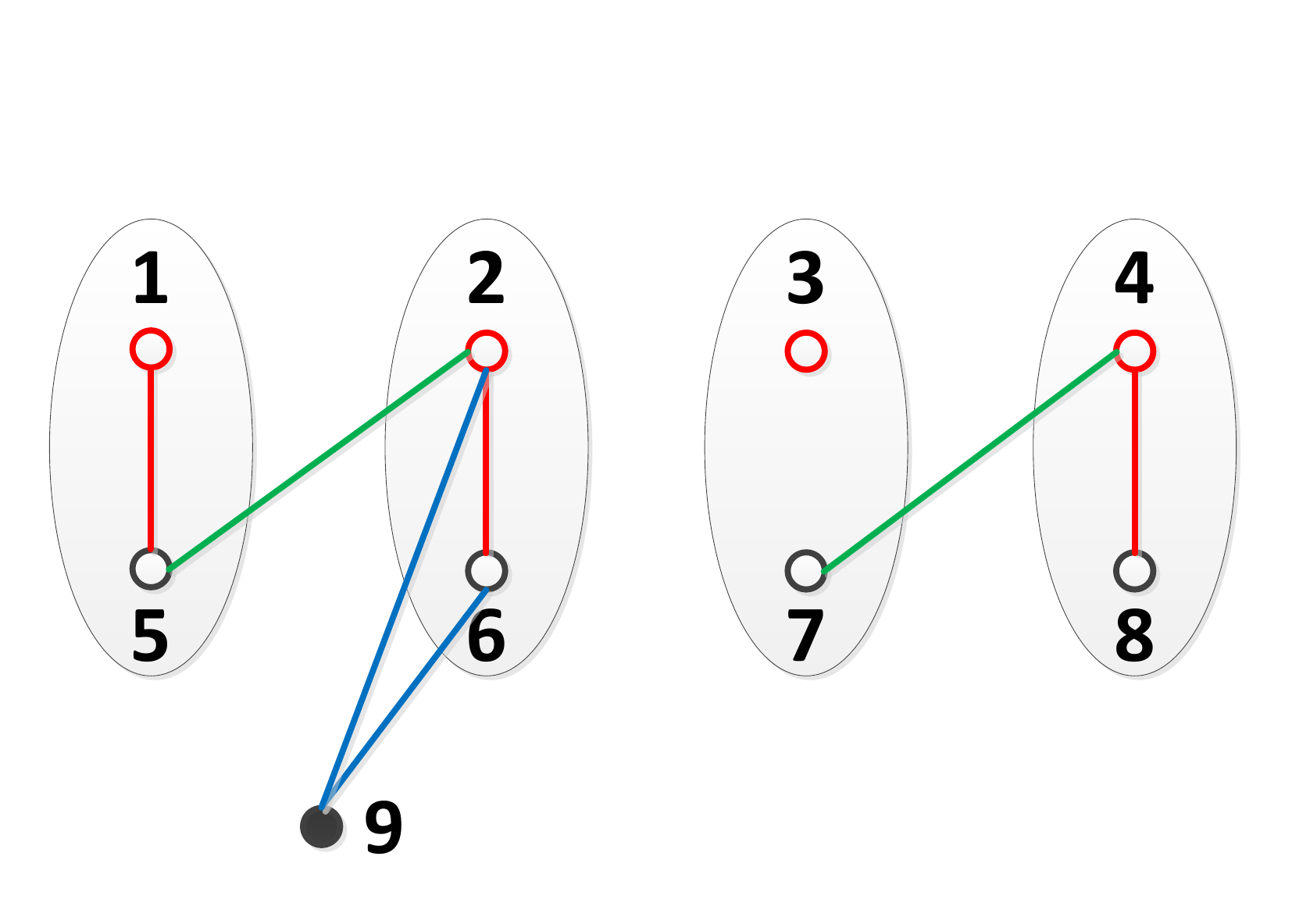}}
\caption{(a) to (l) are graphs designed by combining Figs. \ref{Step4-1con} and \ref{Step4-2con}.}\label{Step5-1con}
\end{center}
\end{figure}
\begin{figure}[H]
\begin{center}
\subfigure[]{\includegraphics[width=2.24cm]{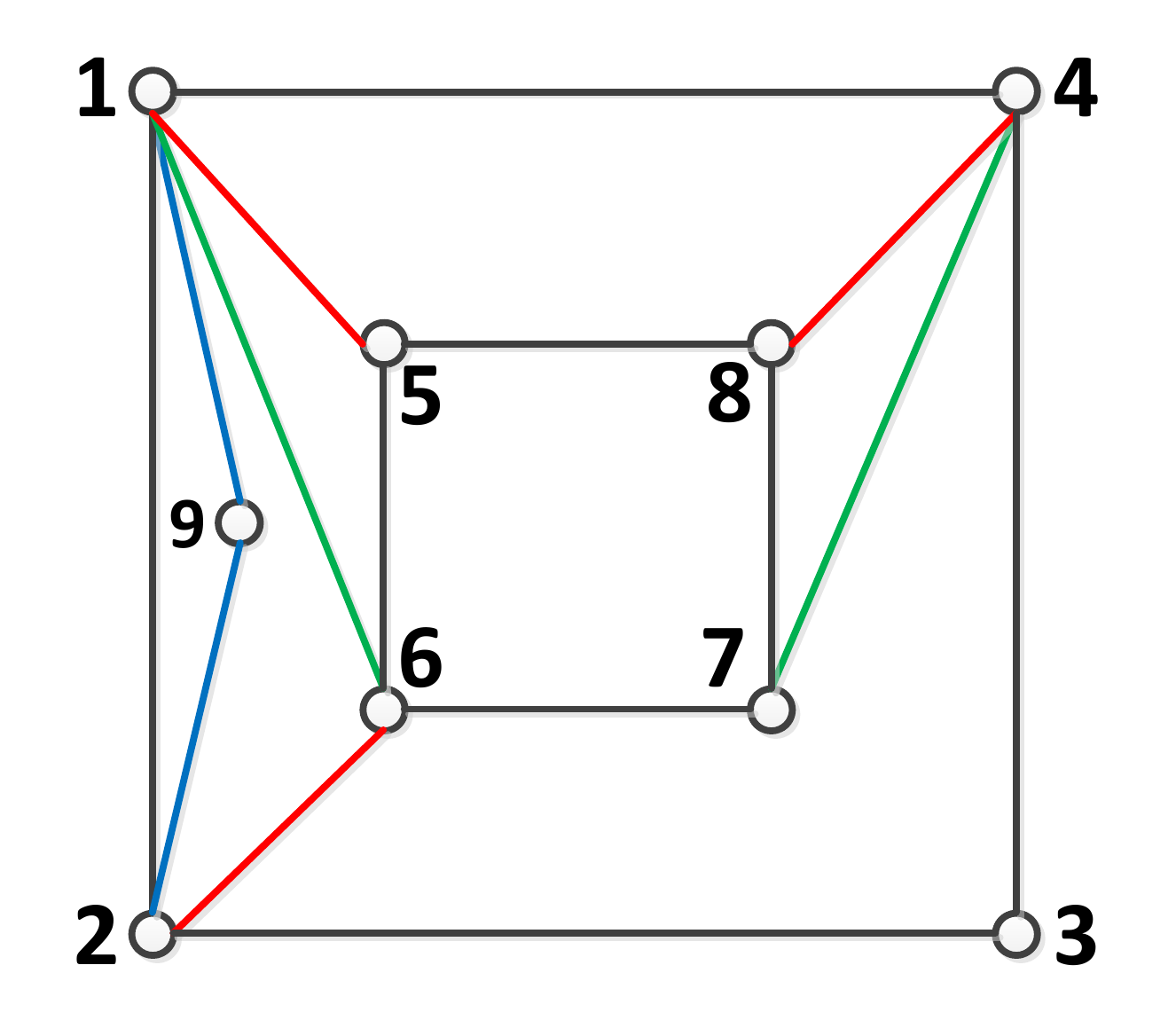}}
\subfigure[]{\includegraphics[width=2.24cm]{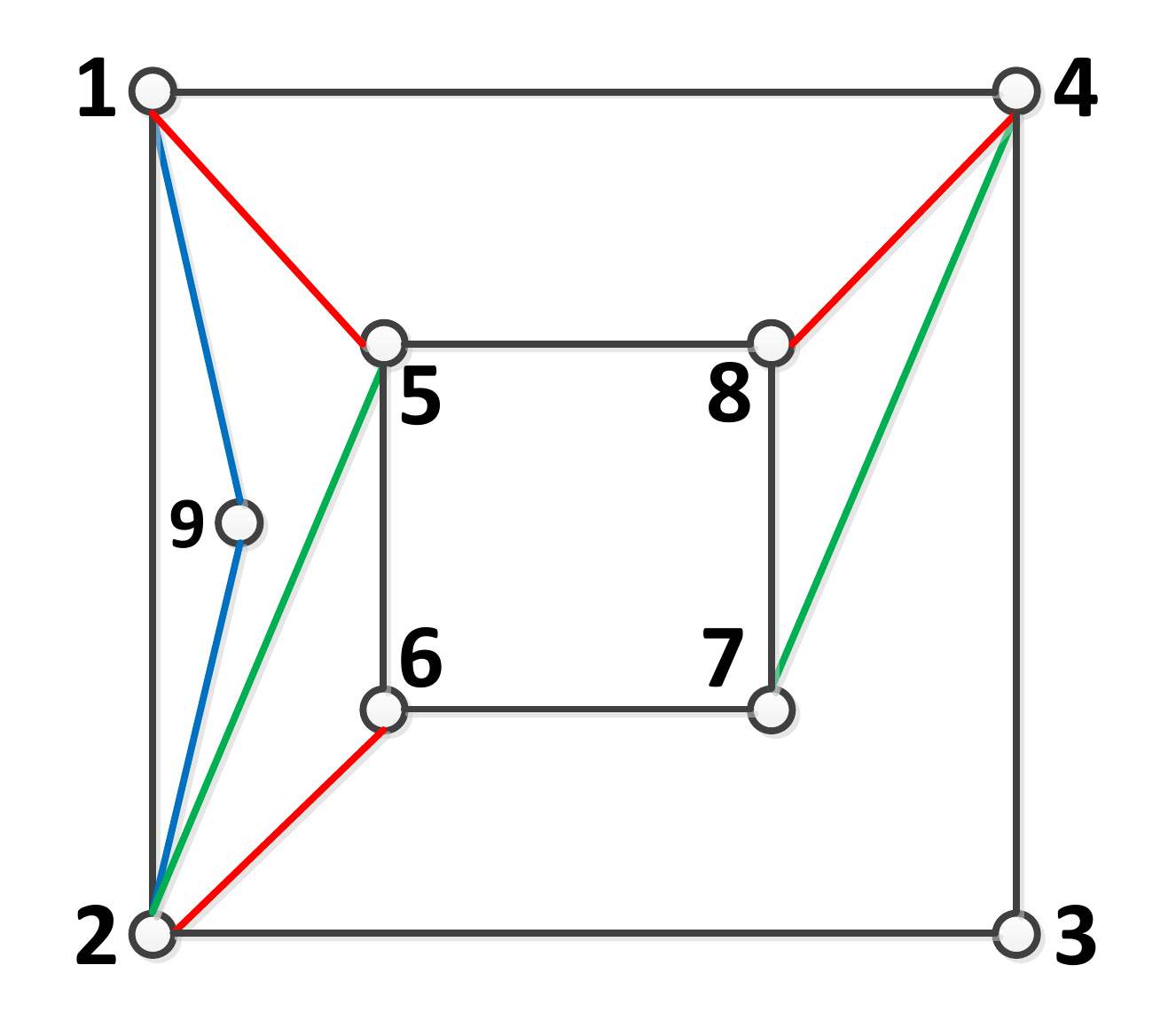}}
\subfigure[]{\includegraphics[width=2.24cm]{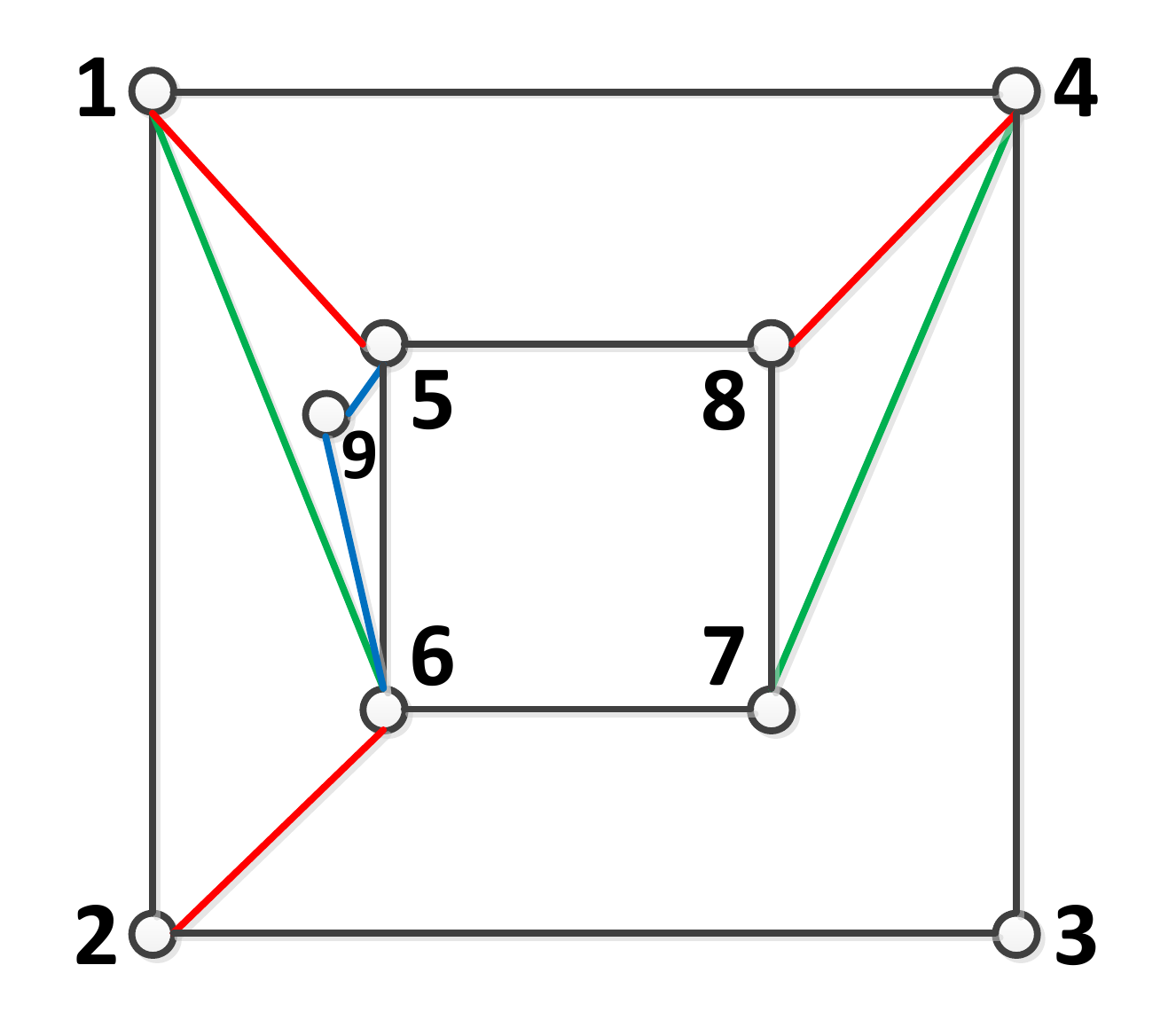}}
\subfigure[]{\includegraphics[width=2.24cm]{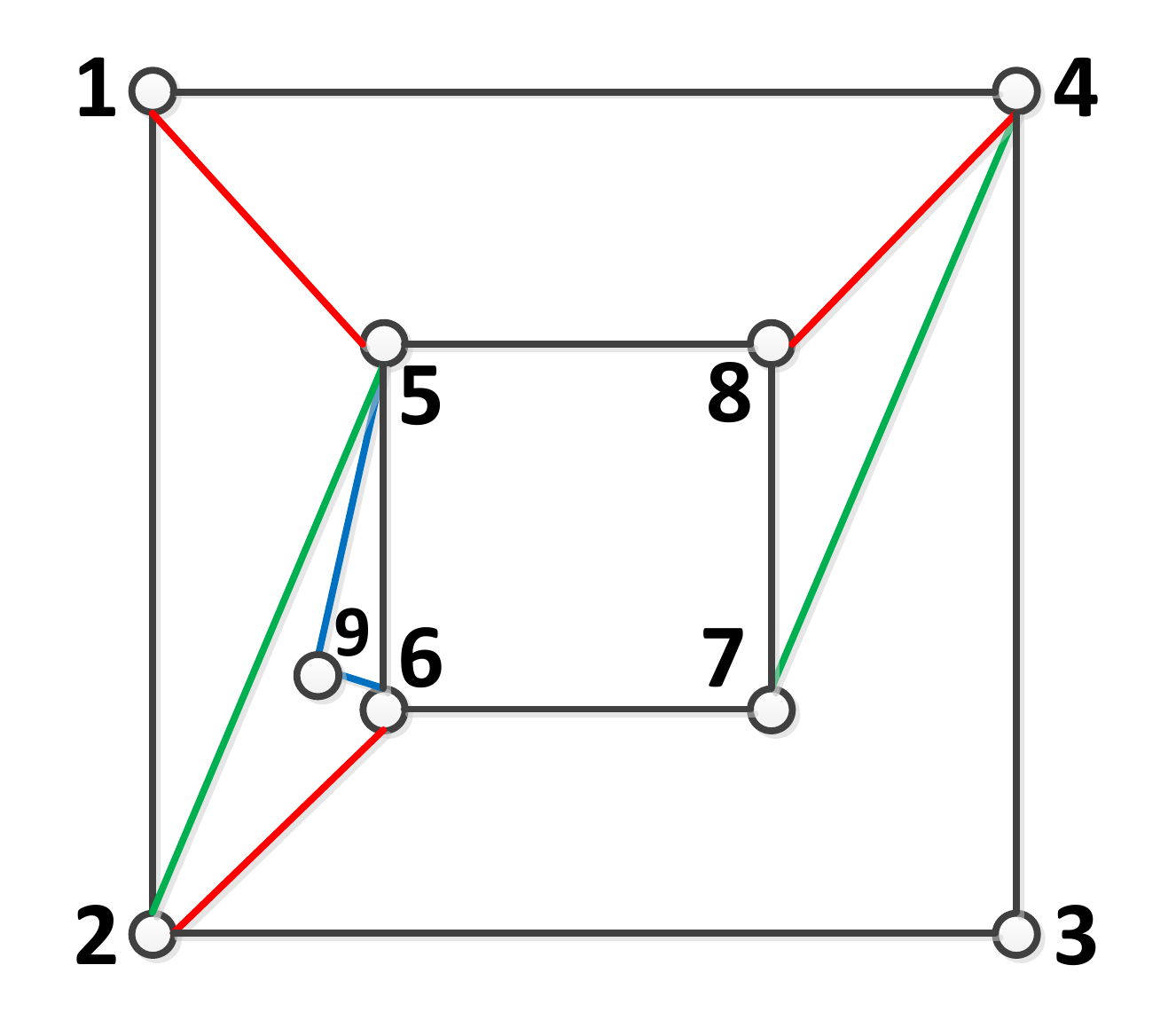}}\\
\subfigure[]{\includegraphics[width=2.24cm]{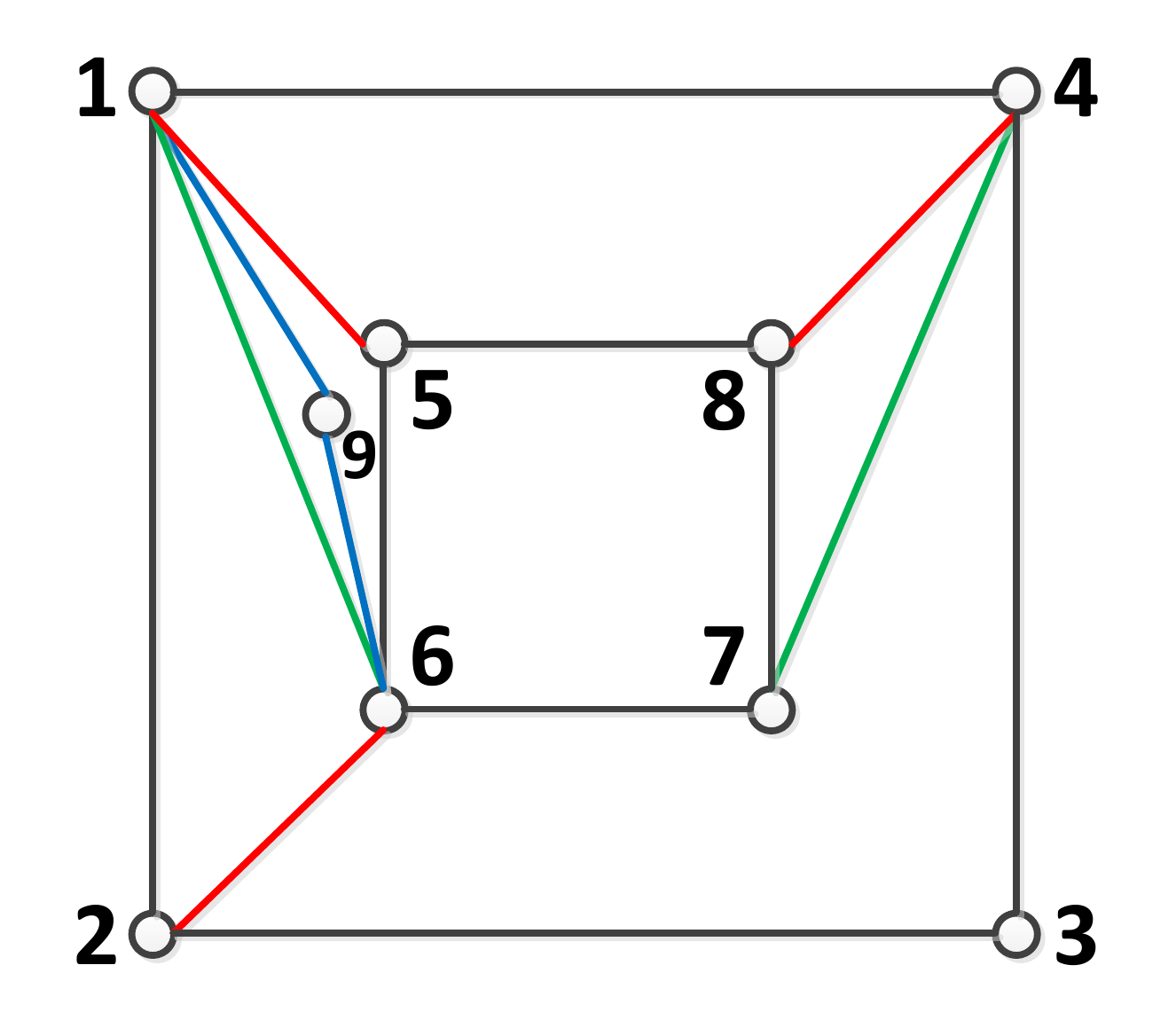}}
\subfigure[]{\includegraphics[width=2.24cm]{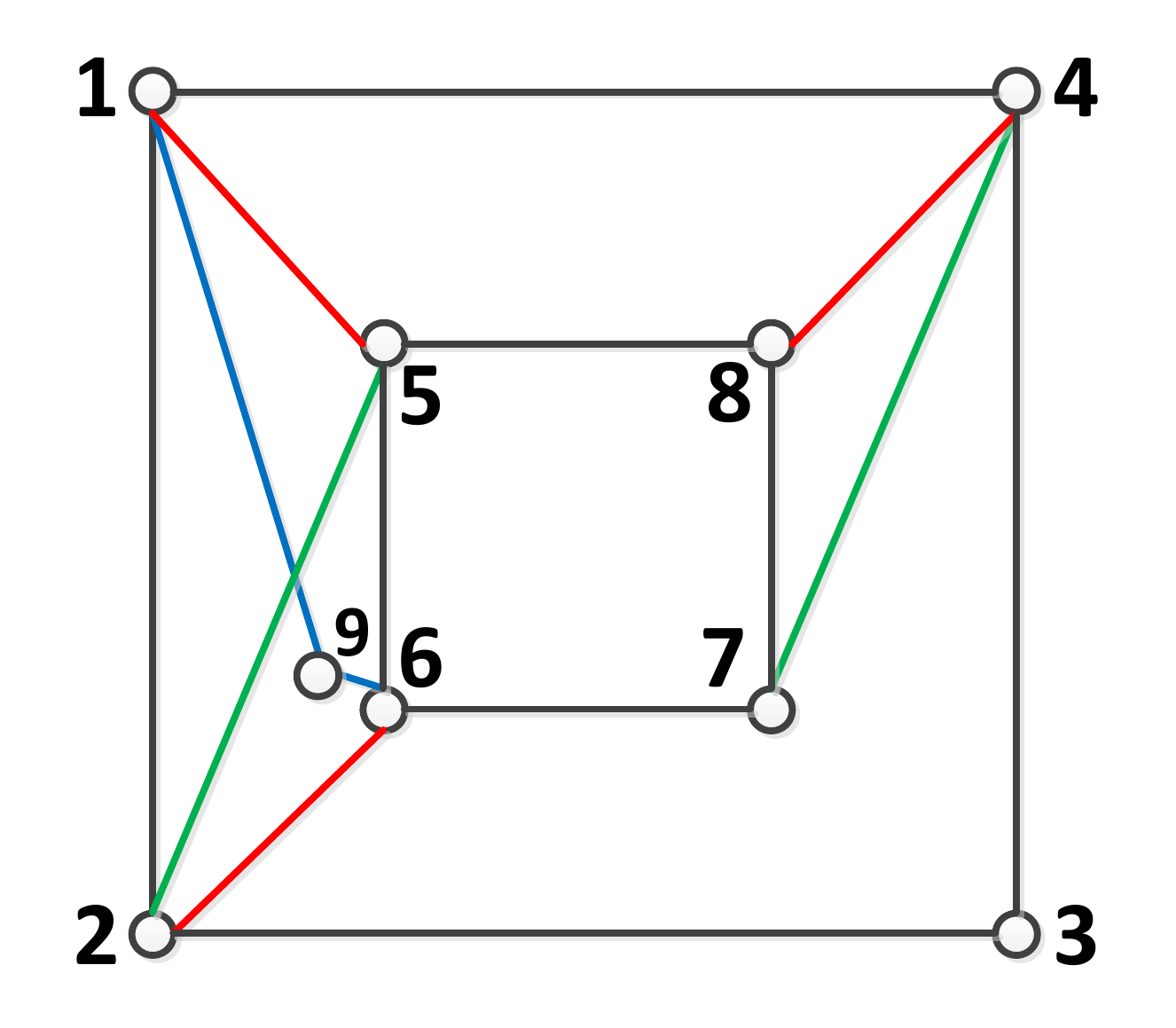}}
\subfigure[]{\includegraphics[width=2.24cm]{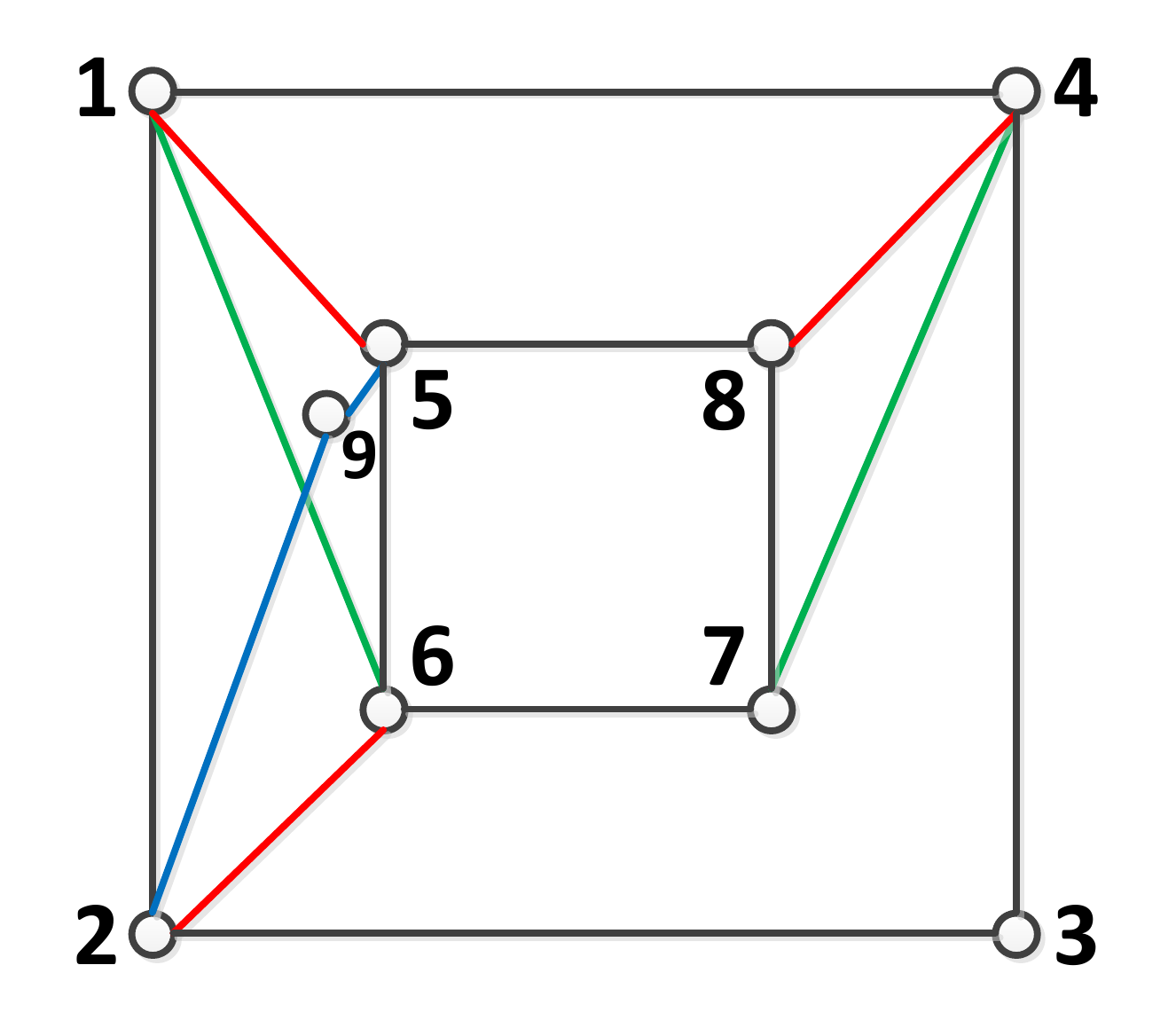}}
\subfigure[]{\includegraphics[width=2.24cm]{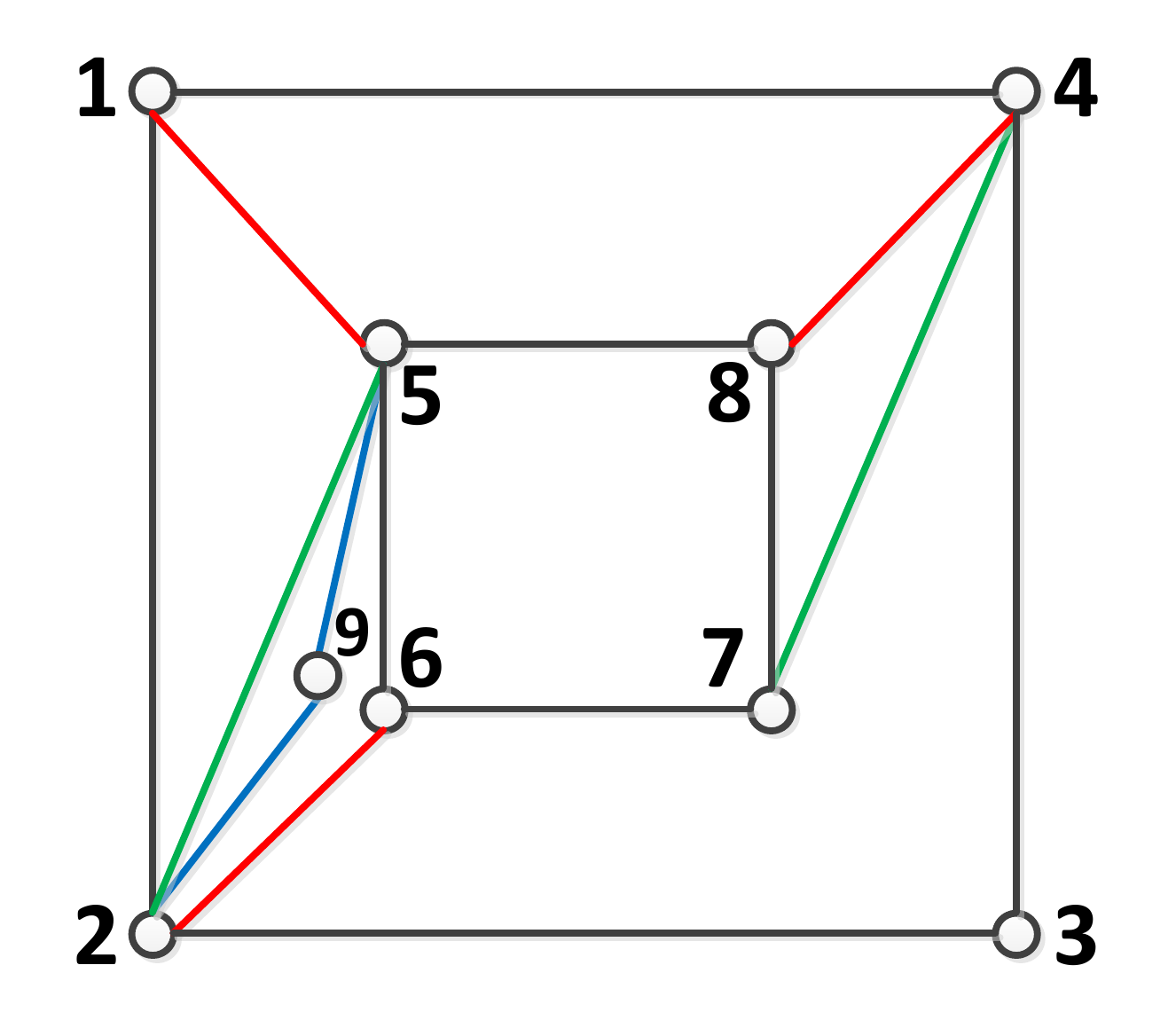}}\\
\subfigure[]{\includegraphics[width=2.24cm]{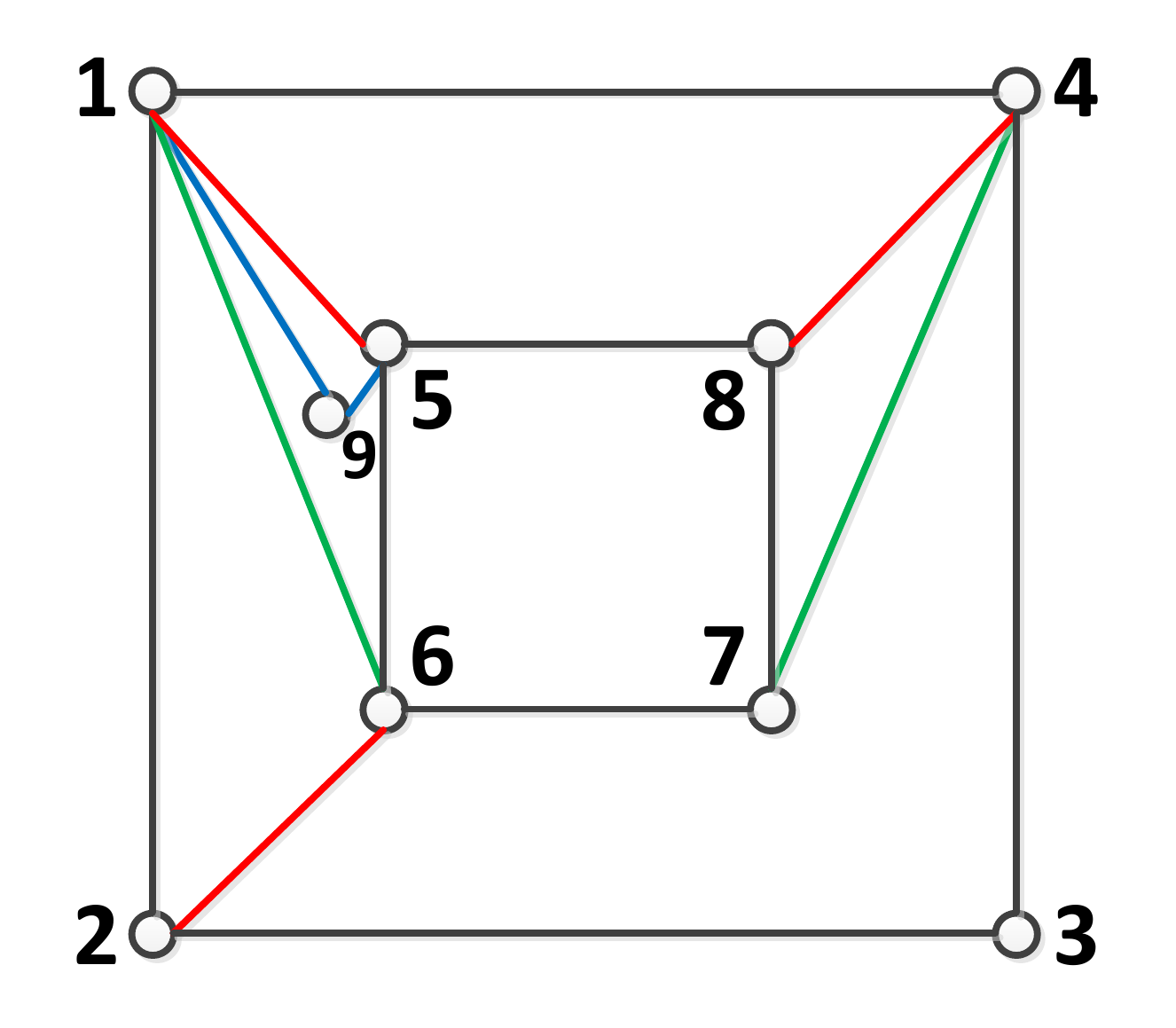}}
\subfigure[]{\includegraphics[width=2.24cm]{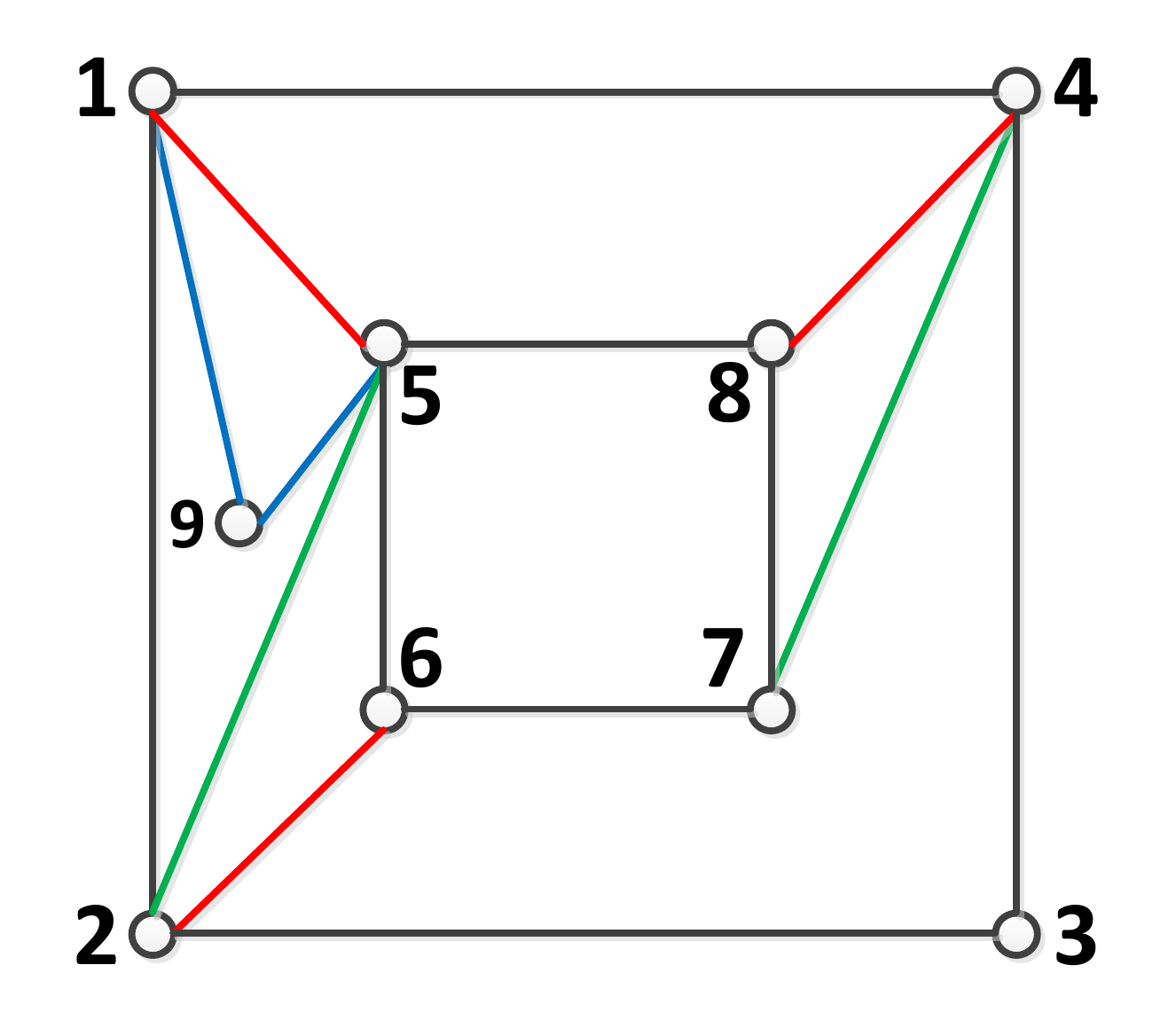}}
\subfigure[]{\includegraphics[width=2.24cm]{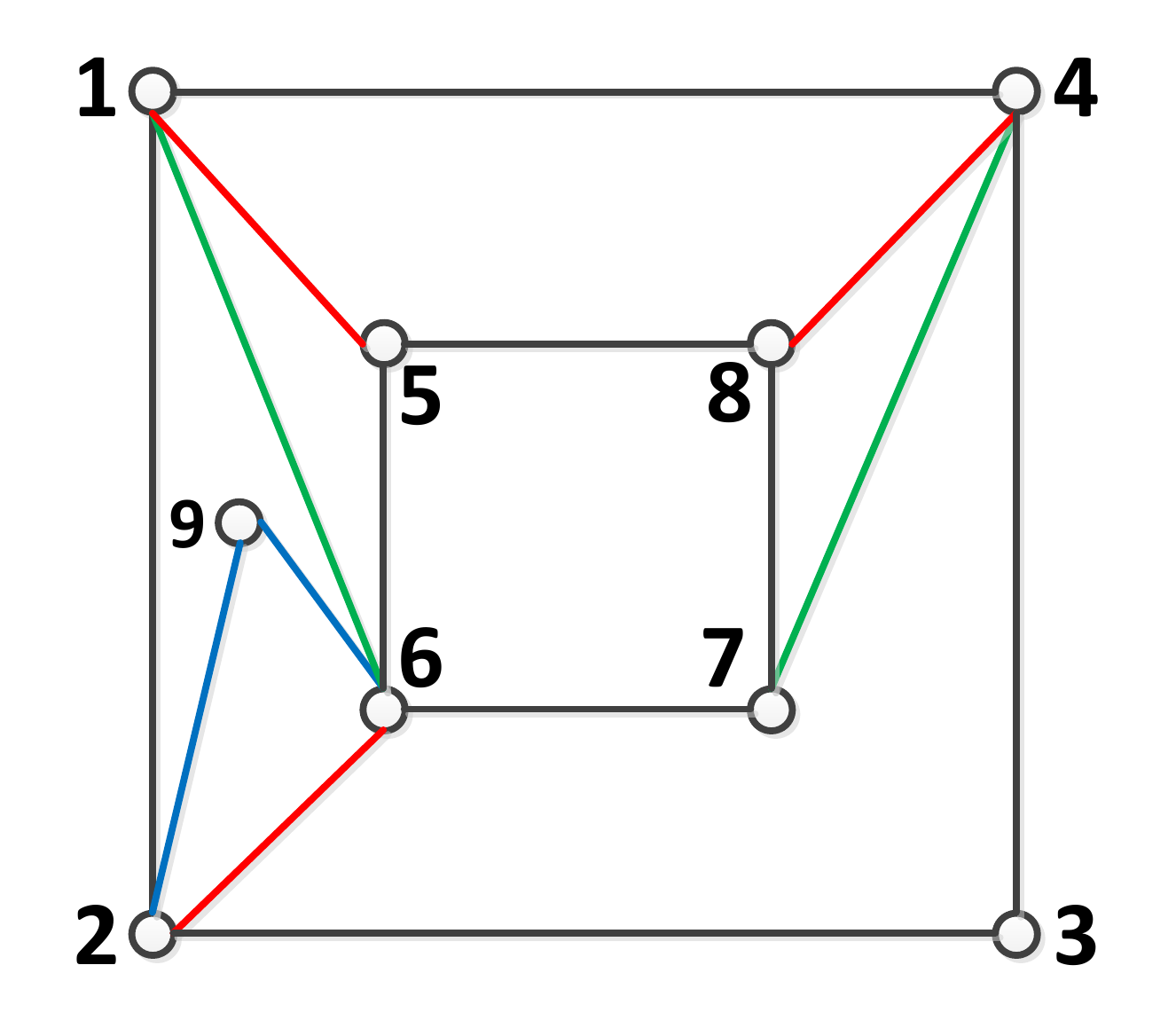}}
\subfigure[]{\includegraphics[width=2.24cm]{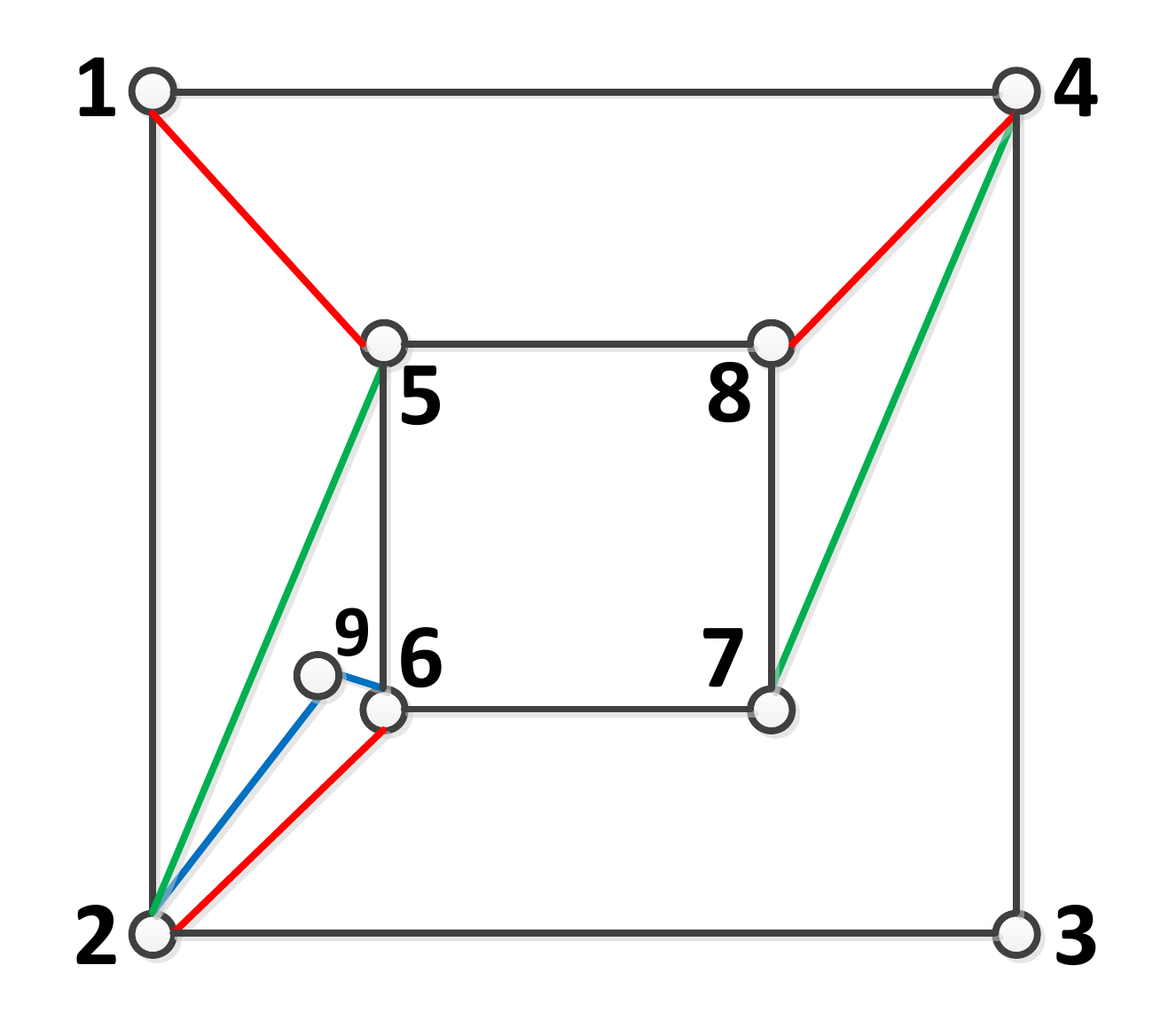}}
\caption{(a) to (l) are the topology structures corresponding, respectively, to each of the figures in Fig. \ref{Step5-1con}.}\label{Step5-1conN}
\end{center}
\end{figure}

\textbf{Step 6} Now let us focus on the Case c) of Step 4 since the other two Cases a) and b) of Step 4 have been utilized in the construction in Step 5. As to Case c) of Step 4, there are four corresponding graphs in Fig. \ref{Step4-3con}. For each of them, we only consider the two unfixed double nodes sets, between which a new edge is to be designed according to the following two rules:
\begin{itemize}
\item[i)] The two nodes to be connected to form the newly designed edge belong to different sets $\Omega_i, i=1,2;$
\item[ii)]The two nodes to be connected are not in the same double nodes set;
\item[iii)] The newly designed edge does not intersect with any other edge.
\end{itemize}
For the two unfixed double nodes sets, the newly designed edge is colored green between them in each graph of Fig. \ref{Step6-1con} and the corresponding topology structures are depicted in Fig. \ref{Step6-1conN}.

\begin{figure}[H]
\begin{center}
\subfigure[]{\includegraphics[width=2.43cm]{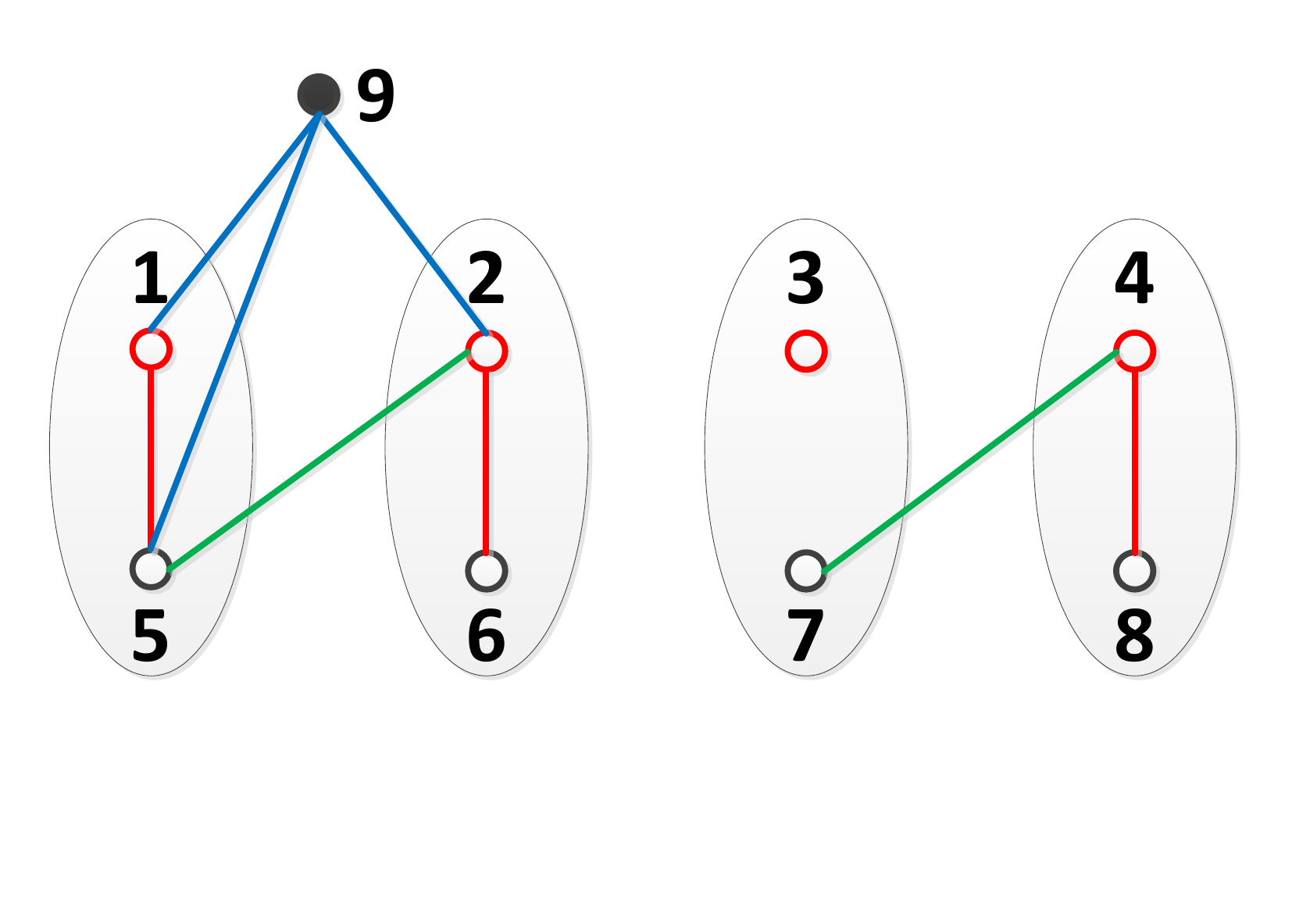}}\qquad
\subfigure[]{\includegraphics[width=2.43cm]{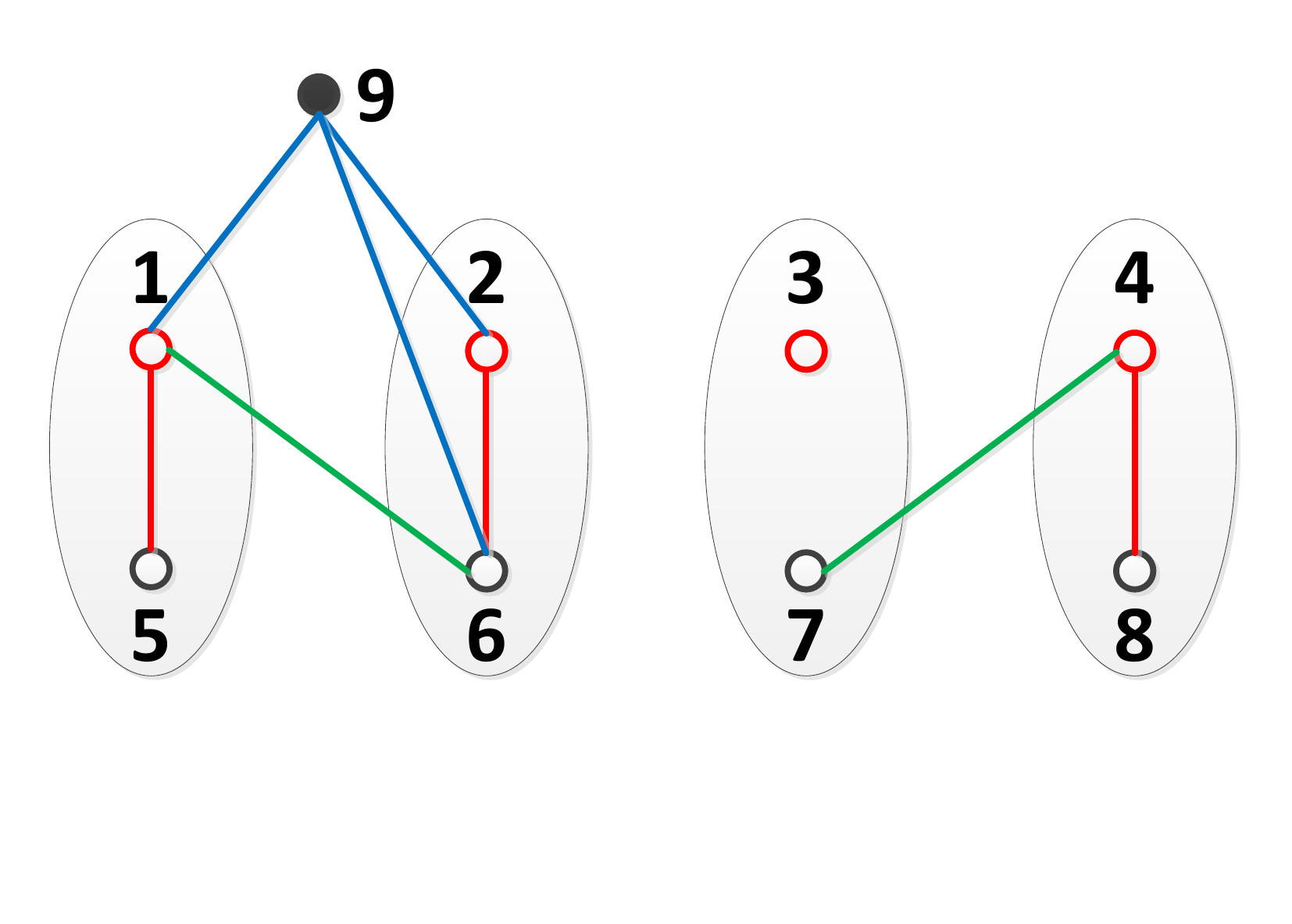}}\qquad\\
\subfigure[]{\includegraphics[width=2.43cm]{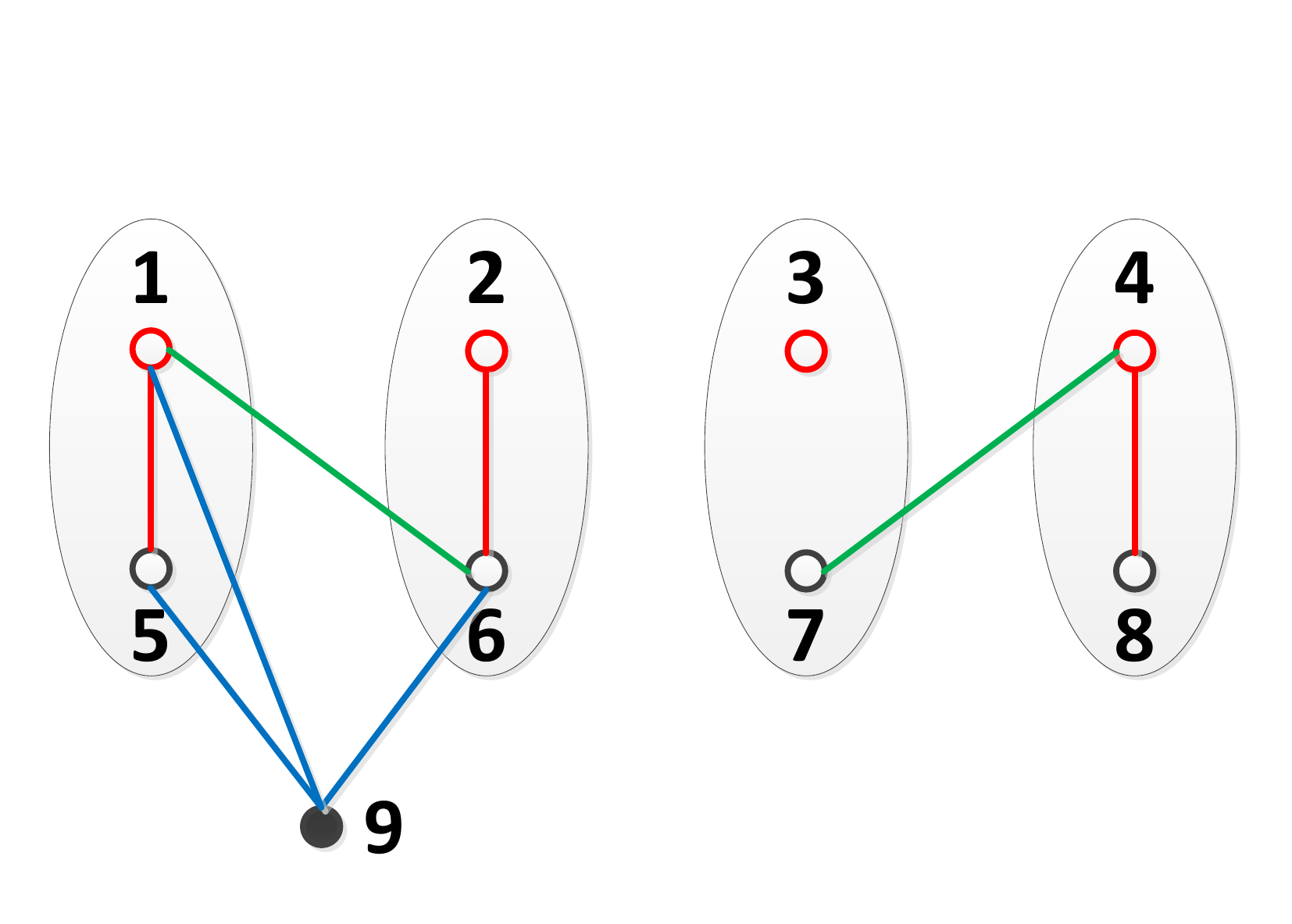}}\qquad
\subfigure[]{\includegraphics[width=2.43cm]{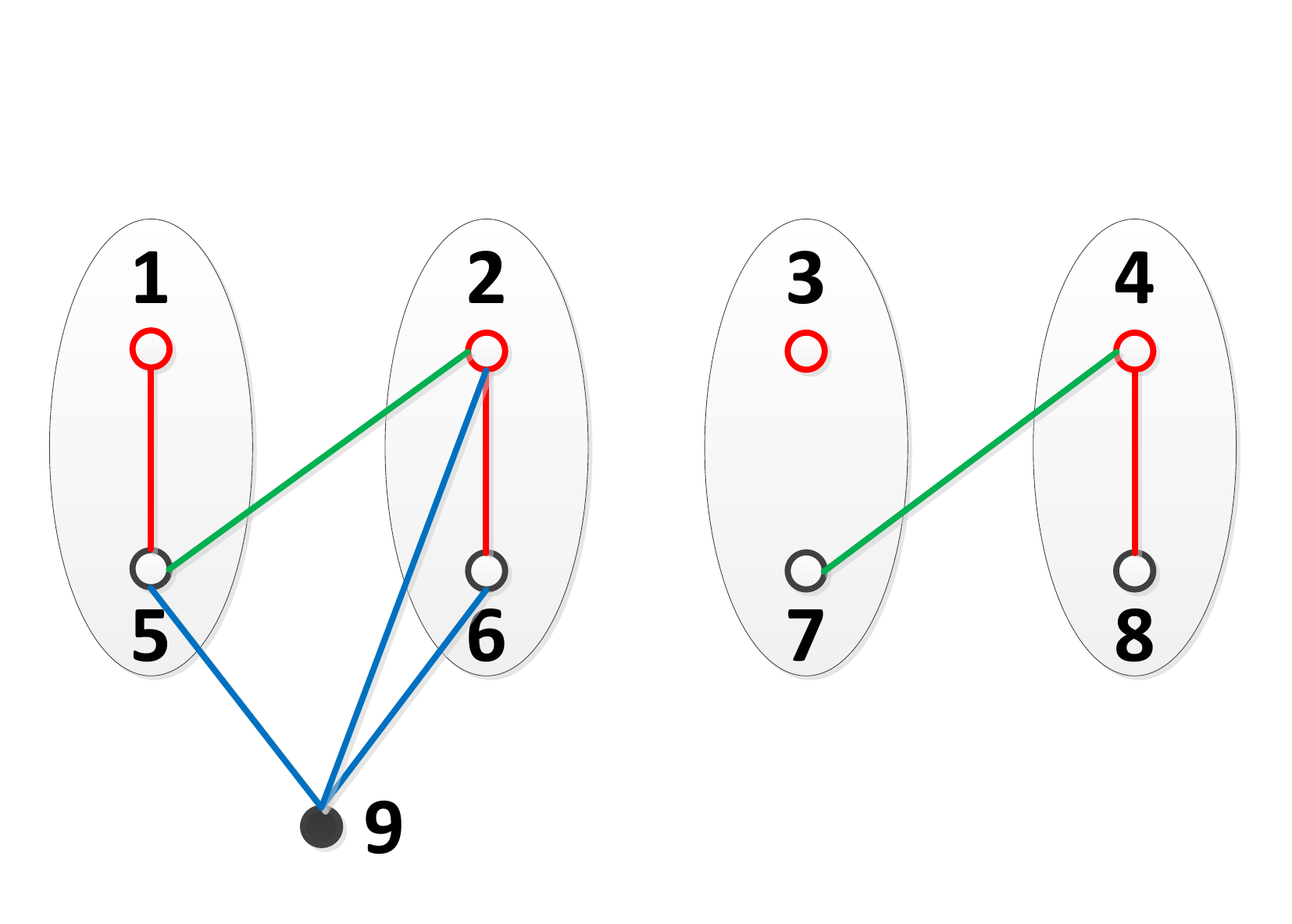}}\qquad
\caption{(a) to (d) are the figures indicating the newly designed edge.}\label{Step6-1con}
\end{center}
\end{figure}
\begin{figure}[H]
\begin{center}
\subfigure[]{\includegraphics[width=2.3cm]{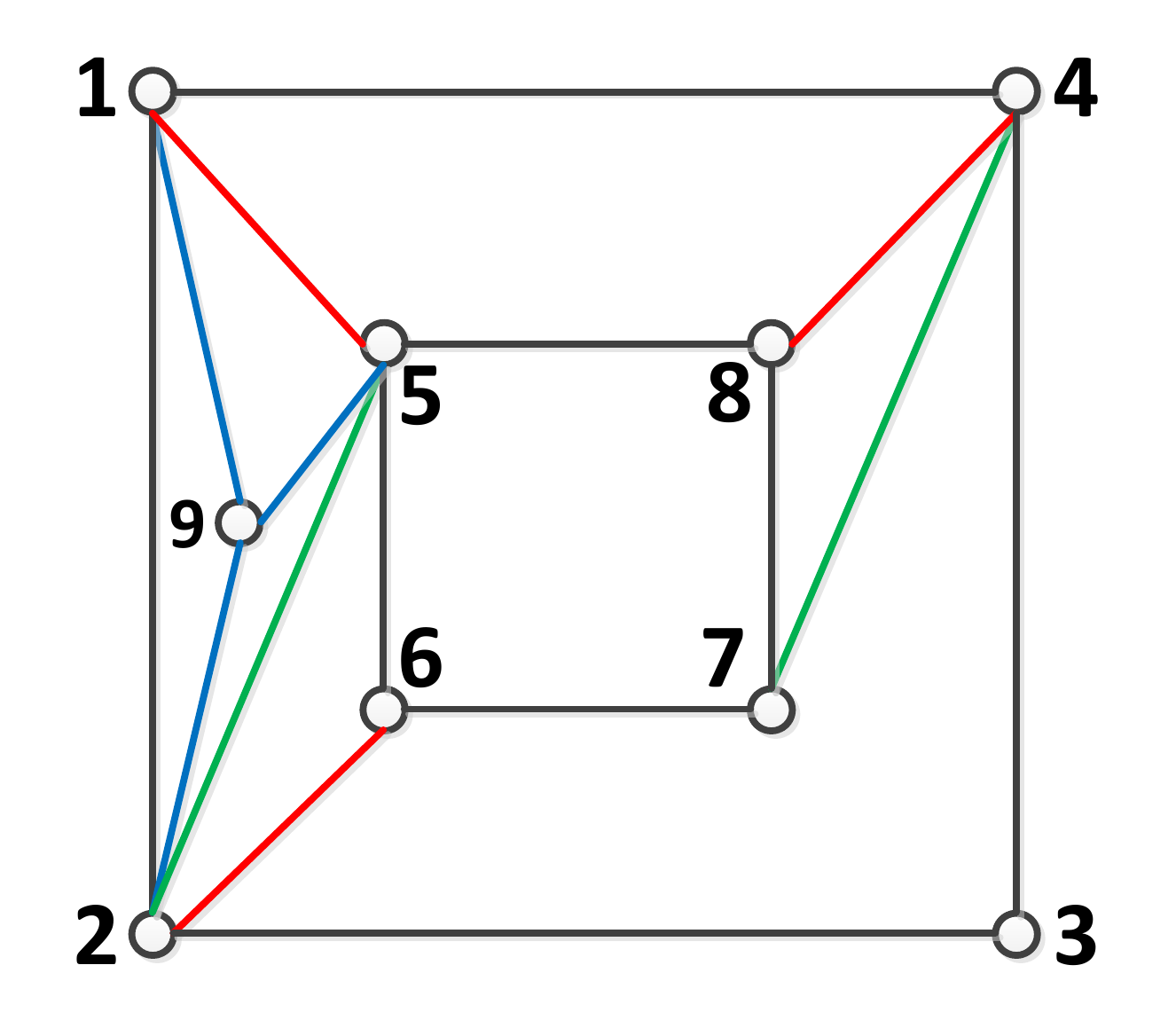}}
\subfigure[]{\includegraphics[width=2.3cm]{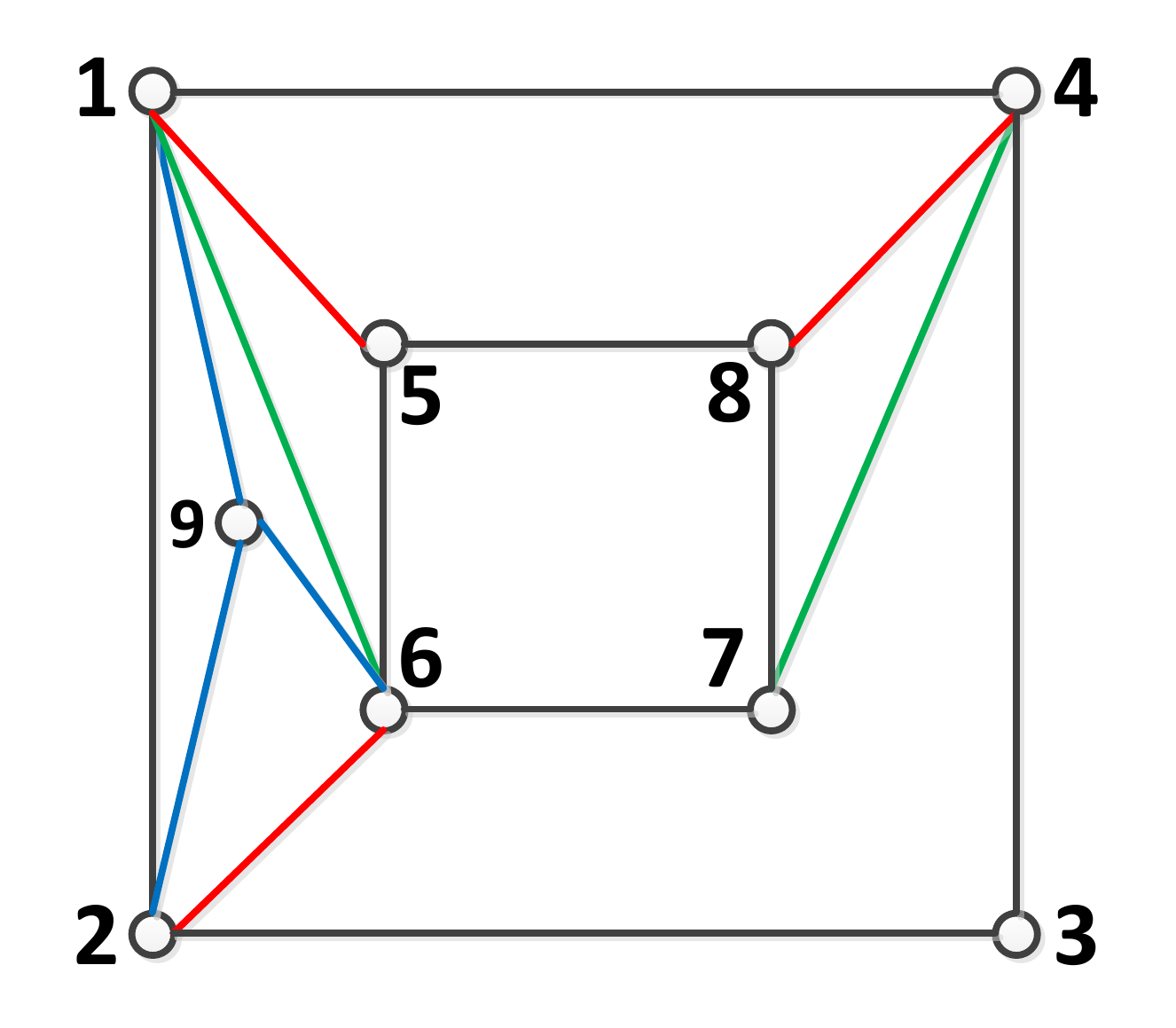}}
\subfigure[]{\includegraphics[width=2.3cm]{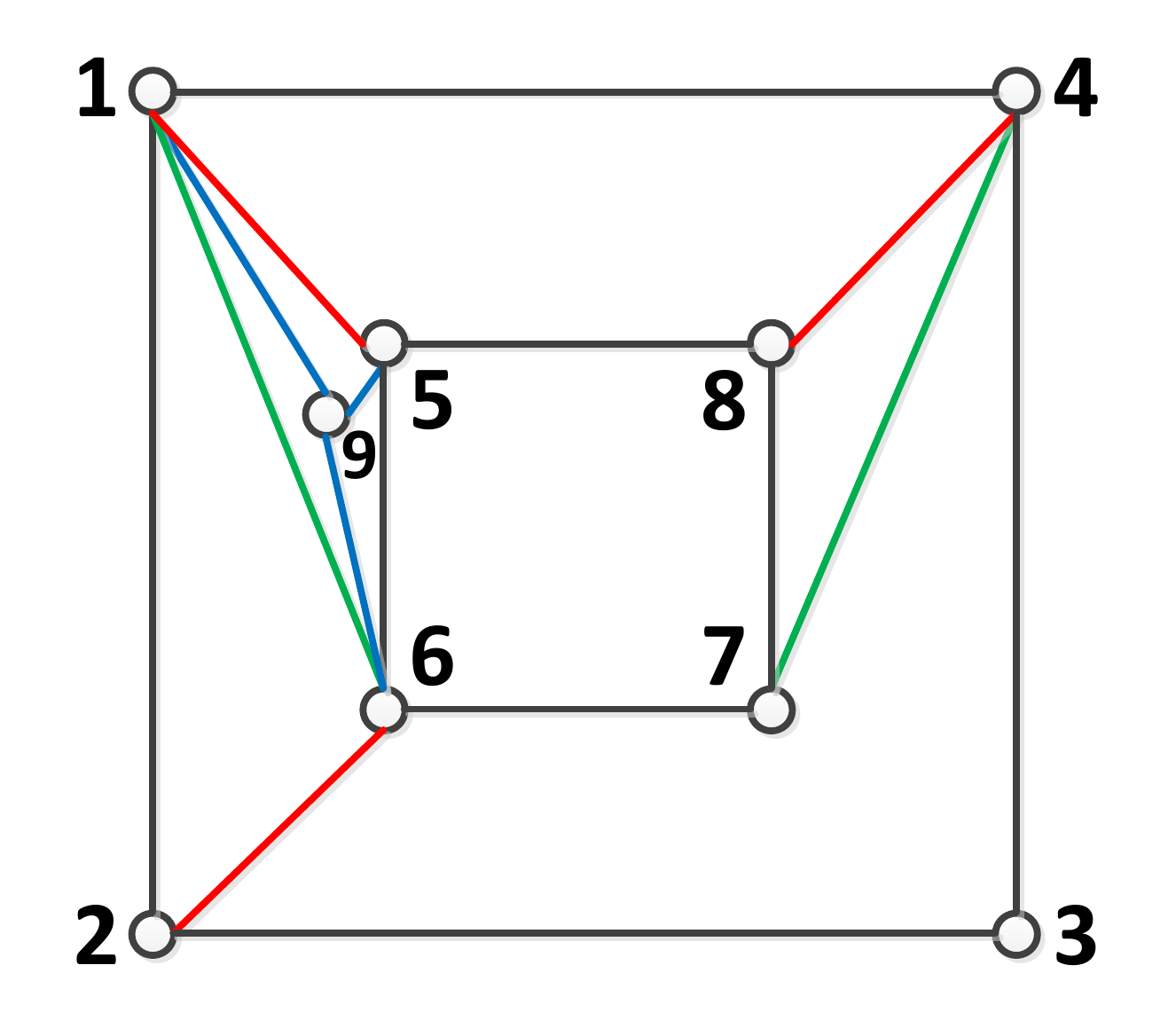}}
\subfigure[]{\includegraphics[width=2.3cm]{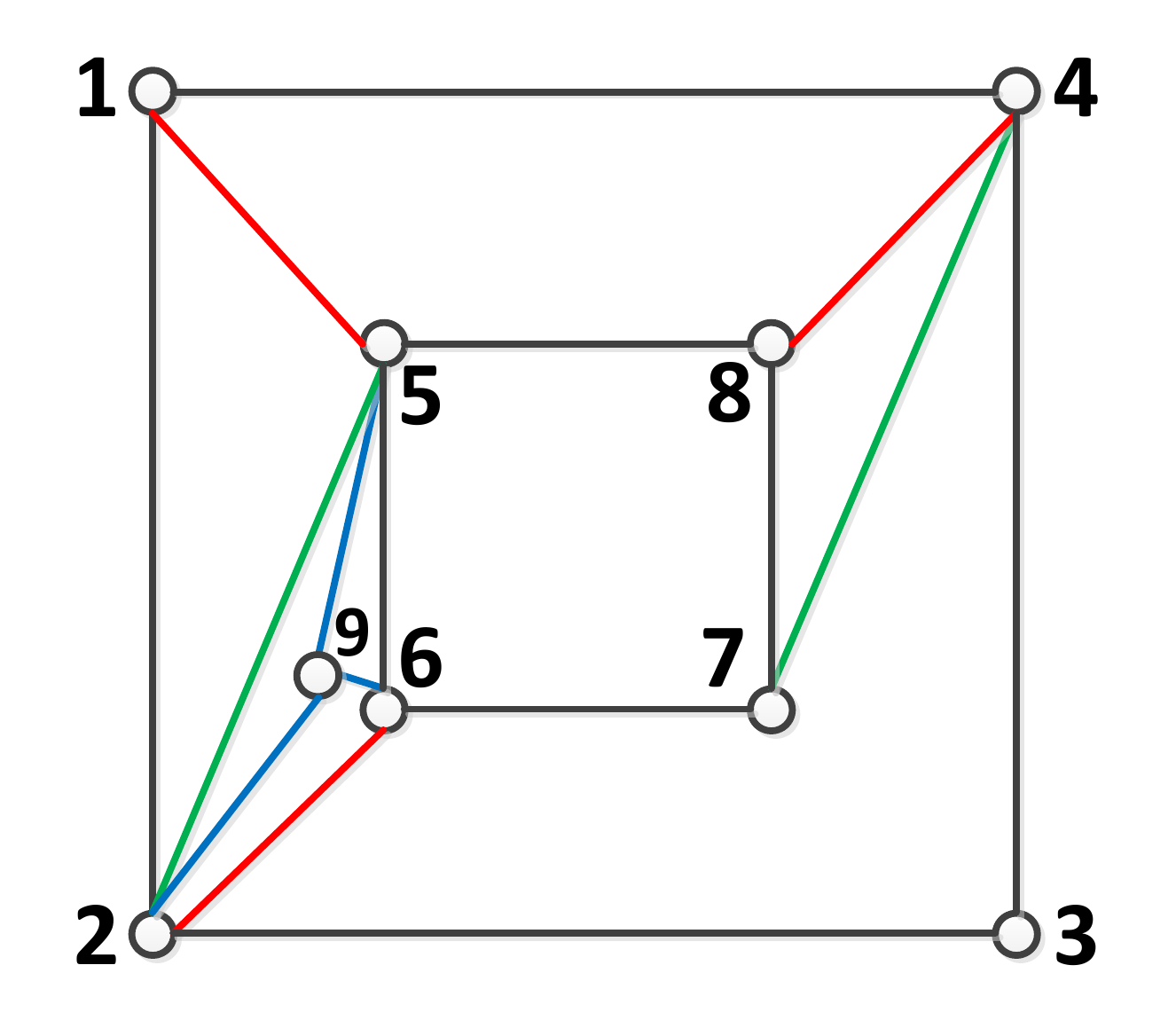}}
\caption{(a) to (d) are the topology graphs corresponding to each of the figures of Fig. \ref{Step6-1con} , respectively.}\label{Step6-1conN}
\end{center}
\end{figure}

In subsequent Theorem \ref{TheoTo3}, Steps 5 and 6 produce more perfectly controllable graphs, which can be  verified in the same way as Theorem \ref{Theo2topo}.
\begin{theorem}\label{TheoTo3}
The communication graphs (a), (b), (d), (e), (f), (g), (h), (k), (l) of Fig. \ref{Step5-1conN} and (a), (b), (c), (d) of Fig. \ref{Step6-1conN} are perfectly controllable.
\end{theorem}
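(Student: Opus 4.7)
The plan is to mimic the proof of Theorem \ref{Theo2topo} exactly: for each of the thirteen graphs listed — nine from Fig. \ref{Step5-1conN} and four from Fig. \ref{Step6-1conN} — reduce perfect controllability to the two spectral conditions supplied by Theorem \ref{The1Alge}, namely, (i) the Laplacian $L$ has simple spectrum, and (ii) every eigenvector of $L$ has only nonzero components. Since each graph is undirected, $L$ is symmetric; combined with (i), this makes each eigenspace one-dimensional, so condition (ii) is a well-defined finite test on nine vectors per graph.

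First, I would read off the edge set of each graph directly from its construction: start from the base graph in (b) of Fig. \ref{Step1con}, add the three edges produced in Step 2, add the edge $e_{47}$ (or its analogue) from Step 3, then append the two or three edges from Case a), b), or c) of Step 4 and, where relevant, the Step 5/Step 6 edge. This determines $L=\Delta-A$ as an explicit $9\times 9$ symmetric integer matrix (the graphs of Fig. \ref{Step5-1conN} and Fig. \ref{Step6-1conN} each have nine nodes because of the extra vertex $9$ introduced in Case b) or Case c) of Step 4). Next, compute the characteristic polynomial; the kernel of $L$ already contains the all-ones vector, so one eigenvalue is $0$, and distinctness of the remaining eight follows from a nonzero discriminant of the degree-eight factor. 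Distinctness and the nonzero-entry property of each eigenvector can then be certified numerically, exactly as in the proof of Theorem \ref{Theo2topo} where the nine eigenvalues for graph (a) of Fig. \ref{Step4-2conN} were listed.

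The main obstacle is not the eigenvalue calculation, which is routine, but the sheer bookkeeping needed to repeat the eigenvector inspection thirteen times. To cut down on work, I would first look for graph automorphisms among the thirteen cases. The construction is symmetric under swapping $\Omega_1\leftrightarrow\Omega_2$ and under permuting the four double node sets that preserve which sets are ``fixed,'' so several of the graphs in Fig. \ref{Step5-1conN} (for instance (a) and (d), obtained from (a) and (b) of Fig. \ref{Step4-2con}) will be isomorphic by relabeling; isomorphic Laplacians share the same spectrum and the set of zero-positions across eigenvectors is permuted accordingly, so verifying one representative of each isomorphism class suffices. The remaining inequivalent representatives are then handled by the direct numerical computation used in Theorem \ref{Theo2topo}, with Theorem \ref{The1Alge} producing the conclusion in each case.
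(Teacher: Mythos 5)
Your proposal matches the paper's approach: the paper offers no separate argument for this theorem, stating only that the thirteen graphs ``can be verified in the same way as Theorem \ref{Theo2topo},'' i.e., by computing each $9\times 9$ Laplacian, checking that its nine eigenvalues are distinct and that every eigenvector has no zero entries, and invoking Theorem \ref{The1Alge}. Your additional suggestion of collapsing isomorphic cases before computing is a sensible efficiency not present in the paper, but it does not change the method.
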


\begin{remark}
Note that although (f), (g) of Fig. \ref{Step5-1conN} have intersectional edges, they are still perfectly controllable.
\end{remark}

\begin{remark}
In the preceding Steps 4 to 6, the design of perfectly controllable graphs is based on the two unfixed double nodes sets. In the next step, however, the design of perfectly controllable graphs are from the two fixed double nodes sets. Let us recall that the two fixed double nodes sets are the two oval shapes, respectively, consisting of $3, 7$ and $4, 8.$
\end{remark}

The subsequent construction of graphs in the next step begins from each of the following 26 graphs: (a) to (f) of Fig. \ref{Step4-2con}; (a) to (d) of Fig. \ref{Step4-3con}; (a) to (l) of Fig. \ref{Step5-1con}; and (a) to (d) of Fig. \ref{Step6-1con}. This construction produces a total of 208 graphs.

\begin{remark}
The above mentioned 26 graphs are designed from the two unfixed double nodes sets, which are the two oval shapes, respectively, consisting of 1, 5 and 2, 6. The next design Step 7 is to show that after manipulations, respectively, on the fixed and unfixed double node sets, more perfectly controllable graphs can be constructed by combining the manipulations from these two kinds of double nodes sets. On the other hand, the design result achieved in Step 7 implies that even if in some circumstances perfectly controllable graphs cannot be obtained by manipulations only on fixed double nodes sets, perfectly controllable graphs can still be constructed if the unfixed double nodes sets are manipulated at the same time.
\end{remark}

\textbf{Step 7}
We use graph (a) of Fig. \ref{Step5-1con} to illustrate this step. From this graph, eight new graphs are created by following subsequent items i) to iv).
\begin{itemize}
\item[i)] One new node labeled as 10 is added to the graph (a) of Fig. \ref{Step5-1con};

\item[ii)] In the two fixed double nodes sets, consider the two nodes belonging to the same $\Omega_i, i=1,2,$   say $3$ and $4$ which both belong to $\Omega_1.$ Then node $10$ is connected, respectively, to the nodes $3$ and $4$ to create two new edges $e_{10,3}, e_{10,4}.$ The corresponding graph is (a) of Fig. \ref{Step7-1con}.

\vspace{1.5mm}

For nodes $7$ and $8,$ the two new edges $e_{10,7}, e_{10,8}$ are created in the same way. The corresponding graph is (d) of Fig. \ref{Step7-1con}.

\item[iii)] In the two unfixed double nodes sets, which consist of, $1, 5$ and $2, 6,$ respectively; let us consider the two nodes belonging to the same $\Omega_i, i=1,2,$   say $1$ and $2$ which both belong to $\Omega_1.$ Then, from the graph (a) of Fig. \ref{Step7-1con}, the third new edge $e_{10,1}$ or $ e_{10,2}$ is designed by connecting node $10$ with node  $1$ or $2.$ This produces two graphs, which are (b) and (c) of Fig. \ref{Step7-1con} accordingly.

\vspace{1.5mm}

For nodes $5$ and $6,$ the two third new edges $e_{10,5}$ and $ e_{10,6}$ are designed in the same way based on the graph (d) of Fig. \ref{Step7-1con}. The corresponding two graphs are (e) and (f) of Fig. \ref{Step7-1con}.

\item[iv)] From the graph (a) of Fig. \ref{Step7-1con}, another new edge $e_{10,7}$ is designed by following the rule that the newly designed edge does not intersect with any other edge. The corresponding graph is (g) of Fig. \ref{Step7-1con}.

\vspace{1.5mm}

Similarly, from the graph (d) of Fig. \ref{Step7-1con}, a new edge $e_{10,4}$ is created which corresponds to the graph (h) of Fig. \ref{Step7-1con}.
\end{itemize}

Thus, following the above procedures i) to iv), eight graphs (a) to (h) of Fig. \ref{Step7-1con} are produced with respect to the graph (a) of Fig. \ref{Step5-1con}.

In Fig. \ref{Step7-1con}, graphs (i) to (n) are generated randomly from some of the aforementioned 22 graphs. The generation obeys items in Step 7. For example, graph (i) of Fig. \ref{Step7-1con} is produced from (a) of Fig. \ref{Step4-2con} by following items i) and ii) of Step 7.

The topology structures corresponding to each of the graphs in Fig. \ref{Step7-1con} are depicted in Fig. \ref{Step7-1conN}.

\begin{figure}[H]
\begin{center}
\subfigure[]{\includegraphics[width=2.1cm]{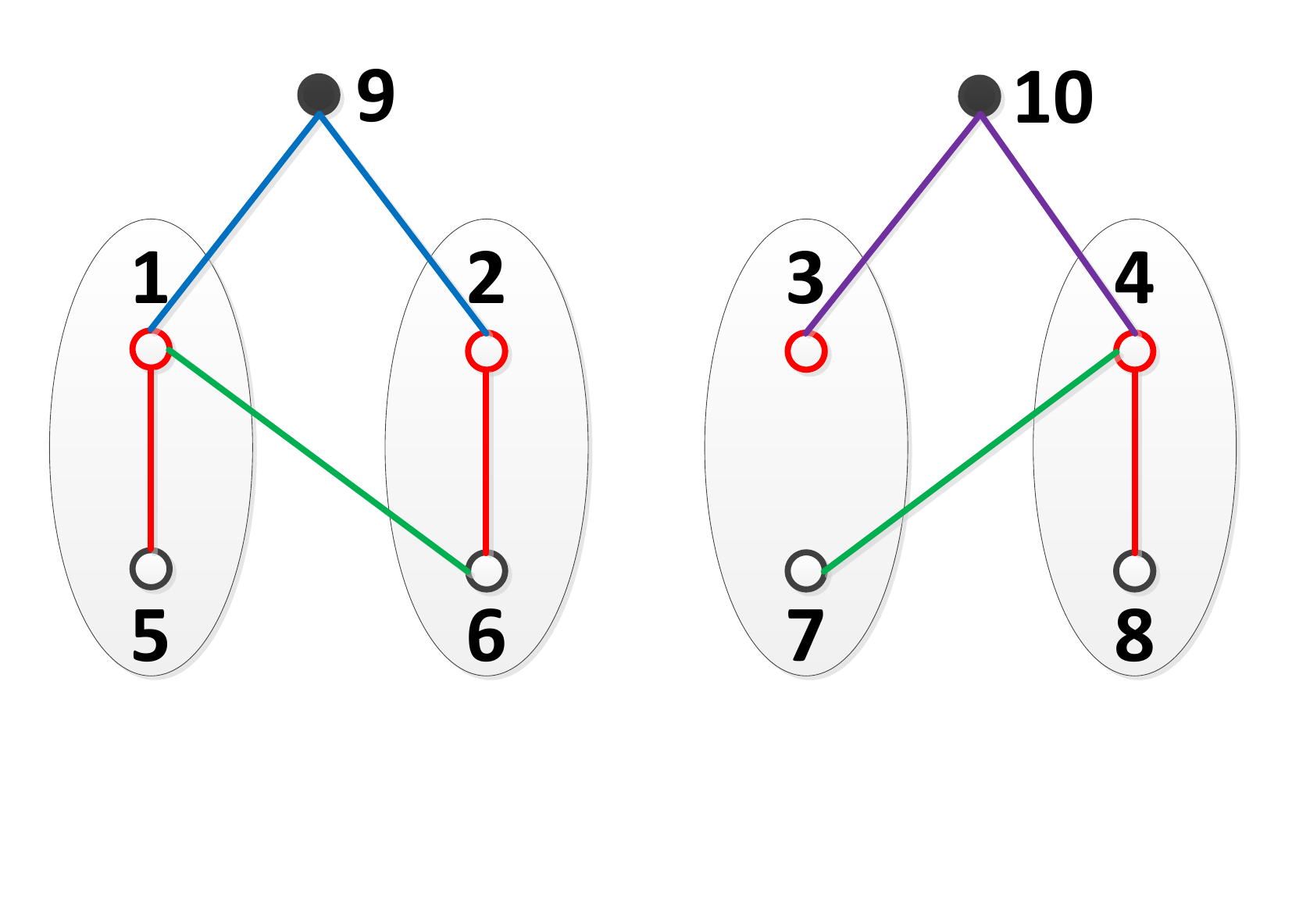}}\quad
\subfigure[]{\includegraphics[width=2.1cm]{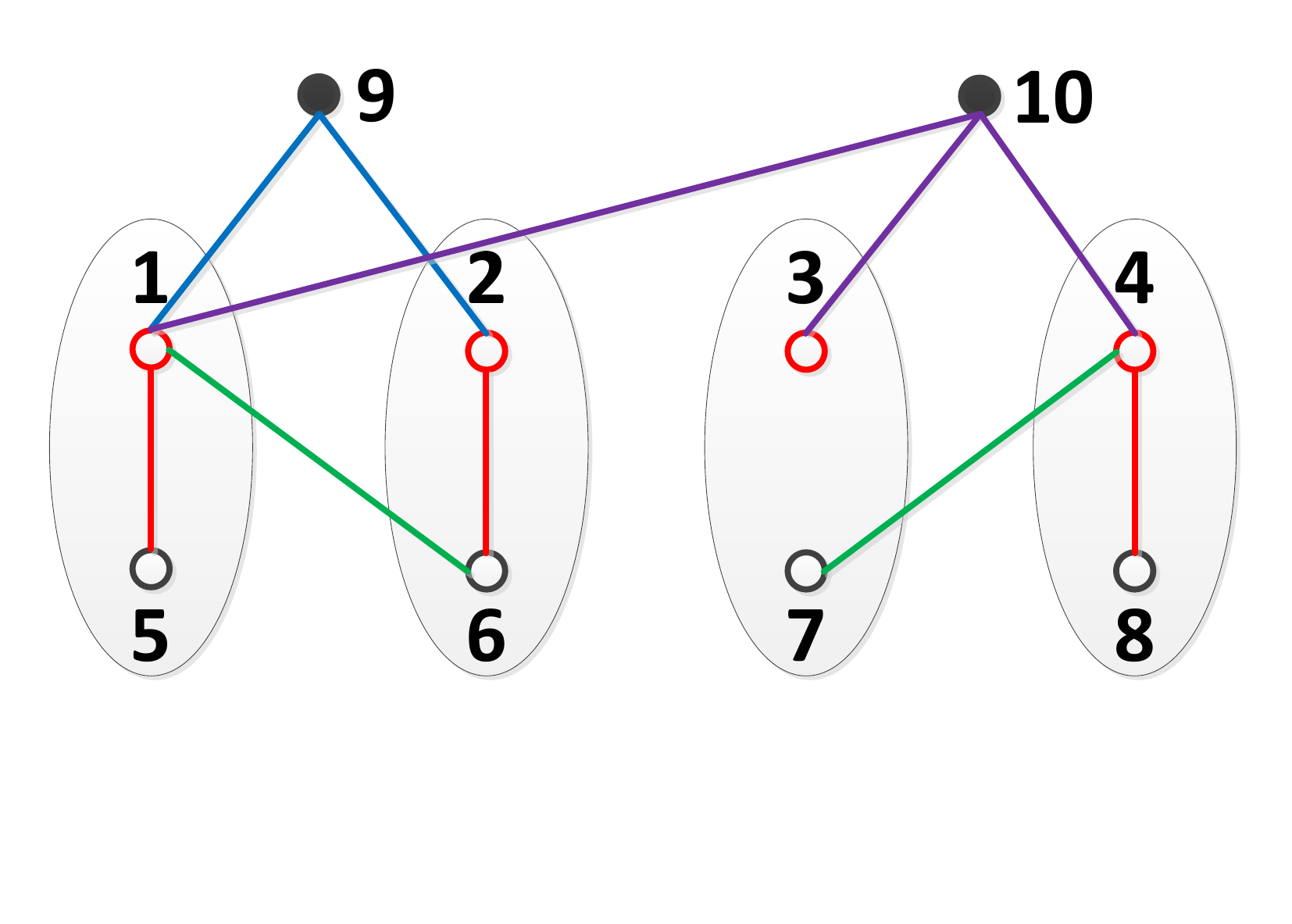}}\quad
\subfigure[]{\includegraphics[width=2.1cm]{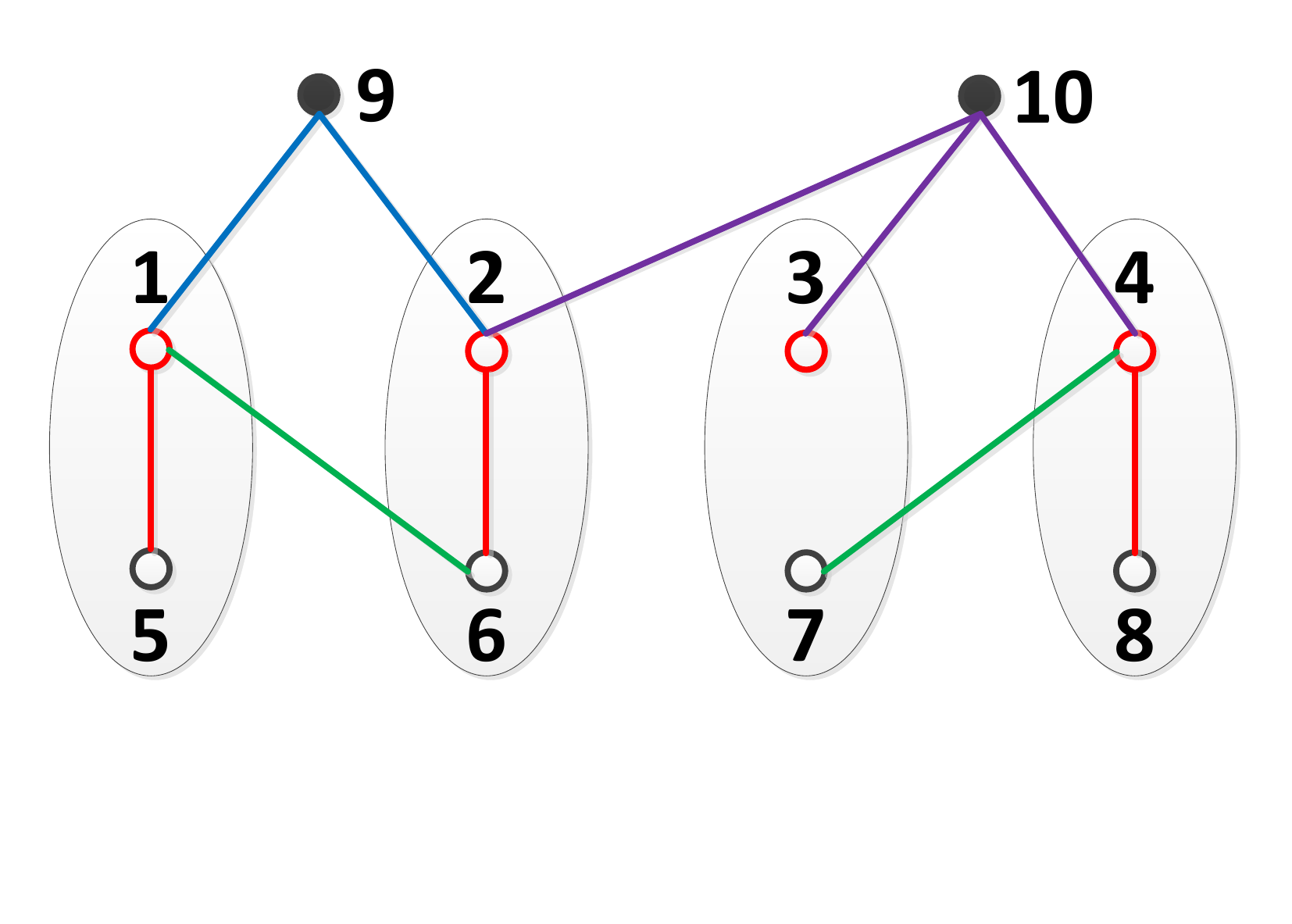}}
\subfigure[]{\includegraphics[width=2.1cm]{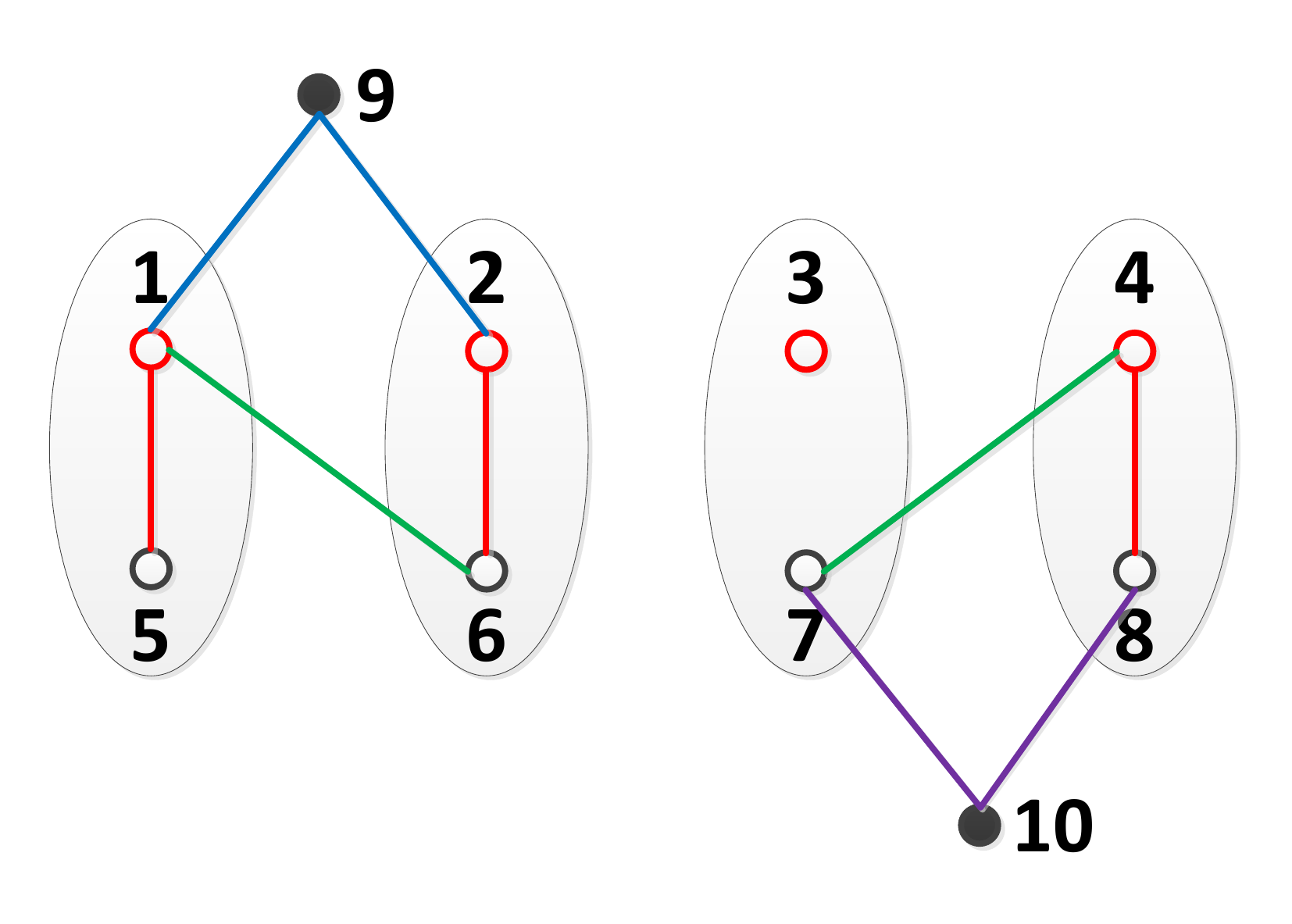}}\quad
\subfigure[]{\includegraphics[width=2.1cm]{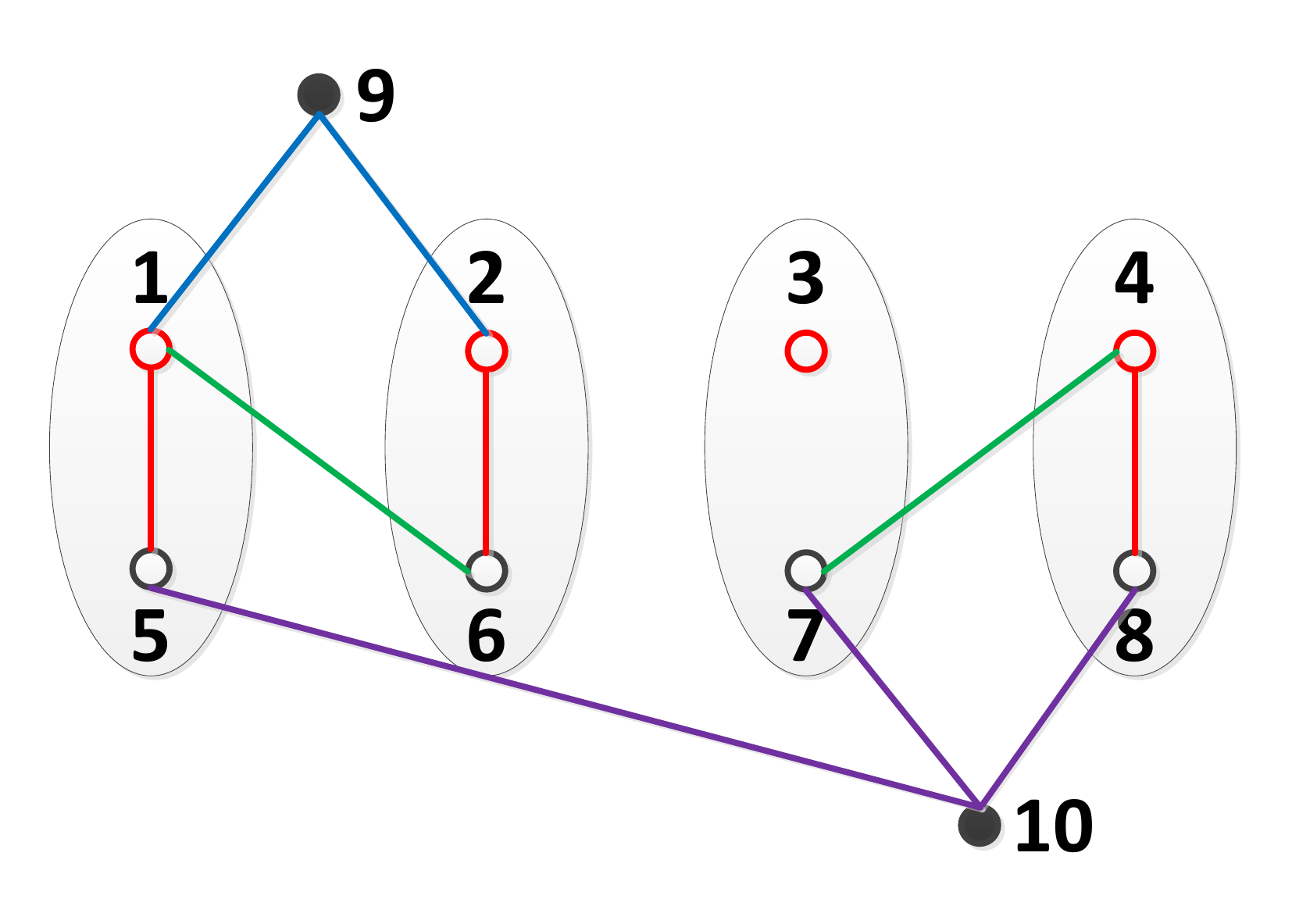}}\quad
\subfigure[]{\includegraphics[width=2.1cm]{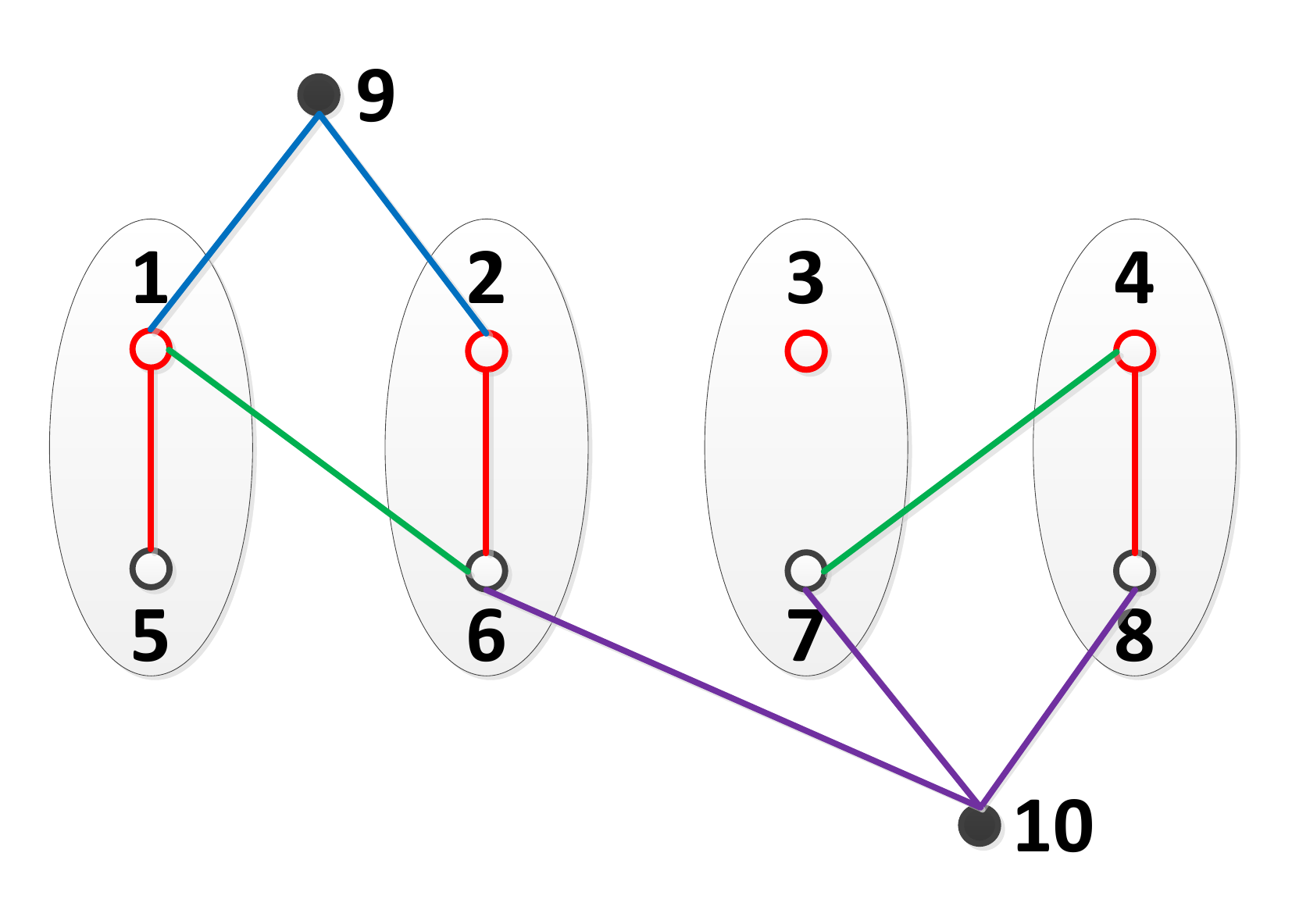}}
\subfigure[]{\includegraphics[width=2.1cm]{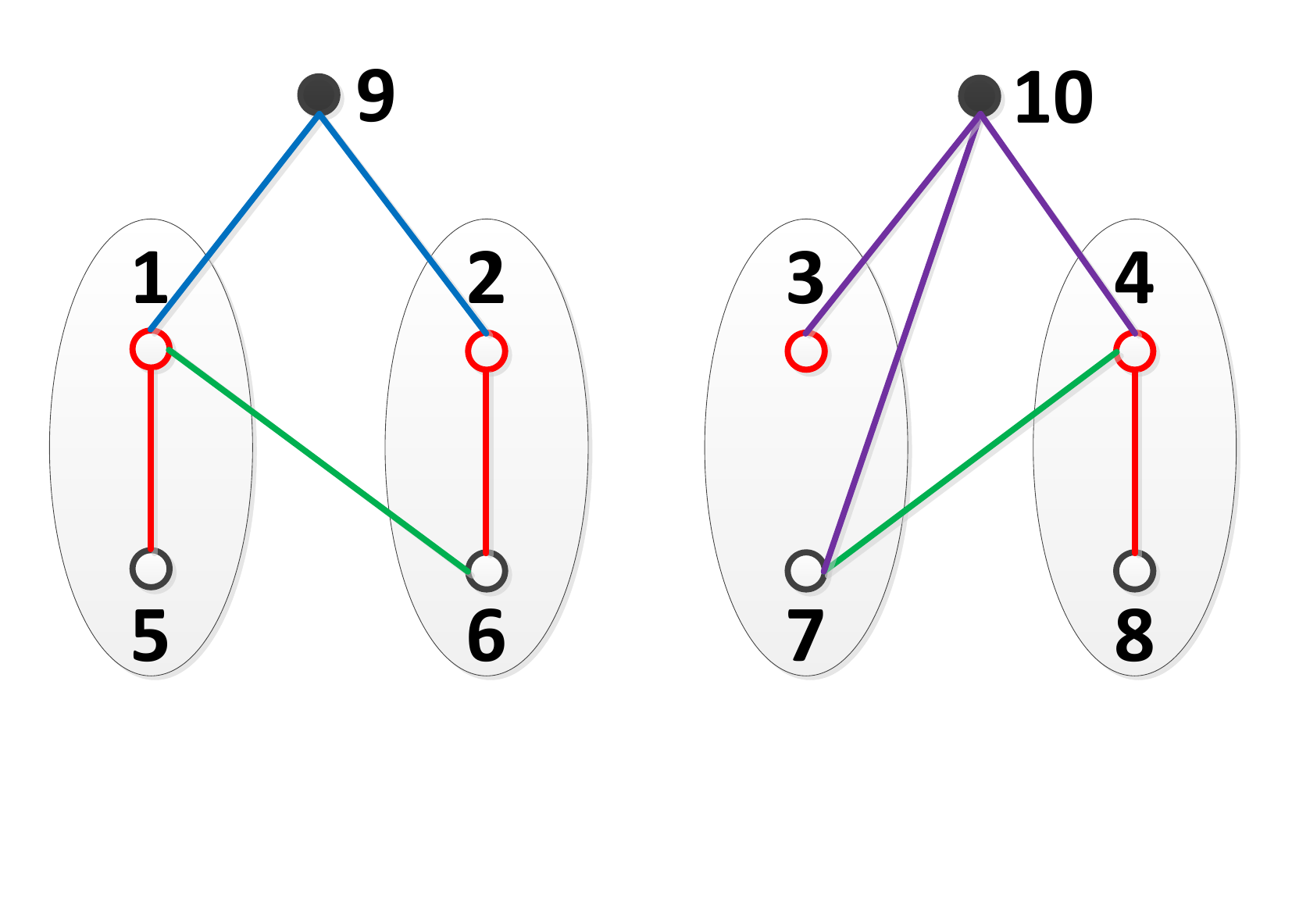}}\quad
\subfigure[]{\includegraphics[width=2.1cm]{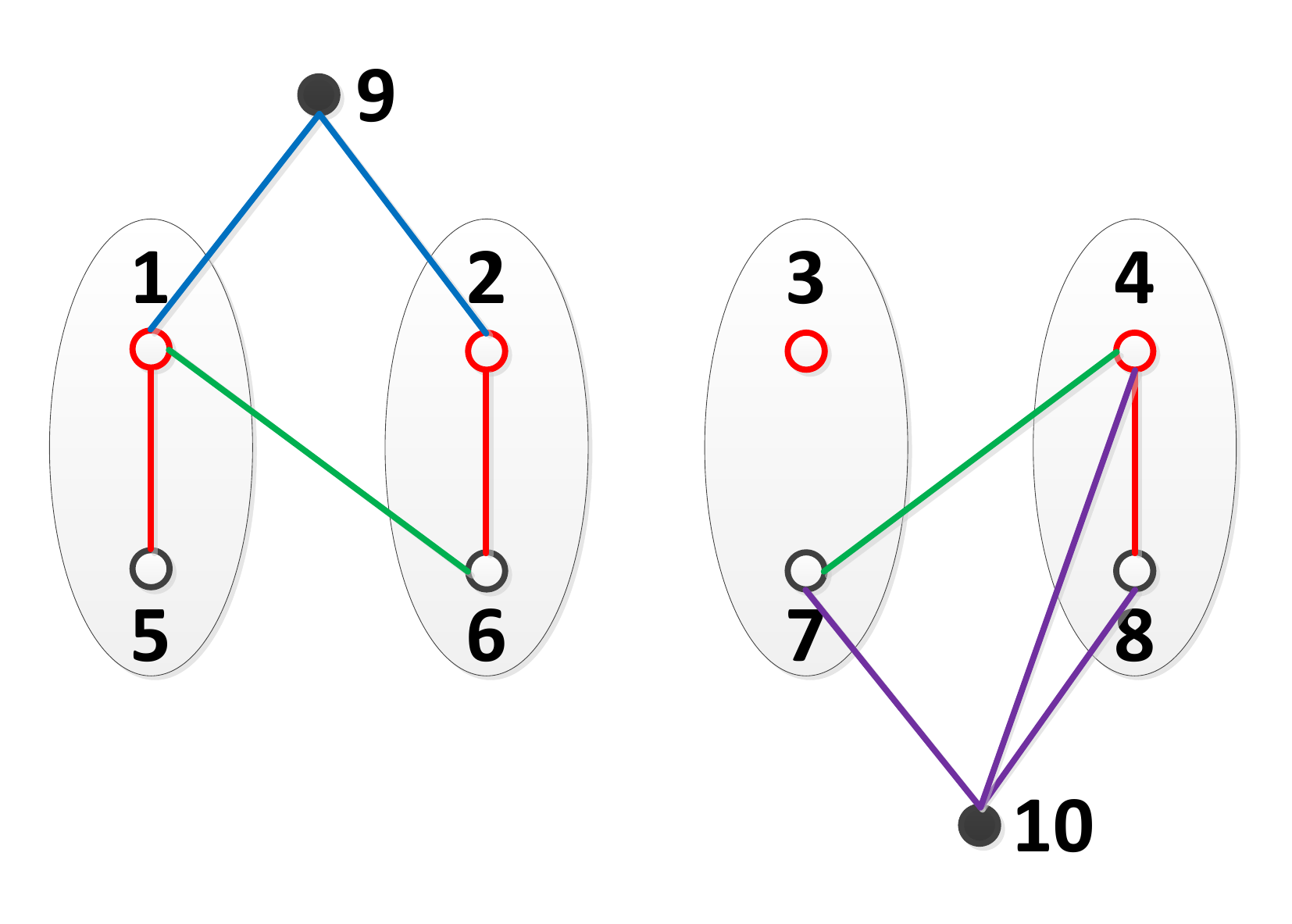}}\quad\\
\subfigure[]{\includegraphics[width=2.1cm]{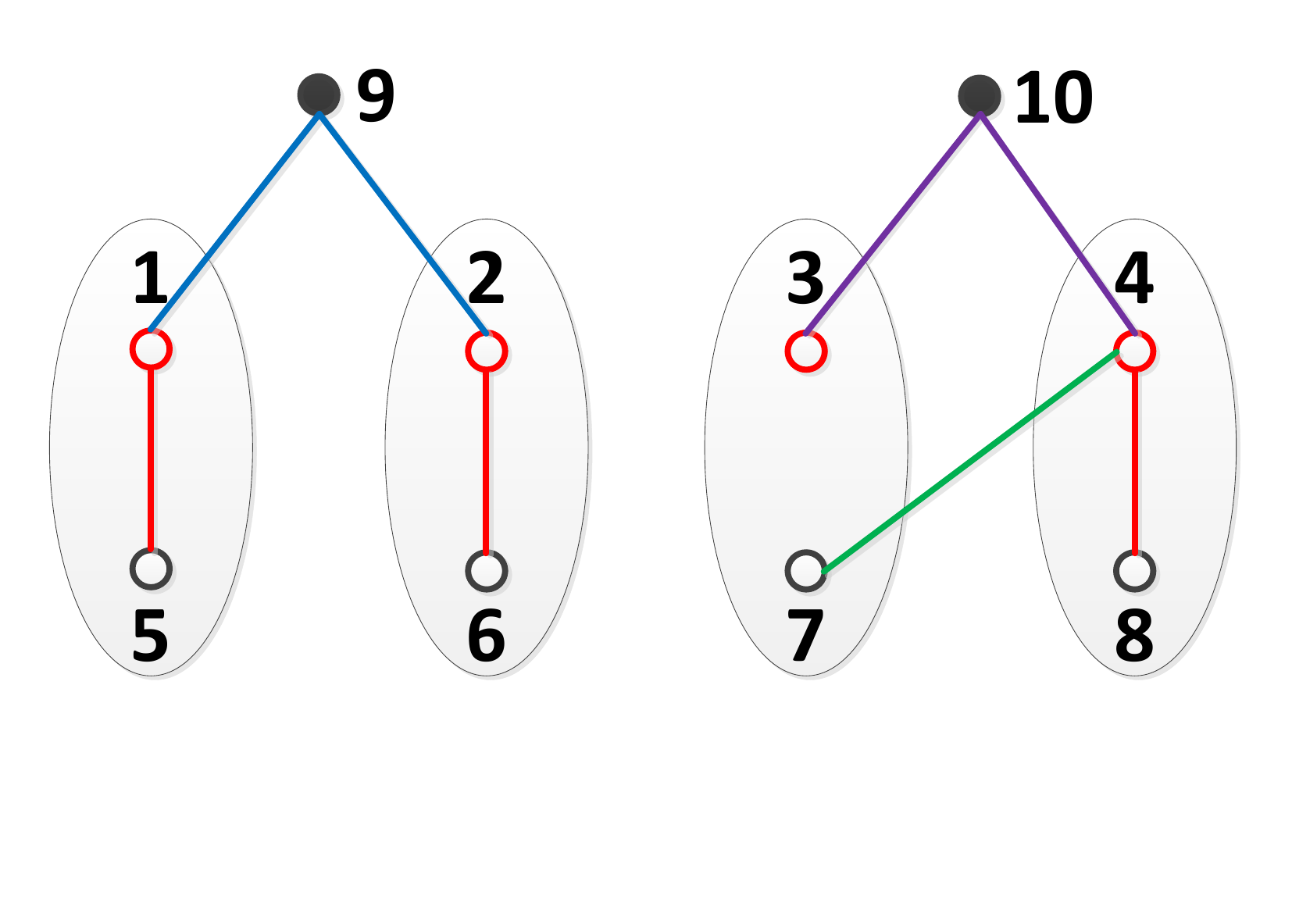}}\quad
\subfigure[]{\includegraphics[width=2.1cm]{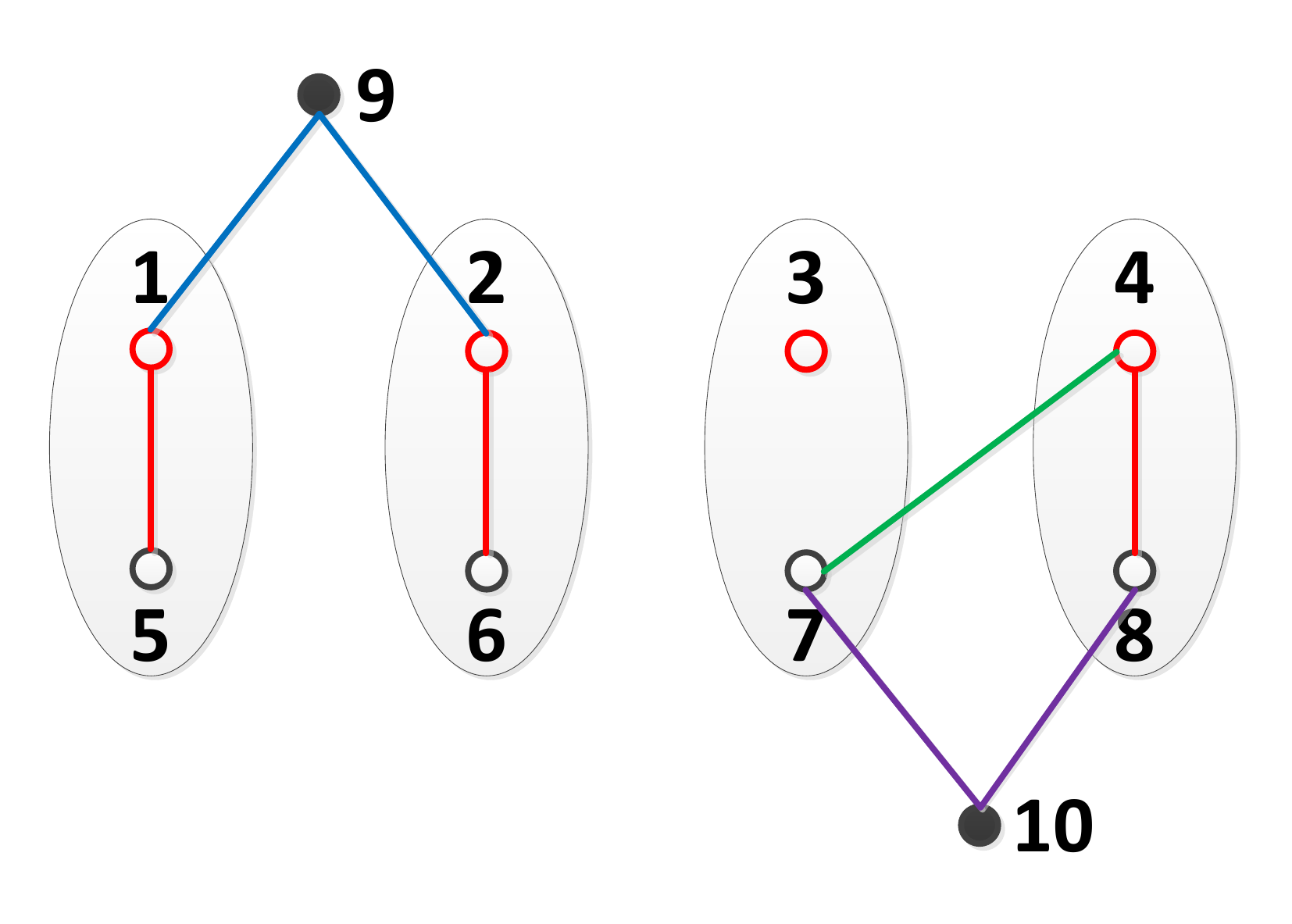}}\quad
\subfigure[]{\includegraphics[width=2.1cm]{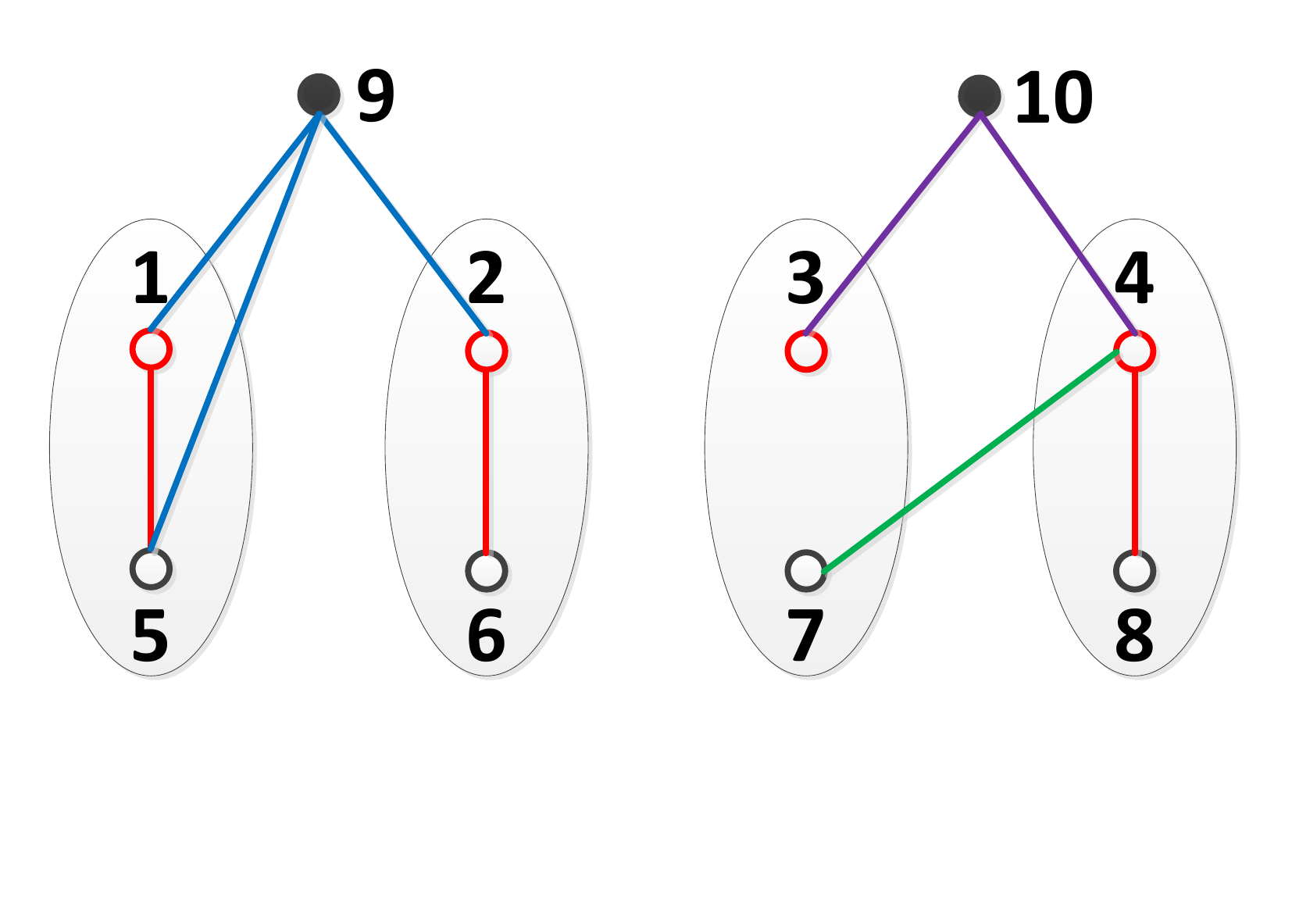}}
\subfigure[]{\includegraphics[width=2.1cm]{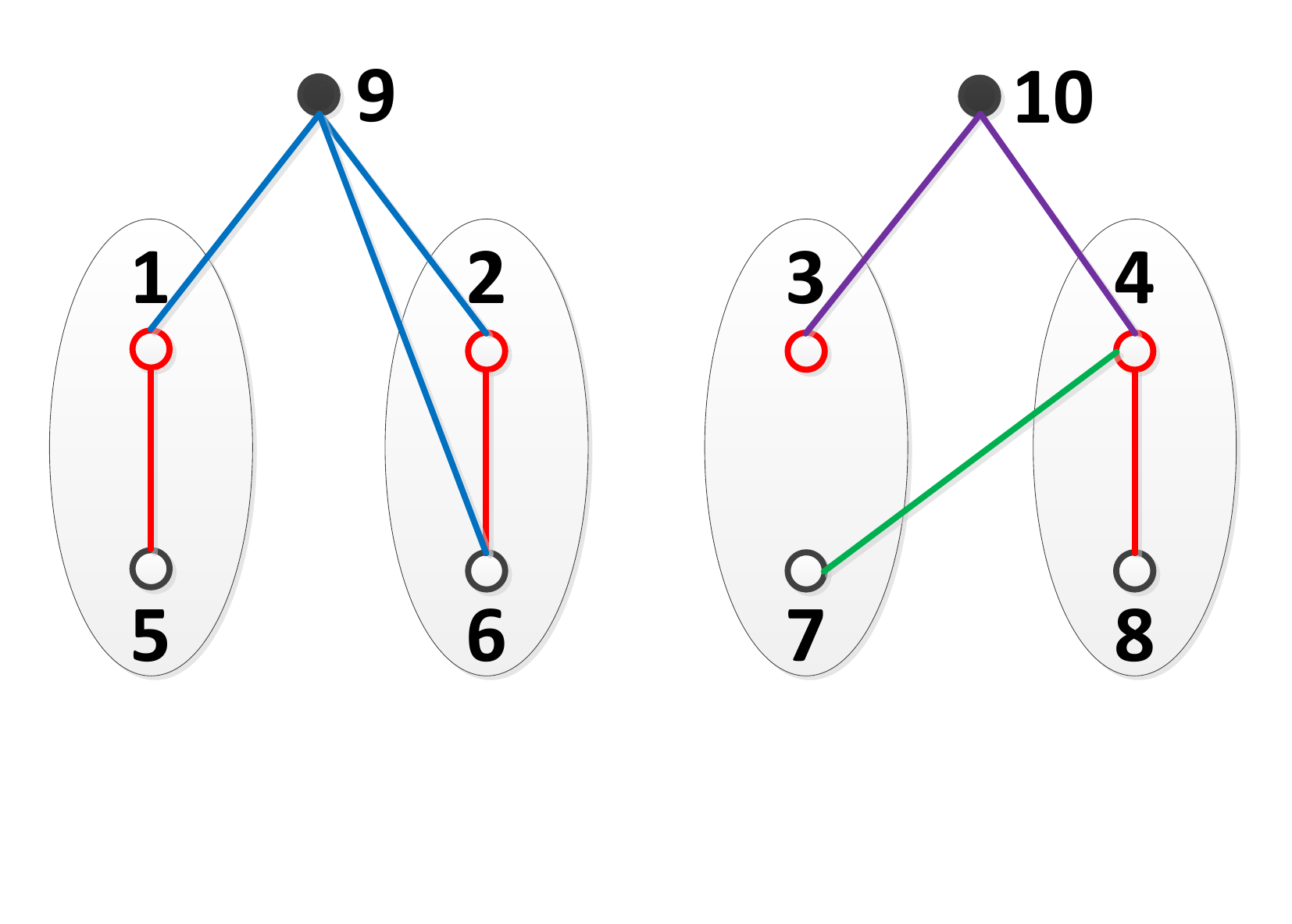}}\quad
\subfigure[]{\includegraphics[width=2.1cm]{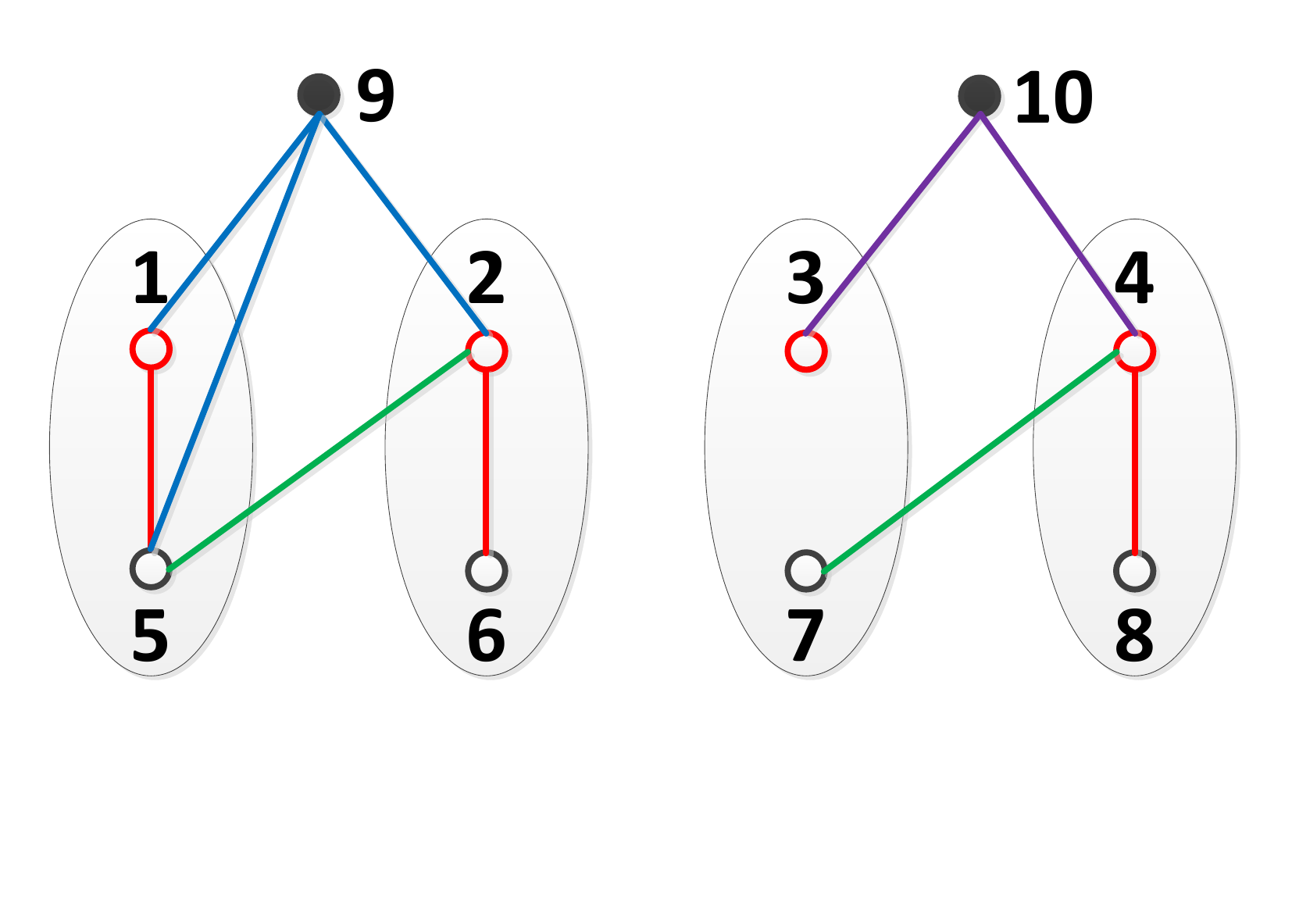}}\quad
\subfigure[]{\includegraphics[width=2.1cm]{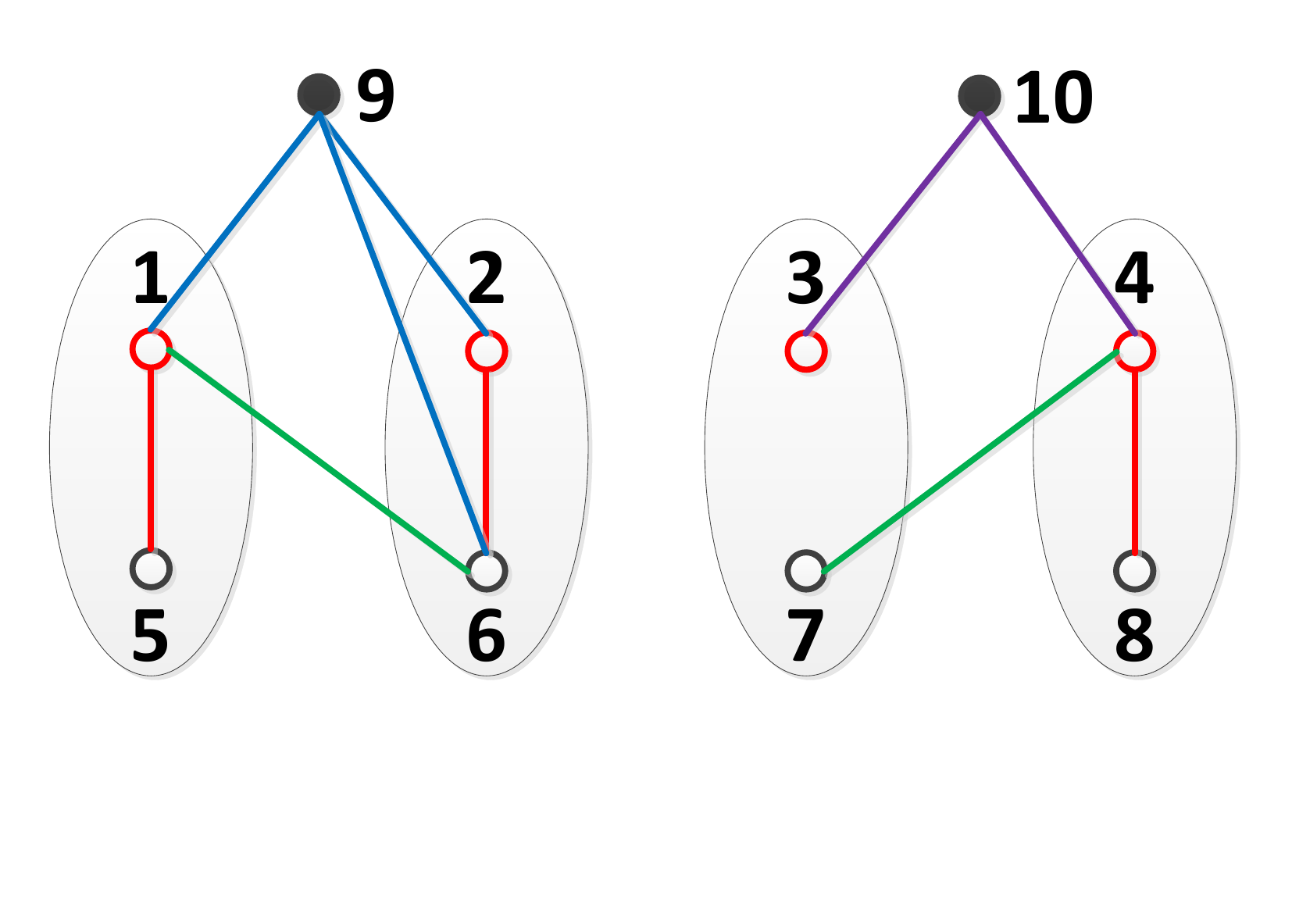}}
\caption{(a) to (n) are graphs designed from some of the aforementioned 22 graphs by following Step 7.}\label{Step7-1con}
\end{center}
\end{figure}

\begin{figure}[H]
\begin{center}
\subfigure[]{\includegraphics[width=2.1cm]{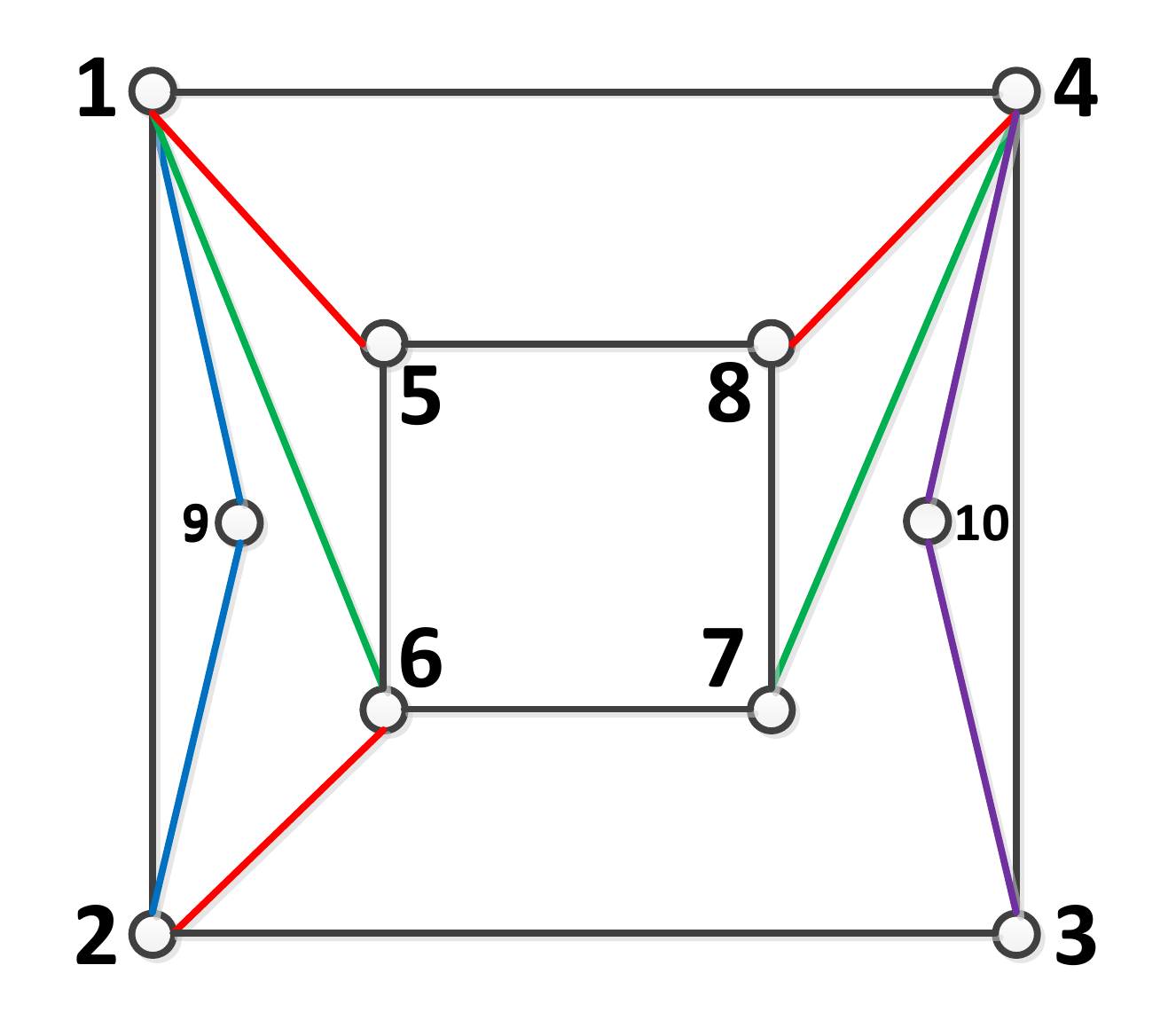}}
\subfigure[]{\includegraphics[width=2.1cm]{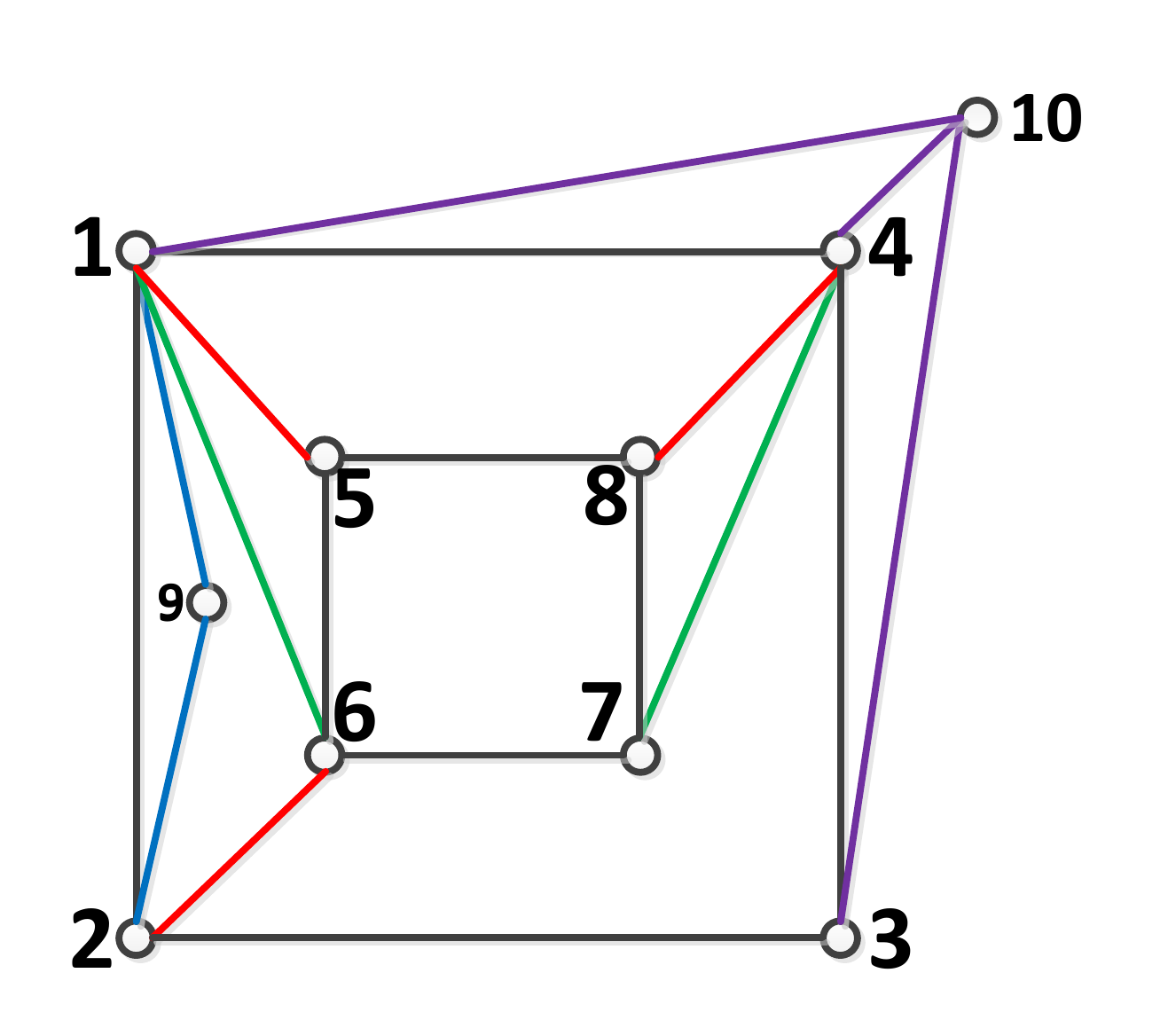}}
\subfigure[]{\includegraphics[width=2.1cm]{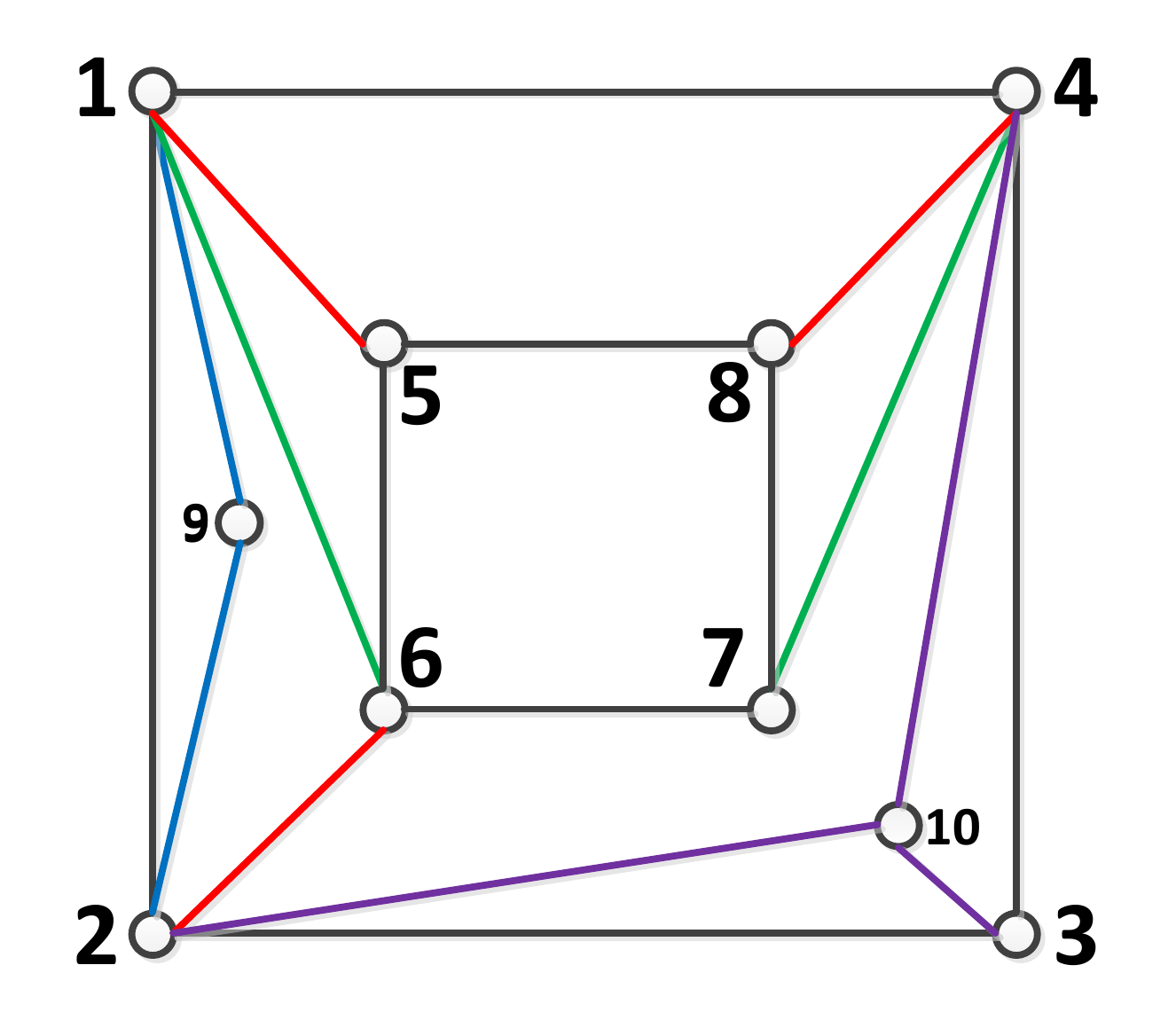}}
\subfigure[]{\includegraphics[width=2.1cm]{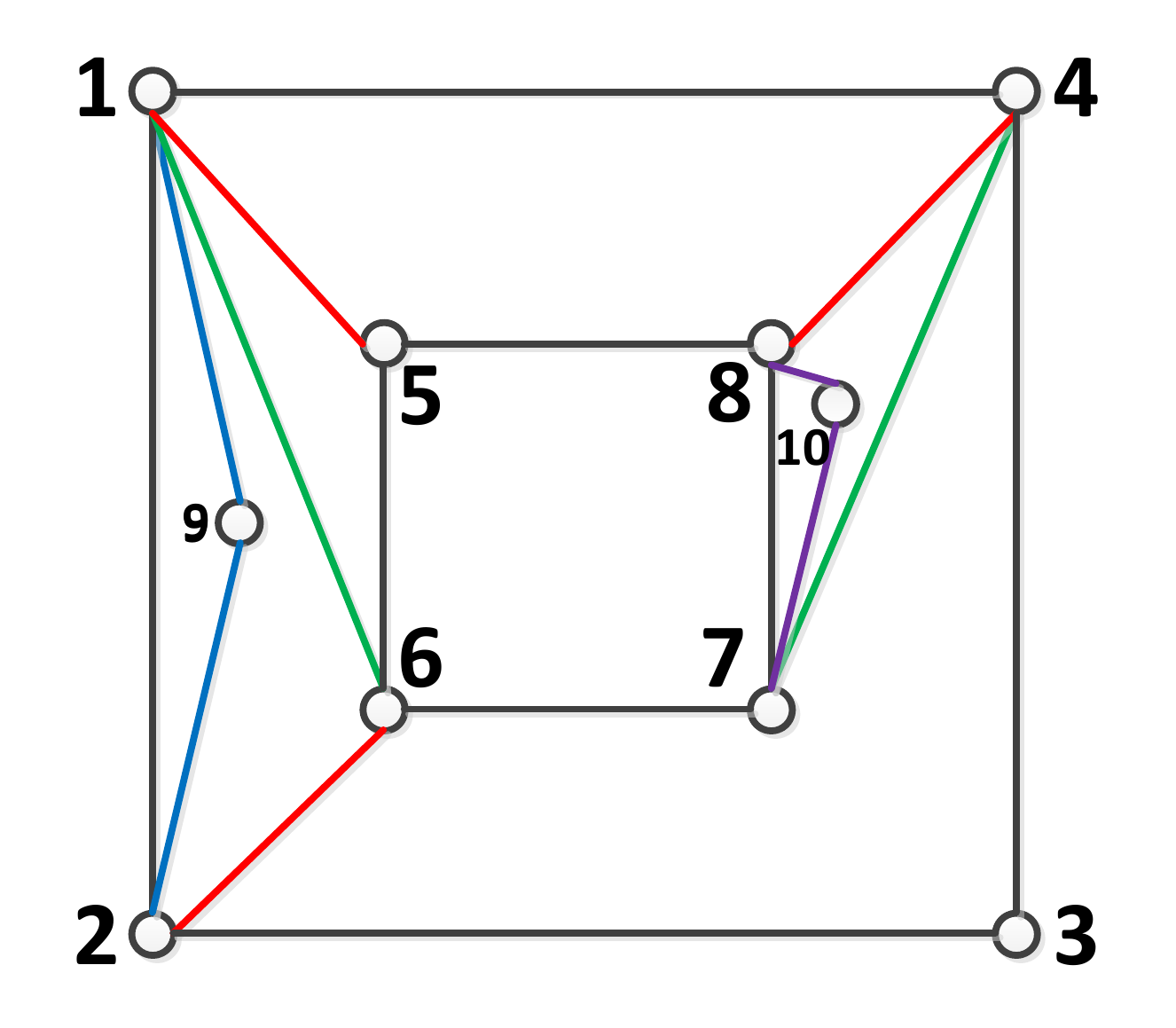}}\\
\subfigure[]{\includegraphics[width=2.1cm]{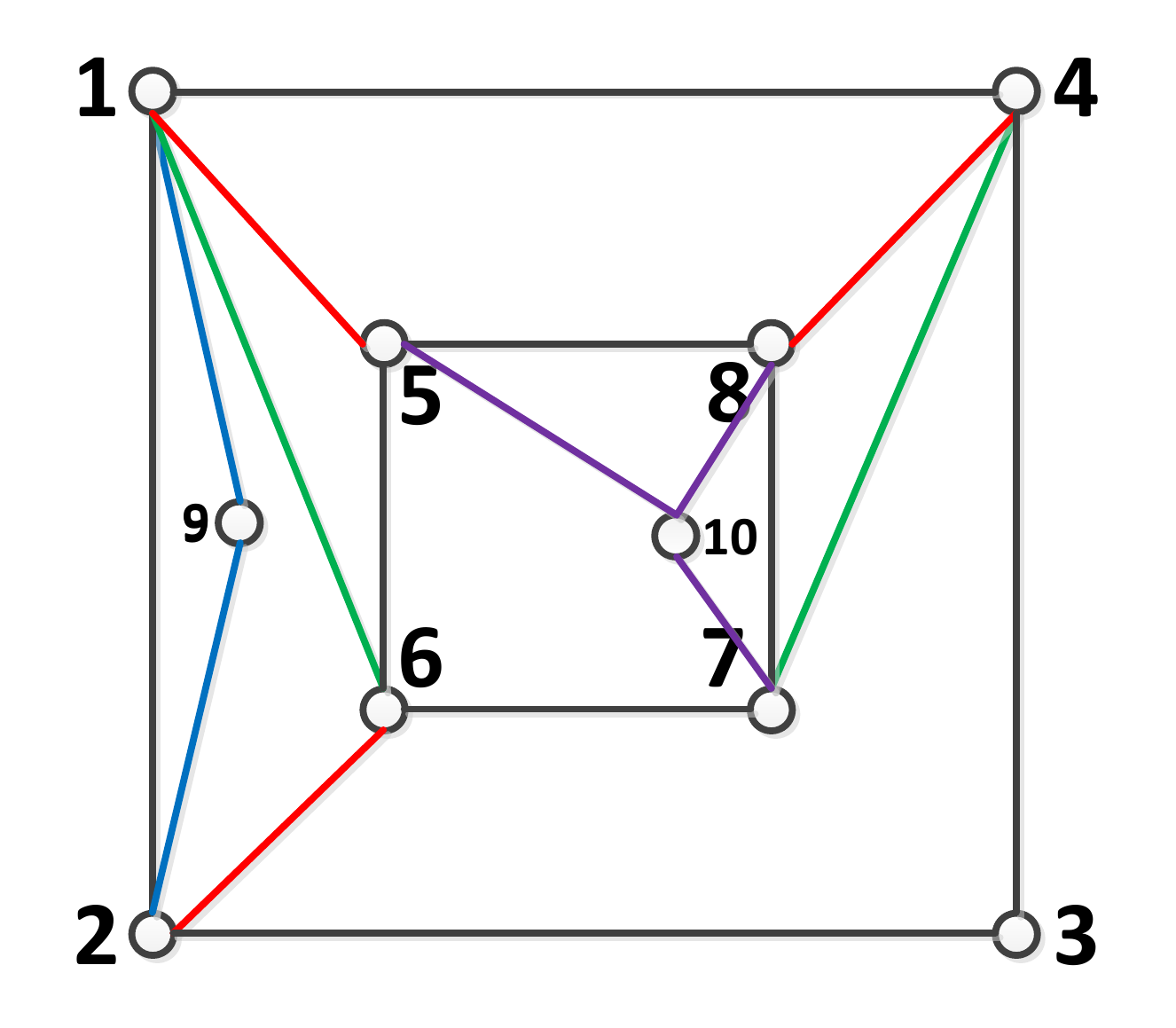}}
\subfigure[]{\includegraphics[width=2.1cm]{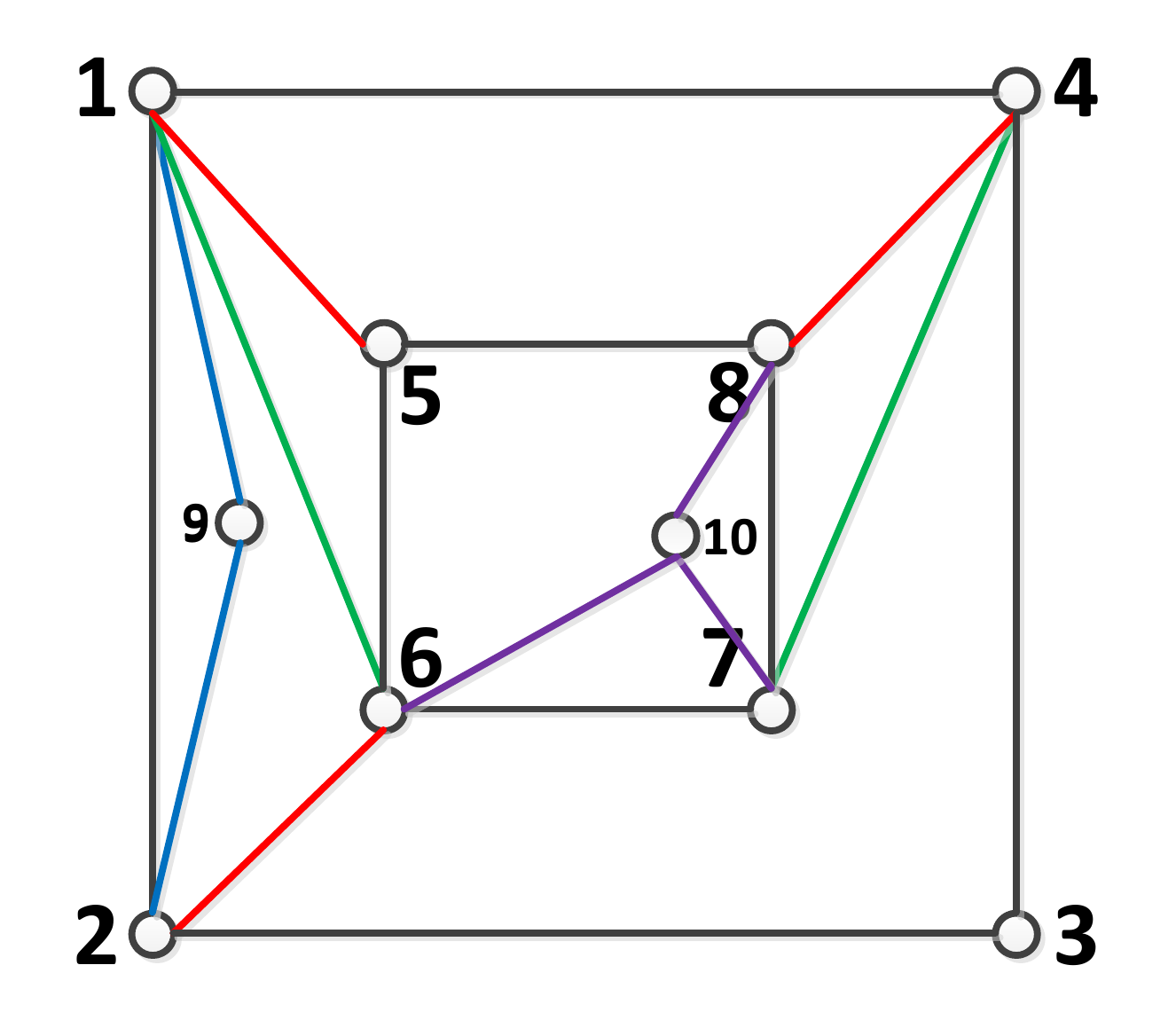}}
\subfigure[]{\includegraphics[width=2.1cm]{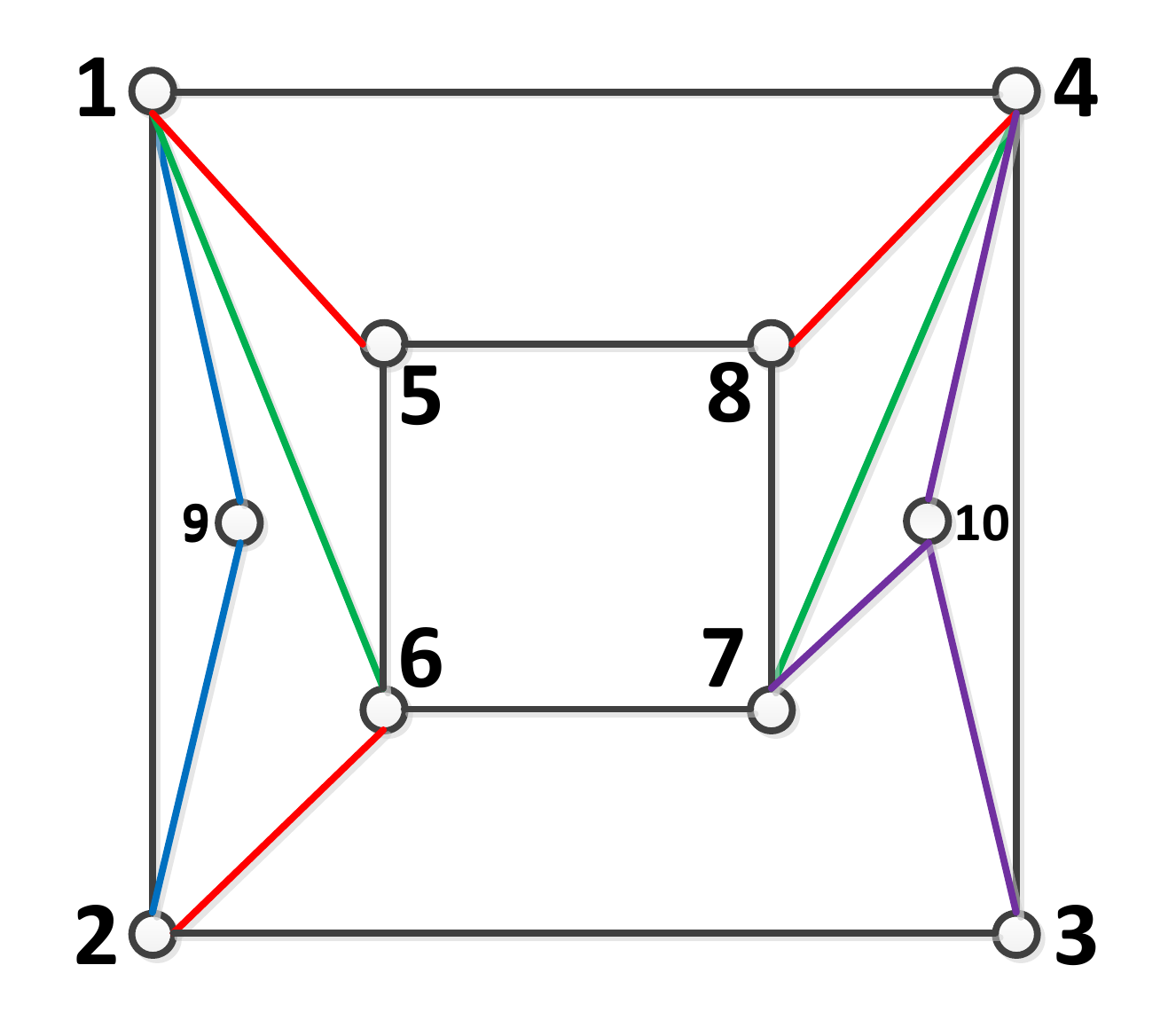}}
\subfigure[]{\includegraphics[width=2.1cm]{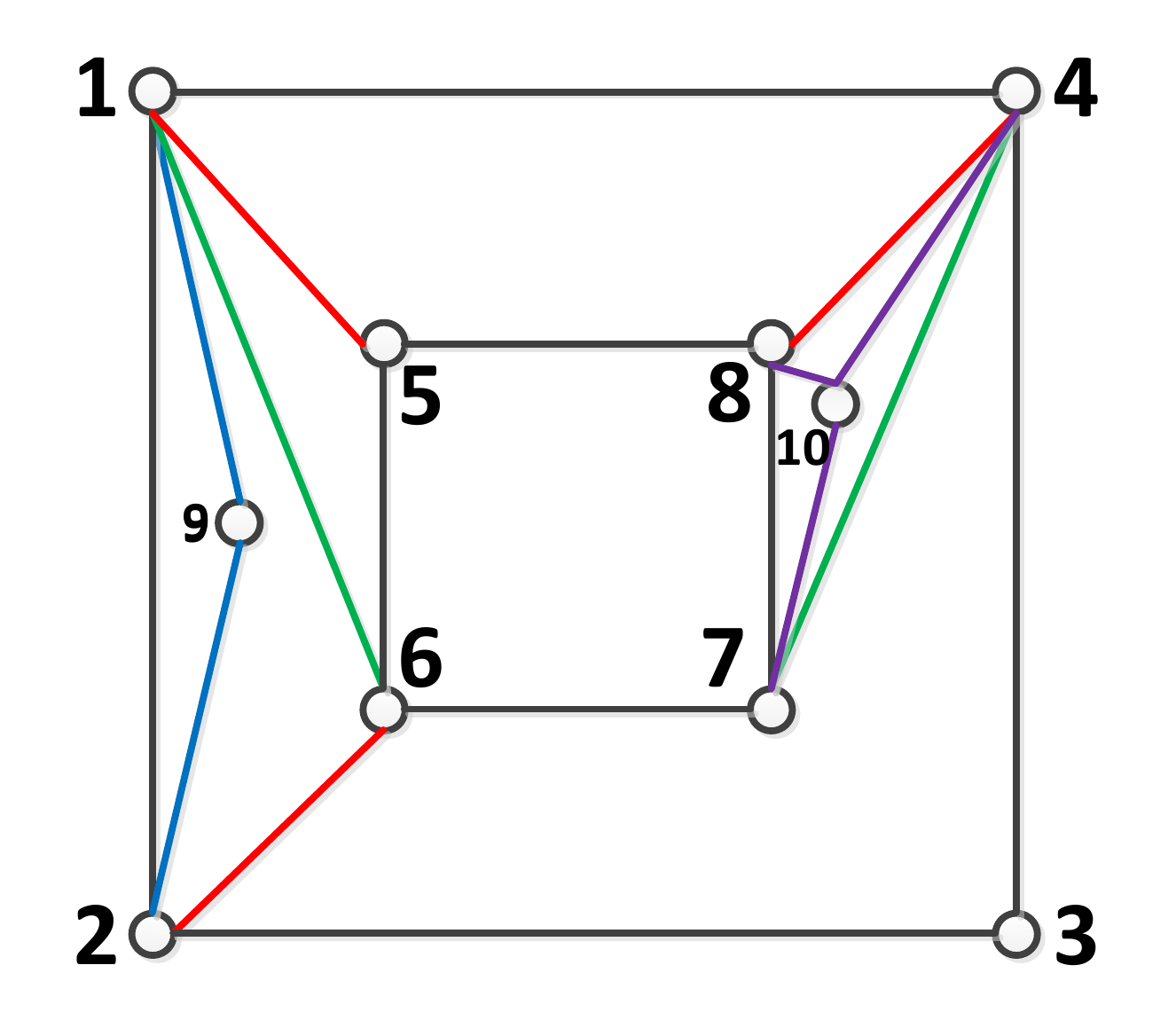}}\qquad
\subfigure[]{\includegraphics[width=2.1cm]{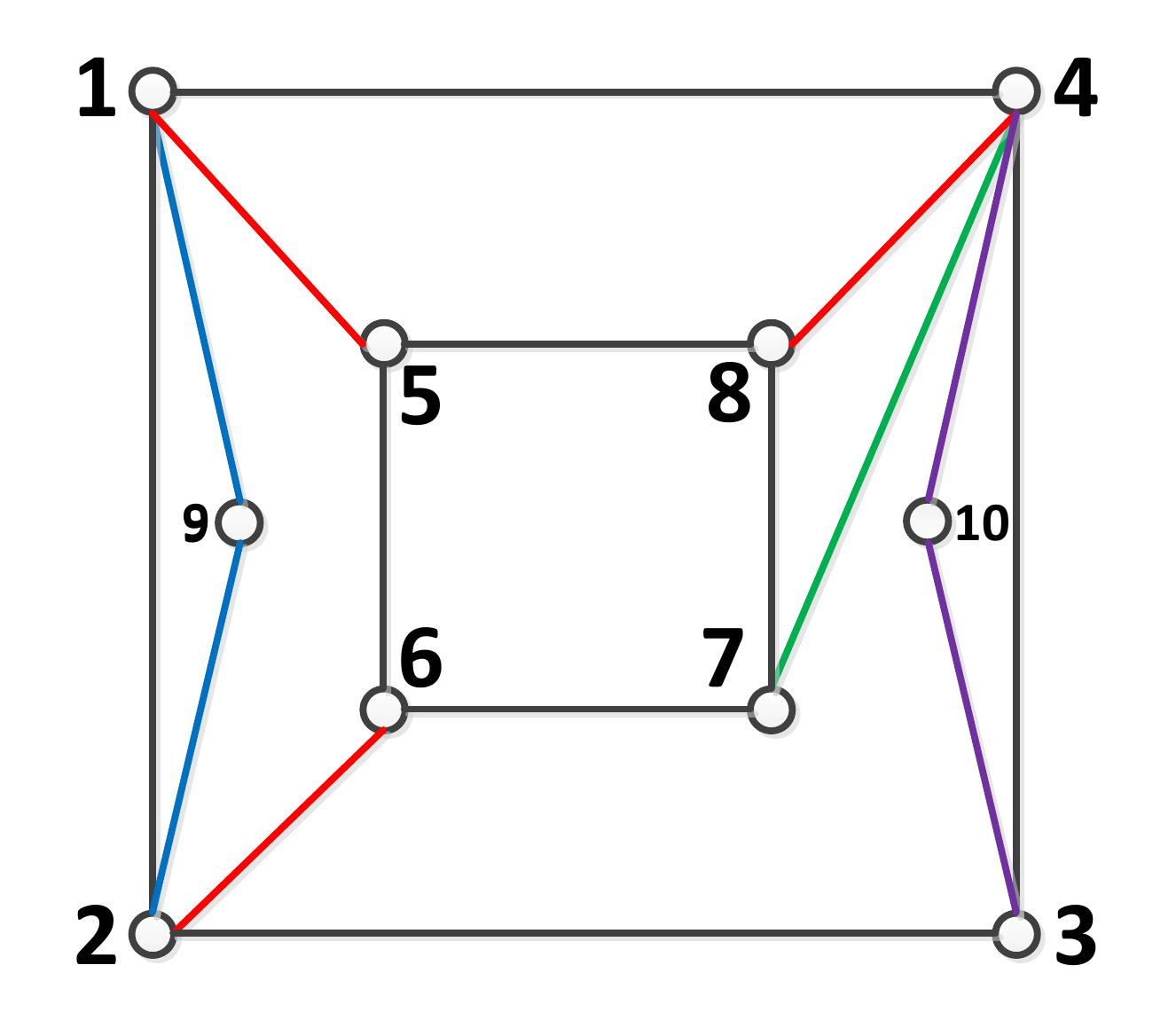}}
\subfigure[]{\includegraphics[width=2.1cm]{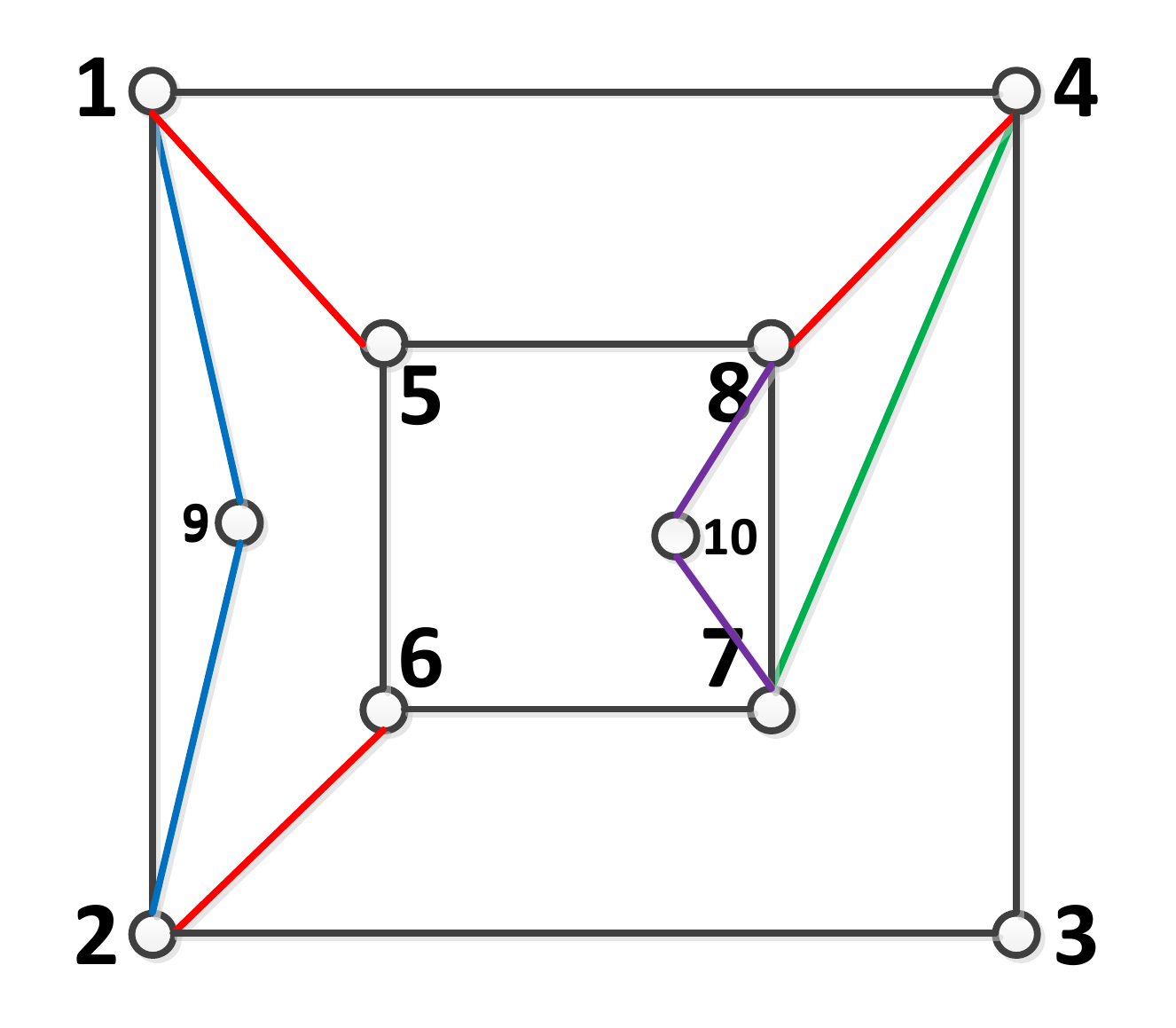}}
\subfigure[]{\includegraphics[width=2.1cm]{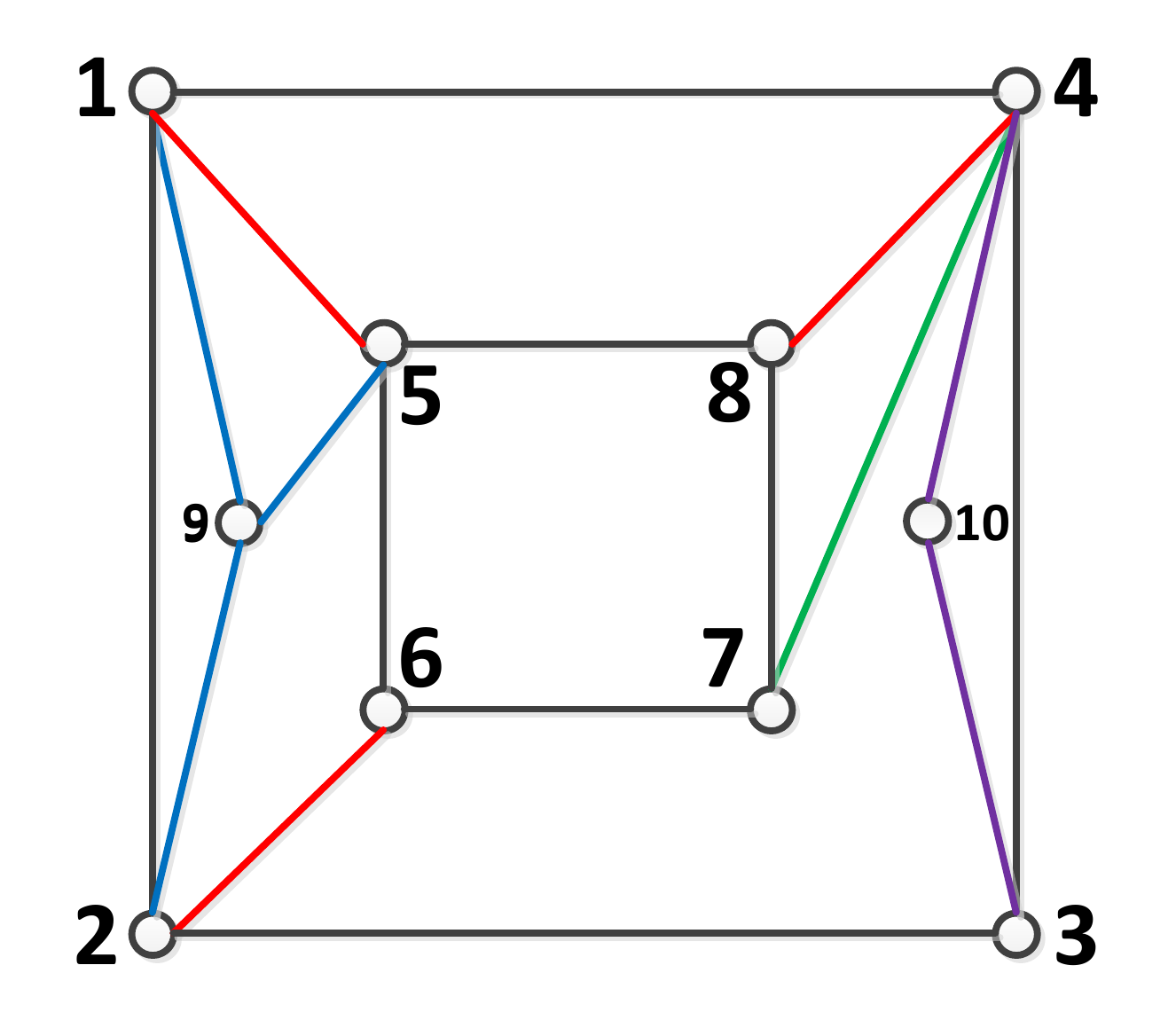}}
\subfigure[]{\includegraphics[width=2.1cm]{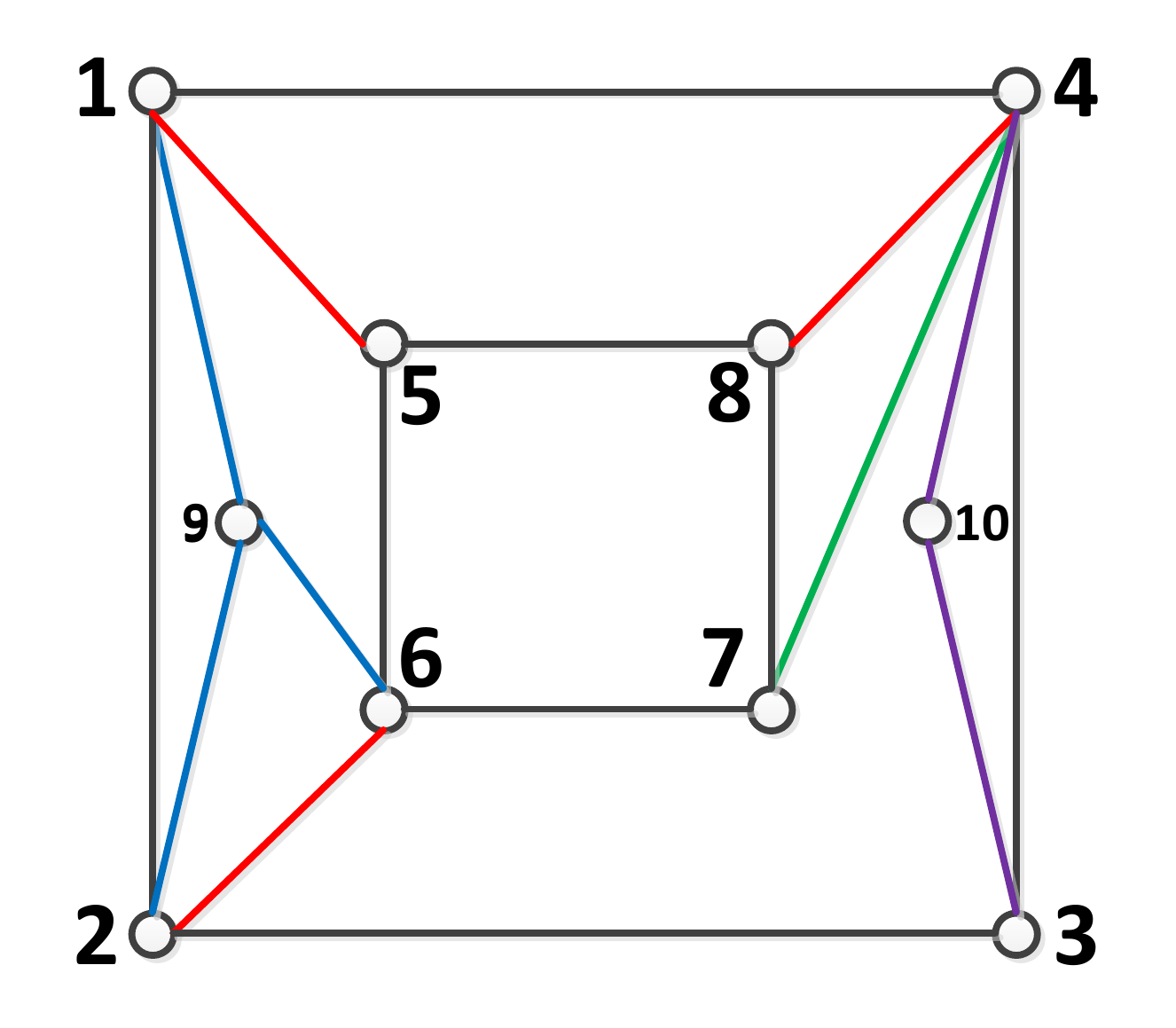}}\\
\subfigure[]{\includegraphics[width=2.1cm]{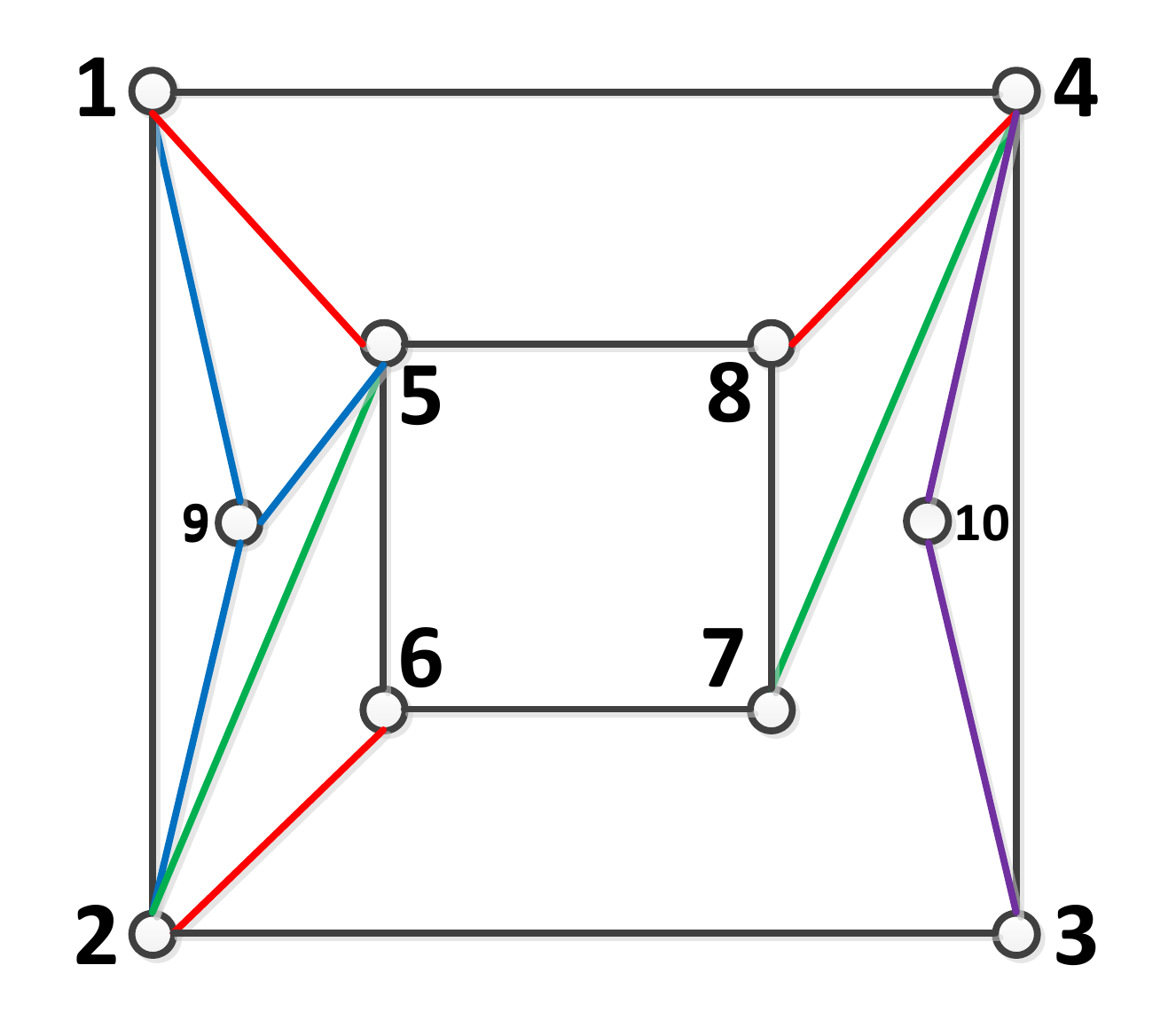}}
\subfigure[]{\includegraphics[width=2.1cm]{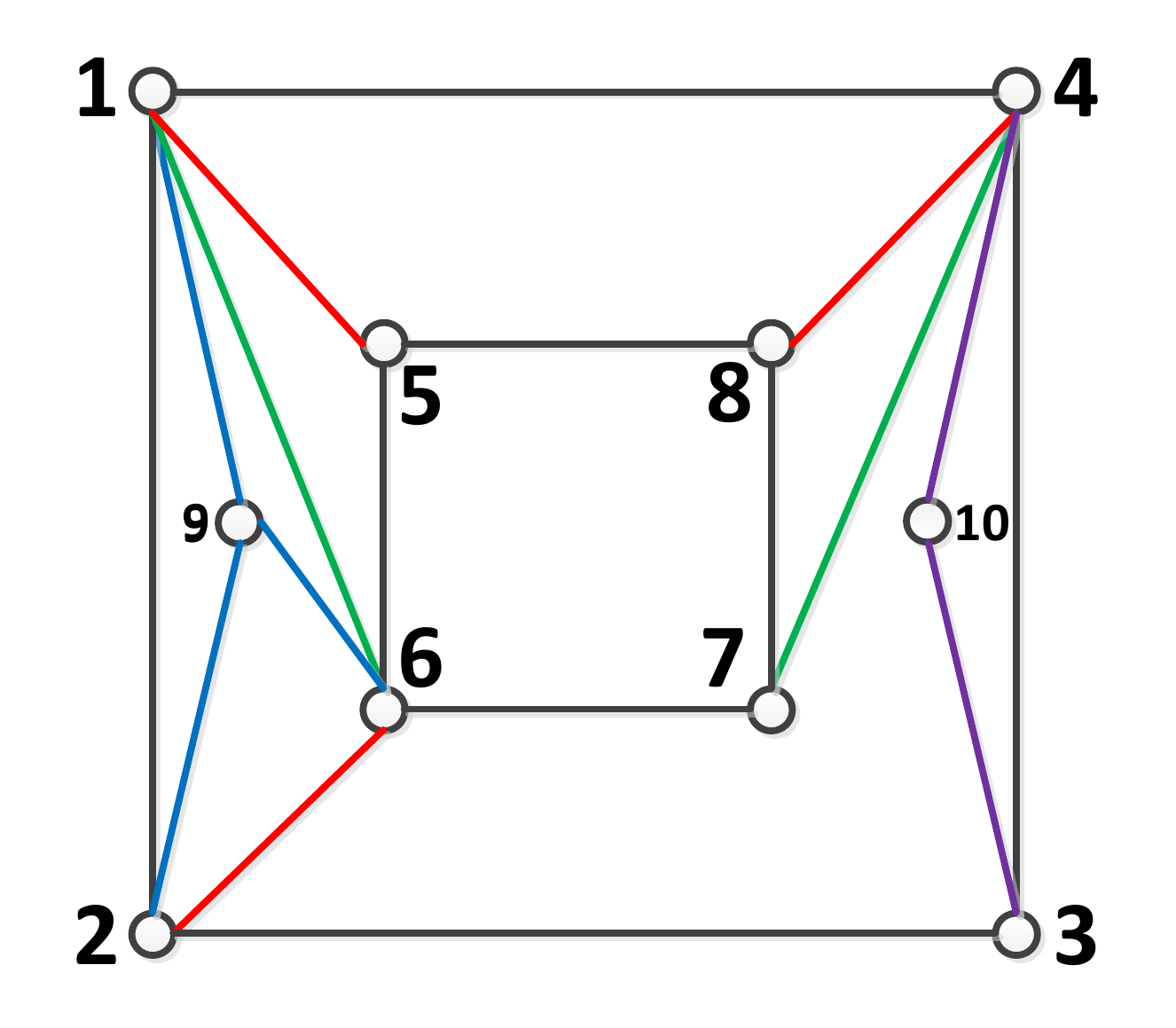}}
\caption{(a) to (n) are the topology structures corresponding to, respectively, each of the figures in Fig. \ref{Step7-1con}.}\label{Step7-1conN}
\end{center}
\end{figure}

The following theorem indicates that a graph designed by Step 7 can be a perfectly controllable graph with big probability.

\begin{theorem}
All the graphs in Fig. \ref{Step7-1conN} except the one (j) are perfectly controllable.
\end{theorem}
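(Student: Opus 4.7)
The plan is to invoke Theorem \ref{The1Alge} for each of the thirteen graphs in Fig. \ref{Step7-1conN} other than (j), exactly mirroring the argument used in the proof of Theorem \ref{Theo2topo}. For each such graph I would first read off the Laplacian matrix $L$ from the topology structure (a symmetric integer matrix of order $9$ for graphs without the added node $10$ and of order $10$ for those with it), then compute numerically both the spectrum of $L$ and an orthonormal eigenbasis. The two items to verify are: (i) that the eigenvalues are pairwise distinct, and (ii) that every eigenvector has no zero entry. Since $L$ is symmetric, distinct eigenvalues force one-dimensional eigenspaces, so in case (i) holds the eigenvector in (ii) is unique up to scalar and the ``no zero entry'' condition is unambiguous. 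Once both conditions are confirmed, perfect controllability follows directly from Theorem \ref{The1Alge}, which is the same line used for graph (a) of Fig. \ref{Step4-2conN} in the proof of Theorem \ref{Theo2topo}.

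For graph (j) of Fig. \ref{Step7-1conN}, the claim is precisely that this procedure fails, so it suffices to exhibit a single obstruction. I would again compute the spectrum of its Laplacian and expect to find either a repeated eigenvalue or, more plausibly from the construction, a simple eigenvalue whose associated eigenvector has at least one zero entry. Producing the offending eigenvector (or the repeated eigenvalue with two linearly independent eigenvectors, in which case Theorem \ref{The1Alge} already gives the conclusion via the linear combination argument in its necessity part) witnesses a choice of leaders under which the system is uncontrollable by Lemma \ref{prelemm}, thus ruling perfect controllability out.

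Because all of the work is numerical verification on explicit small matrices, no conceptual new lemma is required; the body of the proof can be organized as a single representative calculation (say, for graph (a) of Fig. \ref{Step7-1conN}, listing the nine or ten eigenvalues and noting that the corresponding eigenvectors are componentwise nonzero) followed by the remark that the remaining twelve graphs are handled identically, together with the explicit counterexample data for graph (j). The main obstacle is purely bookkeeping: each of the thirteen graphs has a different adjacency pattern arising from a different combination of edges added in Steps 5--7, so the Laplacians must be written down carefully to avoid transcription errors, and the ``no zero entry'' check must be done componentwise across all eigenvectors rather than spot-checked, since a single vanishing component would already break perfect controllability for some leader selection.
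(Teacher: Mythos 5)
Your proposal is correct and follows essentially the same route as the paper: the paper proves these topology theorems by direct numerical computation of the Laplacian spectrum and eigenvectors, checking that the eigenvalues are distinct and that no eigenvector has a zero entry, and then invoking Theorem \ref{The1Alge}, exactly as you describe. Your additional remark about exhibiting the explicit obstruction for graph (j) goes slightly beyond what the theorem literally asserts, but is consistent with the paper's method.
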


\begin{remark}
The above efforts are aimed at obtaining a large number of perfectly controllable graphs by following the rules proposed in the design procedure rather than proposing a construction approach that applies to all situations. By adding nodes and edges, countless graphs can be constructed from the original topology (b) of Fig.\ref{Step1con}, and accordingly the number of perfectly controllable graphs is also infinite. It is not realistic to construct all perfectly controllable graphs just by one design method.
\end{remark}

\section{Conclusions}
Different from the usual manifestation of the reported results, the rules obeyed in the proposed design procedure are the main contribution of the note. The rules were stated in the design steps, not in the form of a proposition. Starting from a simple topology structure, lots of perfectly controllable graphs were constructed by following the proposed design procedure. From our own point of view, how to obtain the desired topology is an important part of understanding the graph theoretical meaning of multi-agent controllability, not just the topology structure itself.

Besides, the concept of perfectly controllable graph has been proposed for the first time in the note, which tells us that there are controllable interconnection graphs no matter how the number and locations of leader agents are selected. A necessary and sufficient algebraic condition was derived for the perfect controllability. Many perfectly controllable graphs are constructed to verify the reliability of the results.




\end{document}